\definecolor{citegreen}{HTML}{208054}
\definecolor{citeblue}{HTML}{0055cc}
\newtheorem{theorem}{Theorem}
\newtheorem{proposition}[theorem]{Proposition}
\newtheorem{lemma}[theorem]{Lemma}
\newtheorem{fact}[theorem]{Fact}
\newtheorem{remark}[theorem]{Remark}
\newtheorem{definition}[theorem]{Definition}
\newtheorem{corollary}[theorem]{Corollary}
\newcommand{\defeq}{\coloneqq}
\DeclarePairedDelimiter\norm{\lVert}{\rVert}
\DeclarePairedDelimiter\abs{\lvert}{\rvert}
\DeclarePairedDelimiter\parens{\lparen}{\rparen}
\DeclarePairedDelimiter\braces{\lbrace}{\rbrace}
\newcommand{\Par}[1]{\parens*{#1}}
\newcommand{\Brace}[1]{\braces*{#1}}
\newcommand{\Abs}[1]{\abs*{#1}}
\newcommand{\norms}[1]{\norm{#1}}
\newcommand{\inprod}[2]{\left\langle#1, #2\right\rangle}
\newcommand{\eps}{\varepsilon}
\newcommand{\R}{\mathbb{R}}
\newcommand{\C}{\mathbb{C}}
\newcommand{\Z}{\mathbb{Z}}
\newcommand{\xset}{\mathcal{X}}
\newcommand{\yset}{\mathcal{Y}}
\newcommand{\ma}{\mathbf{A}}
\DeclareMathOperator{\qsp}{\mathbf{QSP}}
\DeclareMathOperator{\erf}{erf}
\DeclareMathOperator{\sgn}{sgn}
\DeclareMathOperator{\poly}{poly}
\definecolor{burntorange}{rgb}{0.8, 0.33, 0.0}
\definecolor{ewin}{rgb}{0.1, 0.3, 0.65}
\newcommand{\hla}[1]{{\textcolor{olive5}{#1}}}
\newcommand{\hlb}[1]{{\textcolor{azure5}{#1}}}
\newcommand{\hlc}[1]{{\textcolor{brown5}{#1}}}
\newcommand{\vsina}{{\textcolor{olive5}{\mathcal{X}_0}}}
\newcommand{\vsinb}{{\textcolor{azure5}{\mathcal{X}_C}}}
\newcommand{\vsinc}{{\textcolor{brown5}{\mathcal{X}_1}}}
\newcommand{\vsouta}{{\textcolor{olive5}{\mathcal{Y}_0}}}
\newcommand{\vsoutb}{{\textcolor{azure5}{\mathcal{Y}_C}}}
\newcommand{\vsoutc}{{\textcolor{brown5}{\mathcal{Y}_1}}}
\newcommand{\ii}{{\imath}}
\newcommand{\mmu}{\mathbf{U}}
\newcommand{\mh}{\mathbf{H}}
\newcommand{\mpi}{\boldsymbol{\Pi}}
\newcommand{\msigma}{\boldsymbol{\sigma}}
\newcommand{\mv}{\mathbf{V}}
\newcommand{\mw}{\mathbf{W}}
\newcommand{\mx}{\mathbf{X}}
\newcommand{\my}{\mathbf{Y}}
\newcommand{\plef}{{\textup{L}}}
\newcommand{\prig}{{\textup{R}}}
\newcommand{\sv}{{\textup{(SV)}}}
\newcommand{\diag}[1]{\textbf{\textup{diag}}\left(#1\right)}
\newcommand{\half}{\frac{1}{2}}
\newcommand{\thalf}{\tfrac{1}{2}}
\newcommand{\id}{\mathbf{I}}
\newcommand{\mb}{\mathbf{B}}
\newcommand{\md}{\mathbf{D}}
\newcommand{\ms}{\mathbf{S}}
\newcommand{\mc}{\mathbf{C}}
\newcommand{\mzero}{\bm{0}}
\newcommand{\Span}{\textup{Span}}
\newcommand{\Image}{\textup{Image}}
\newcommand{\mr}{\mathbf{R}}
\newcommand{\mqu}{\bm{?}}
\newcommand{\mbl}{\mb_{\textup{L}}}
\newcommand{\mblo}{\mb_{\textup{L}, 1}}
\newcommand{\mblt}{\mb_{\textup{L}, 2}}
\newcommand{\mbr}{\mb_{\textup{R}}}
\newcommand{\mbro}{\mb_{\textup{R}, 1}}
\newcommand{\mbrt}{\mb_{\textup{R}, 2}}
\newcommand{\mproj}{\boldsymbol{\Pi}}
\newcommand{\mprojl}{\mproj_{\textup{L}}}
\newcommand{\mprojr}{\mproj_{\textup{R}}}
\newcommand{\bmb}{\overline{\mb}}
\newcommand{\bmbl}{\bmb_{\textup{L}}}
\newcommand{\bmblo}{\bmb_{\textup{L}, 1}}
\newcommand{\bmbr}{\bmb_{\textup{R}}}
\newcommand{\bmbro}{\bmb_{\textup{R}, 1}}
\newcommand{\bmproj}{\overline{\boldsymbol{\Pi}}}
\newcommand{\bmprojl}{\bmproj_{\textup{L}}}
\newcommand{\bmprojr}{\bmproj_{\textup{R}}}
\newcommand{\bmmu}{\overline{\mmu}}
\newcommand{\calS}{\mathcal{S}}
\newcommand{\dd}{\mathrm{d}}
\newcommand{\tp}{\widetilde{p}}
\newcommand{\trho}{\widetilde{\rho}}
\newcommand{\hp}{\widehat{p}}
\newcommand{\tdelta}{\widetilde{\delta}}
\newcommand{\barcsin}{\overline{\arcsin}}
\title{A CS guide to the quantum singular value transformation}
\author{Ewin Tang\thanks{University of Washington, \texttt{ewint@cs.washington.edu}. } \and Kevin Tian\thanks{Microsoft Research, \texttt{tiankevin@microsoft.com}}}
\date{}
\begin{document}

\maketitle

\begin{abstract}
    We present a simplified exposition of some pieces of \cite{gslw18}, which introduced a quantum singular value transformation (QSVT) framework for applying polynomial functions to block-encoded matrices. The QSVT framework has garnered substantial recent interest from the quantum algorithms community, as it was demonstrated by \cite{gslw18} to encapsulate many existing algorithms naturally phrased as an application of a matrix function. 
    First, we posit that the lifting of quantum singular processing (QSP) to QSVT is better viewed not through Jordan's lemma \cite{jordan75, reg06} (as was suggested by \cite{gslw18}) but as an application of the \emph{cosine-sine decomposition}, which can be thought of as a more explicit and stronger version of Jordan's lemma.
    Second, we demonstrate that the constructions of bounded polynomial approximations given in \cite{gslw18}, which use a variety of ad hoc approaches drawing from Fourier analysis, Chebyshev series, and Taylor series, can be unified under the framework of \emph{truncation of Chebyshev series}, and indeed, can in large part be matched via a bounded variant of a standard meta-theorem from \cite{trefethen19}. We hope this work finds use to the community as a companion guide for understanding and applying the powerful framework of \cite{gslw18}.
\end{abstract}


\tableofcontents
\newpage

\section{Introduction}\label{sec:intro}

We present a ``user-friendly guide'' to understanding technical aspects of the \emph{quantum singular value transformation} (QSVT), an elegant framework for designing quantum algorithms, particularly those that can be phrased as a \emph{linear algebraic task} on a quantum state $\ket{\psi}$, viewed as a vector of amplitudes $\sum_{k=1}^d \psi_k \ket{k}$.
This includes Hamiltonian simulation~\cite{LowC17}, i.e.\ preparing $e^{\ii t\mh}\ket{\psi}$ for a Hamiltonian $\mh$; quantum linear system solving~\cite{hhl09}, i.e.\ preparing $\ma^{-1}\ket{\psi}$ for a sparse matrix $\ma$; and quantum random walks \cite{Szegedy04}, i.e.\ approximating large powers of a Markov chain transition matrix or discriminating its singular values.
QSVT was introduced by \cite{lc16} in the Hermitian case and then generalized and subsequently popularized by a paper of Gily\'{e}n, Su, Low, and Wiebe~\cite{gslw18} which demonstrates that these important, seemingly disparate quantum algorithms can be seen as consequences of a single unifying primitive. 

Our aim is to expose the beauty of~\cite{gslw18} and make it more accessible to an audience with a background in computer science, by simplifying or providing alternatives to its more mathematically dense proofs. This goal may be viewed as complementary to prior expositions of \cite{gslw18} such as \cite{MartynRTC21}, which focused on describing applications of QSVT to the design of quantum algorithms. In contrast, the aim of our work is to directly provide streamlined proofs or alternatives to the main technical results in \cite{gslw18}. 

\subsection{Our contributions}

\paragraph{QSVT via the CS decomposition.} In Section~\ref{sec:csd}, we give an alternate exposition of the \emph{qubitization} technique given in \cite[Section 3.2]{gslw18}. This technique lifts \emph{quantum signal processing}, a product decomposition for computing bounded scalar polynomials, to QSVT, its matrix counterpart. In particular, QSVT implements quantum signal processing separately on each of the singular values of a ``block encoded matrix'' by mapping them through a polynomial transformation, while preserving the block encoding structure. Our exposition of QSVT is by way of the \emph{Cosine-Sine} decomposition, a strengthening of Jordan's lemma \cite{jordan75} that is more amenable to the computations in the proof of QSVT's correctness. We believe our proof strategy simplifies the exposition of QSVT in Section 3.2 of \cite{gslw18} (and other related expositions, e.g.\ Chapters 7 and 8 of \cite{Lin22}). Specifically, the viewpoint we adopt elucidates the action of QSVT on the block structure of encoded matrices, removing much of the casework and eigenspace-by-eigenspace reasoning of prior expositions.

\paragraph{Bounded approximations via truncated Chebyshev series.} In Section~\ref{sec:cheb}, we apply the technique of truncating \emph{Chebyshev Series} to match or nearly-match all of the polynomial approximation results needed throughout \cite[Section 5]{gslw18} through an arguably simpler framework.
Our starting point is a classical theorem of Trefethen (\cref{thm:trefethen}) which bounds the error incurred by Chebyshev truncation for smooth functions. We derive, as a consequence of Trefethen's result, a new ``bounded Chebyshev truncation'' analog (\cref{thm:main_bounded}) which applies to piecewise-smooth functions, and is compatible with the QSVT framework. Unlike its analog in the original work \cite[Corollary 66]{gslw18}, our \cref{thm:main_bounded} does not use Taylor series or Fourier series in its proof: the only approximation theory tools used are standard properties of Chebyshev polynomials. Our result is comparable to \cite[Corollary 66]{gslw18}; as discussed in Remark~\ref{rem:bounded_comments}, it loses a logarithmic factor in some regimes, but uses a weaker assumption on the function to be approximated. We further provide a user's guide on how to apply \cref{thm:main_bounded} to derive bounded approximations to functions of interest in applications.

In Section~\ref{ssec:separation}, we give a new separation result lower bounding the degree of polynomials approximating the exponential function $\exp$, under a boundedness requirement. Boundedness is crucial in QSVT applications (see Remark~\ref{rmk:qsvt-to-alg}) due to the spectra of quantum gates. Notably, bounded approximations to $\exp$ have found use in designing Gibbs sampling techniques for quantum optimization \cite{vanApeldoornG19, BoulandGJSWT23}. We show that in parameter regimes of interest for these applications, a quadratically larger degree is required to achieve a bounded approximating polynomial. While lower bounds on QSVT have previously been demonstrated \cite[Section 6]{gslw18}), our separation result is purely an approximation theory statement (independent of its use in QSVT), and its (simple) proof uses different techniques than \cite{gslw18}. We hope this result sheds light on when one can hope to obtain low-degree approximations for QSVT.

\subsection{Notation}

Matrices have bolded variable names, and $\id$ is the identity matrix with dimension specified by context.
For a matrix $\mmu$, $\mmu^\dagger$ denotes its conjugate transpose. 
A square matrix $\mmu$ is \emph{unitary} if $\mmu^\dagger \mmu = \mmu \mmu^\dagger = \id$, and we call it ``partitioned'' if it has a block matrix structure with two row blocks and two column blocks (which is clear from context).
When writing a matrix as blocks, an empty block denotes a zero block, and $\cdot$ denotes that the block contains arbitrary entries.
For brevity, we omit the dimensions of blocks when these sizes are not important to the computation: all block matrices occurring in products are compatible in the standard way.

The ``computational basis'' is the standard basis in $\C^d$.
We denote $[d] \defeq \{1,2,\ldots,d\}$, $\ii \defeq \sqrt{-1}$, and $\msigma_z$ is the Pauli matrix $(\begin{smallmatrix}
1 & 0 \\ 0 & -1
\end{smallmatrix})$. 
The maximum absolute value of a real function $f$ on $[a, b]$ is denoted $\norm{f}_{[a, b]}$. We write $a \eqsim b$ to mean there are universal constants $0 < C_1 \le C_2$ with $C_1 a \le b \le C_2 a$.
\section{QSVT and the Cosine-Sine decomposition}\label{sec:csd}

\begin{quotation}
    \textit{Briefly, whenever some aspect of a problem can be usefully formulated in terms of two-block by two-block partitions of unitary matrices, the CS decomposition will probably add insights and simplify the analysis.}
    \hspace{1em plus 1fill}---Paige and Wei, \cite{pw94}
\end{quotation}

\subsection{Existence of the CS decomposition} \label{ssec:csd}

We begin by proving the existence of the CS decomposition (CSD), a decomposition of a partitioned unitary matrix, following Paige and Wei~\cite{pw94}. We describe its application to the quantum singular value transformation (QSVT) in the following Sections~\ref{sec:qsvt},~\ref{ssec:basic_case}, and~\ref{ssec:gen_case}, wherein we give an alternate proof of the main QSVT result in \cite{gslw18} stated as Theorem~\ref{thm:qsp-to-qsvt}.

The main idea of the CSD is that when a unitary matrix $\mmu$ is split into two-by-two blocks $\mmu_{ij}$ for $i, j \in \{1, 2\}$, one can produce ``simultaneous singular value decompositions (SVDs)'' of the blocks, of the form $\mmu_{ij} = \mv_i \md_{ij} \mw_j^\dagger$.\footnote{In fact, there is some sense in which the SVD and the CSD are special cases of the same object, a \emph{generalized Cartan decomposition}. We recommend the survey by Edelman and Jeong for readers curious about this connection~\cite{ej21}.}
If the reader cares just about the application to QSVT, they can read Theorem~\ref{thm:cs} and skip to \cref{sec:qsvt}. 

For additional intuition on the CSD, we refer the reader to Appendix~\ref{app:csd_interpret}, in which we derive principal angles and Jordan's lemma as consequences.

\begin{theorem}\label{thm:cs}
Let $\mmu \in \C^{d \times d}$ be a unitary matrix, partitioned into blocks of size $\{r_1, r_2\} \times \{c_1, c_2\}$:
\[\mmu = \begin{pmatrix} \mmu_{11} & \mmu_{12} \\ \mmu_{21} & \mmu_{22}\end{pmatrix},\text{ where } \mmu_{ij} \in \C^{r_i \times c_j} \text{ for } i,j\in\{1,2\}.\]
Then, there exists unitary $\mv_i \in \C^{r_i \times r_i}$ and $\mw_j \in \C^{c_j \times c_j}$ for $i, j \in \{1,2\}$ such that
\[
    \begin{pmatrix} \mmu_{11} & \mmu_{12} \\ \mmu_{21} & \mmu_{22} \end{pmatrix}
    = \begin{pmatrix} \mv_1 & \\ & \mv_2 \end{pmatrix}
    \begin{pmatrix} \md_{11} & \md_{12} \\ \md_{21} & \md_{22} \end{pmatrix}
    \begin{pmatrix} \mw_1 & \\ & \mw_2 \end{pmatrix}^\dagger,
\]
where blanks represent zero matrices and $\md_{ij} \in \R^{r_i \times c_j}$ are diagonal matrices, possibly padded with zero rows or columns.
Specifically, we can write 
\begin{equation} \label{eq:d-form}
    \md \defeq \begin{pmatrix} \md_{11} & \md_{12} \\ \md_{21} & \md_{22} \end{pmatrix}
    =
    \left(\begin{array}{@{}c|c@{}}
    \begin{matrix}
    \mzero & & \\
    & \mc & \\
    & & \id
    \end{matrix} &
    \begin{matrix}
    \id & & \\
    & \hphantom{-}\ms & \\
    & & \hphantom{-}\mzero
    \end{matrix} \\
    \hline
    \begin{matrix}
    \id & & \\
    & \ms & \\
    & & \mzero
    \end{matrix} &
    \begin{matrix}
    \mzero & & \\
    & -\mc & \\
    & & -\id
    \end{matrix}
    \end{array}
    \right)
\end{equation}
where $\id$, $\mc$, and $\ms$ blocks are square diagonal matrices where $\mc$ and $\ms$ have entries in $(0,1)$ on the diagonal, and $\mzero$ blocks may be rectangular.\footnote{Blocks may be non-existent. The $\id$ blocks may not necessarily be the same size, but $\mc$ and $\ms$ are the same size.}
Because $\md$ is unitary, we also have $\mc^2 + \ms^2 = \id$.
\end{theorem}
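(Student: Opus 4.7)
My plan is to follow Paige and Wei \cite{pw94}: derive the form of $\md$ by starting from an SVD of $\mmu_{11}$, then read off the structure of the remaining blocks using unitarity of $\mmu$. First I compute an SVD $\mmu_{11} = \mv_1 \md_{11} \mw_1^\dagger$, ordering singular values so that $\md_{11}$ has the block-diagonal form $\diag{\mzero, \mc, \id}$ along the diagonal; this grouping is valid since $\norm{\mmu_{11}} \le 1$ by unitarity of $\mmu$. After conjugating $\mmu$ by $\diag{\mv_1^\dagger, \id}$ on the left and $\diag{\mw_1, \id}$ on the right (which are absorbed into the final block-diagonal factors), I reduce to the case $\mmu_{11} = \md_{11}$. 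The identity $\mmu_{21}^\dagger \mmu_{21} = \id - \md_{11}^\dagger \md_{11} = \diag{\id, \ms^2, \mzero}$ then forces the columns of $\mmu_{21}$ into three mutually orthogonal groups matching the block sizes of $\md_{11}$: orthonormal columns, columns with norms $s_i$, and zero columns. Normalizing the middle group and extending to a unitary basis of $\C^{r_2}$ produces a $\mv_2$ for which $\mv_2^\dagger \mmu_{21}$ takes the target form in \eqref{eq:d-form}. A symmetric argument based on $\mmu \mmu^\dagger = \id$ produces $\mw_2$ for which $\mmu_{12} \mw_2$ matches its target form.

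It remains to show $\mv_2^\dagger \mmu_{22} \mw_2$ attains the target form $\diag{\mzero, -\mc, -\id}$. Writing $\mmu'$ for the fully-transformed matrix, the cross-constraint $\md_{11}^\dagger \mmu_{12}' + (\mmu_{21}')^\dagger \mmu_{22}' = \mzero$ applied block by block forces all of $\mmu_{22}'$ to vanish except a ``leftover'' bottom-right block indexed by the completion columns of $\mv_2, \mw_2$, and forces the central block to equal $-\mc$; this is the source of the sign in \eqref{eq:d-form}. The complementary relation $(\mmu_{22}')^\dagger \mmu_{22}' + (\mmu_{12}')^\dagger \mmu_{12}' = \id$ then shows this leftover block is unitary, and it is square by dimension counting from $r_1 + r_2 = c_1 + c_2$.

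The main obstacle is that this leftover unitary block need not already equal $-\id$. I would resolve this by exploiting the freedom remaining in the completion columns of $\mv_2$ and $\mw_2$: further block-diagonal unitary right-multiplications supported on those indices induce a transformation of the form $\mathbf{P}^\dagger (\cdot) \mathbf{Q}$ on the leftover block while leaving all other blocks of $\md$ unchanged. Since the leftover block is unitary, a suitable such choice (e.g., from its SVD, whose singular values all equal $1$) rotates it to $-\id$. A secondary technical point is careful bookkeeping of the possibly rectangular $\mzero$ blocks in \eqref{eq:d-form}, which arise when $r_1 \ne c_1$.
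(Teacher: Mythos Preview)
Your proposal is correct and follows the same high-level strategy as the paper (start from an SVD of $\mmu_{11}$, then pin down the other three blocks using unitarity, and finally rotate away a residual unitary block in $\md_{22}$), but the tactical choice for building $\mv_2$ and $\mw_2$ differs. The paper takes QR decompositions of $\mmu_{21}\mw_1$ and $\mmu_{12}^\dagger\mv_1$, so that $\md_{21}$ and $\md_{12}^\dagger$ are upper triangular with nonnegative diagonal; the block form of $\md_{21}$ then falls out by chasing orthonormality of rows and columns together with triangularity, in a short induction down the diagonal. You instead read off the column structure of $\mmu_{21}$ directly from its Gram matrix $\mmu_{21}^\dagger\mmu_{21}=\id-\md_{11}^\dagger\md_{11}$ and construct $\mv_2$ by hand via normalization and orthonormal completion. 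Both are legitimate; the paper's QR route is a bit slicker because one standard factorization does all the work, whereas your route is more explicit about where each column goes but requires tracking more unitarity relations simultaneously.

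One small imprecision worth tightening: the single off-diagonal relation you cite, $\md_{11}^\dagger\md_{12}+\md_{21}^\dagger\mmu_{22}'=\mzero$, only constrains the rows of $\mmu_{22}'$ lying in the range of $\md_{21}^\dagger$, i.e.\ the first $b_1+k$ rows if the $\mzero$ block of $\md_{11}$ is $a_1\times b_1$. It says nothing about the remaining rows of $\mmu_{22}'$ outside the leftover block. You need either the symmetric relation $\md_{21}\md_{11}^\dagger+\mmu_{22}'\md_{12}^\dagger=\mzero$ coming from $\mmu\mmu^\dagger=\id$, or (equivalently) to extract this from the norm relation $(\mmu_{22}')^\dagger\mmu_{22}'+\md_{12}^\dagger\md_{12}=\id$ you already invoke later, since that relation also forces the sub-$\mc$ entries of the middle columns to vanish. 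The paper handles this step the same way, and your final rotation of the leftover unitary block to $-\id$ via the freedom in the completion columns of $\mw_2$ matches the paper exactly.
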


\begin{remark} \label{rmk:cs-rotation}
    The form of $\md$ naturally induces decompositions $\mathbb{C}^d = \vsina \oplus \vsinb \oplus \vsinc$ and $\mathbb{C}^d = \vsouta \oplus \vsoutb \oplus \vsoutc$ into direct sums of three spaces.
    Hence, $\md: \C^d \to \C^d$ can be seen as a map $\md: \vsina \oplus \vsinb \oplus \vsinc \to \vsouta \oplus \vsoutb \oplus \vsoutc$, such that $\md$ is a direct sum of three linear maps.
\[
    \left(\begin{array}{@{}c|c@{}}
    \begin{matrix}
    \hla{\mzero} & & \\
    & \hlb{\mc} & \\
    & & \hlc{\id}
    \end{matrix} &
    \begin{matrix}
    \hla{\id} & & \\
    & \hphantom{-}\hlb{\ms} & \\
    & & \hphantom{-}\hlc{\mzero}
    \end{matrix} \\
    \hline
    \begin{matrix}
    \hla{\id} & & \\
    & \hlb{\ms} & \\
    & & \hlc{\mzero}
    \end{matrix} &
    \begin{matrix}
    \hla{\mzero} & & \\
    & \hlb{-\mc} & \\
    & & \hlc{-\id}
    \end{matrix}
    \end{array}
    \right)
    = \underbrace{\hla{\begin{pmatrix}
        \mzero & \id \\ \id & \mzero
    \end{pmatrix}}}_{\vsina \to \vsouta}
    \oplus \underbrace{\hlb{\begin{pmatrix}
        \mc & \hphantom{-}\ms \\ \ms & -\mc
    \end{pmatrix}}}_{\vsinb \to \vsoutb}
    \oplus \underbrace{\hlc{\begin{pmatrix}
        \id & \hphantom{-}\mzero \\ \mzero & -\id
    \end{pmatrix}}}_{\vsinc \to \vsoutc}.
\]
    The key resulting intuition for QSVT is that, supposing everything is square, these blocks can be further decomposed into $2\times 2$ blocks of the following rotation matrix form \[\begin{pmatrix} \lambda_i & \sqrt{1-\lambda_i^2} \\ \sqrt{1-\lambda_i^2} & -\lambda_i\end{pmatrix}\]
    from this representation, where $\{\lambda_i\}$ are the singular values of $\mmu_{11}$ (see Lemma~\ref{lem:qsp-to-qsvt}).
\end{remark}

For completeness we now recall how to derive the CS decomposition from other common decompositions (namely, the SVD and QR decompositions), following the proof of \cite{pw94}.\footnote{Our exposition will be slightly different, since Paige and Wei use a QR decomposition with a \emph{lower} triangular matrix, where we use the more standard one with an \emph{upper} triangular matrix.}
\begin{proof}[Proof of Theorem~\ref{thm:cs}]
We begin by considering $\mmu_{11} \in \C^{r_1 \times c_1}$.
Let $\mv_1 \md_{11} \mw_1^\dagger$ be a SVD of $\mmu_{11}$, where $\md_{11} \in \R^{r_1 \times c_1}$ and $\mv_1$ and $\mw_1$ are square unitaries.
Since $\norm{\mmu_{11}}_{\textup{op}} \le \norm{\mmu}_{\textup{op}} = 1$, all its singular values are between zero and one, so we can specify that $\md_{11}$ takes the form
\[
    \md_{11} = \begin{pmatrix}
    \mzero & & \\
    & \mc & \\
    & & \id
    \end{pmatrix}
\]
for a diagonal matrix $\mc$ with $0 < \mc_{ii} < 1$ for all $i$.
Now, take QR decompositions of $\mmu_{21}\mw_1 \in \C^{r_2 \times c_1}$ and $\mmu_{12}^\dagger \mv_1 \in \C^{c_2 \times r_1}$.
These decompositions give unitaries $\mv_2$ and $\mw_2$ such that $\md_{21} \defeq \mv_2^\dagger \mmu_{21} \mw_1$ and $\md_{12}^\dagger \defeq \mw_2^\dagger \mmu_{12}^\dagger \mv_1$ are upper triangular with non-negative entries on the diagonal.
By design, the $\mv_{i}$ and $\mw_j$ we have defined give us the decomposition
\begin{align*}
    \begin{pmatrix} \mv_1 \\ & \mv_2\end{pmatrix}^\dagger
    \begin{pmatrix} \mmu_{11} & \mmu_{12} \\ \mmu_{21} & \mmu_{22} \end{pmatrix}
    \begin{pmatrix} \mw_1 \\ & \mw_2 \end{pmatrix}
    = \underbrace{\begin{pmatrix} \md_{11} & \md_{12} \\ \md_{21} & \mv_2^\dagger \mmu_{22}\mw_2 \end{pmatrix}}_{\md}.
\end{align*}
We claim that the blocks $\md_{21}$ and $\md_{12}$ satisfy the desired form from \eqref{eq:d-form}.
This will (almost) be our final decomposition.
We will only give the argument for $\md_{21}$; the one for $\md_{12}$ is similar.
The columns of $\md$ are orthonormal and $\md_{21}$ is upper triangular with non-negative entries on its diagonal.
So, all of the rightmost blocks of $\md_{21}$ (under the $\id$ from $\md_{11}$) must be zero because of orthonormality, and further, the top-left block of $\md_{21}$ must be $\id$, using upper triangularity and non-negativity of the diagonal (inducting by row, the top-left entry is $1$, forcing its row to be zeroes, and so on down the diagonal).
Since the rows of $\md$ are orthonormal, the $\id$ block in $\md_{21}$ forces the block to its right to be the zero matrix.
Finally, upper triangularity and column orthonormality forces the middle block to be non-negative diagonal with $\ms^2 + \mc^2 = \id$.
The logic is displayed below:
\[\md_{21} =
    \begin{pmatrix}
    \mqu & \mqu & \mqu \\
    & \mqu & \mqu\\
    & & \mqu
    \end{pmatrix} \to
    \begin{pmatrix}
    \mqu & \mqu & \\
    & \mqu & \\
    & & \mzero
    \end{pmatrix} \to
    \begin{pmatrix}
    \id & \mqu & \\
    & \mqu & \\
    & & \mzero
    \end{pmatrix} \to
    \begin{pmatrix}
    \id & & \\
    & \mqu & \\
    & & \mzero
    \end{pmatrix}
    \to
    \begin{pmatrix}
    \id & & \\
    & \ms & \\
    & & \mzero
    \end{pmatrix}.\]
What we have argued so far suffices to show that (recalling $\md$ is unitary)
\begin{equation*}
    \md =
    \left(\begin{array}{@{}c|c@{}}
    \begin{matrix}
    \mzero & & \\
    & \mc & \\
    & & \id
    \end{matrix} &
    \begin{matrix}
    \id & & \\
    & \ms & \hphantom{\mqu_{12}} \\
    & \hphantom{\mqu_{21}} & \mzero
    \end{matrix} \\
    \hline
    \begin{matrix}
    \id & & \\
    & \ms & \\
    & & \mzero
    \end{matrix} &
    \begin{matrix}
    \mzero & & \\
    & \mqu_{11} & \mqu_{12} \\
    & \mqu_{21} & \mqu_{22}
    \end{matrix}
    \end{array}
    \right).
\end{equation*}
For brevity we will only sketch the rest of the argument about the bottom-right block $\md_{22}$. First, $\mqu_{11} = -\mc$ follows from unitarity of the following block of $\md$, which shows $\mc \ms + \ms \mqu_{11} = 0$:
\[\begin{pmatrix} \mc & \ms \\ \ms & \mqu_{11} \end{pmatrix}. \]
The blocks $\mqu_{21}$ and $\mqu_{12}$ must then be zero using unitarity, considering the fifth (block) row and column in $\md$. Finally, $\mqu_{22}$ is unitary and can be rotated to the (negative) identity by changing $\mw_2$:
\[ \mw_2 \leftarrow \begin{pmatrix}
    \id \\ & \id \\ & & -\mqu_{22}^\dagger
\end{pmatrix}\mw_2 . \qedhere \]
\end{proof}

In Appendix~\ref{app:csd_interpret}, we demonstrate that for specific choices of $\mmu$, the form of the CS decomposition reveals interesting properties about the interactions between subspaces. In particular, the diagonals of $\mc$ and $\ms$ can naturally be seen as the cosines and sines of ``principal angles'' between subspaces. For those interested in applications to physics, these cosines and sines correspond to reflection and transmission probabilities in scattering theory, where this decomposition is known as the \emph{polar decomposition}~\cite{mpk88,ml92,beenakker97}.

\subsection{QSVT and quantum signal processing}\label{sec:qsvt}

We now apply the machinery of Section~\ref{ssec:csd} to prove correctness of the QSVT framework of \cite{gslw18}.
We first recall the situation treated by QSVT, requiring the notion of a block encoding.

\begin{definition}[Variant of {\cite[Definition 43]{gslw18}}]\label{def:block_encode}
    Given $\ma \in \C^{r \times c}$, we say unitary $\mmu \in \C^{d \times d}$ is a block encoding of $\ma$ if there are $\mb_{\textup{L},1} \in \C^{d \times r}, \mb_{\textup{R},1} \in \C^{d \times c}$ with orthonormal columns such that $\mb_{\textup{L},1}^\dagger \mmu \mb_{\textup{R},1} = \ma$. We denote $\mprojl = \mblo\mblo^\dagger$, $\mprojr = \mbro\mbro^\dagger$ to be the corresponding projections onto the spans of $\mblo$ and $\mbro$, respectively.
\end{definition}

In other words, if $\mmu$ is a block encoding of $\ma$, then in the right basis, it has $\ma$ as a submatrix.\footnote{
    Methods for preparing block encodings often produce block encodings of $\tfrac 1 \alpha \ma$ for some scaling factor $\alpha$.
    This factor $\alpha$ appears in gate complexities of applications.
    Some authors parametrize this notion, e.g.\ by saying $\mmu$ is a $(\alpha, \eps)$-block encoding if $\mmu$ is a block encoding of some $\tilde{\ma}$ such that $\norms{\tilde{\ma} - \tfrac 1 \alpha \ma} \leq \tfrac \eps \alpha$ in operator norm.
}
That is, if $\mbl = \begin{pmatrix} \mblo & \mblt \end{pmatrix}$ and $\mbr = \begin{pmatrix} \mbro & \mbrt \end{pmatrix}$ are unitary completions of $\mblo$ and $\mbrt$,
\begin{align*}
    \mbl^\dagger \mmu \mbr = \begin{pmatrix}
        \ma & \cdot \\ \cdot & \cdot
    \end{pmatrix} \text{ and }
    \mbl^\dagger (\mprojl \mmu \mprojr) \mbr = \begin{pmatrix}
        \ma & \mzero \\ \mzero & \mzero
    \end{pmatrix}.
\end{align*}
In Section~\ref{ssec:basic_case}, we consider the special case when $\mblo$ and $\mbro$ are the first $r$ and $c$ columns of the identity, respectively.
Under this restriction, the following statements about submatrices are true in the computational basis:
\begin{equation}\label{eq:compbasis}
    \mmu = \begin{pmatrix}
        \ma & \cdot \\ \cdot & \cdot
    \end{pmatrix} \text{ and }
    \mprojl \mmu \mprojr = \begin{pmatrix}
        \ma & \mzero \\ \mzero & \mzero
    \end{pmatrix}.
\end{equation}
This simplification is for the purposes of exposition, since then $\mmu$ is clearly a block matrix which we can apply the CS decomposition to. 
Indeed, it is without loss of generality: in Section~\ref{ssec:gen_case}, we recover general statements by unitary transformations which reduce to the special case above.

We now describe the QSVT framework, a lifting of a $2\times 2$ matrix polynomial construction defined via ``phase factors'' (Definition~\ref{def:qsp}), to higher dimensions. The construction in the $2 \times 2$ case is referred to in the literature as quantum signal processing (QSP). We recall the basics of QSP here.
\begin{definition}[Quantum signal processing] \label{def:qsp}
    A sequence of phase factors $\Phi = \{\phi_j\}_{0 \leq j \leq n} \in \R^{n+1}$ defines a \emph{quantum signal processing} circuit%
    \footnote{
        We define QSP with the reflection operation $\mr(x)$; a different convention is to use the rotation $e^{\ii\arccos(x)\msigma_x} = (\begin{smallmatrix} x & \ii\sqrt{1-x^2} \\ \ii\sqrt{1-x^2} & x\end{smallmatrix})$, denoted $W(x)$ in \cite{gslw18}.
        These two types of circuits are equivalent up to a shift in phase factors~\cite[Appendix A.2]{MartynRTC21}.
        Using $W(x)$ is perhaps more natural, since then this corresponds to alternating rotations in the $\sigma_X$ and $\sigma_Z$ basis.
    }
    \begin{align} \label{eq:qsp}
        \qsp(\Phi, x)
        \defeq \begin{pmatrix}
            e^{\ii \phi_0} & 0 \\ 0 & e^{-\ii \phi_0}
        \end{pmatrix}
        \prod_{j=1}^n \underbrace{\begin{pmatrix}
            x & \sqrt{1-x^2} \\ \sqrt{1-x^2} & -x
        \end{pmatrix}}_{=: \mr(x)}
        \underbrace{\begin{pmatrix}
            e^{\ii \phi_j} & 0 \\ 0 & e^{-\ii \phi_j}
        \end{pmatrix}}_{e^{\ii \phi_j \msigma_z}}.
    \end{align}
    Here and elsewhere, the product goes from $1$ on the left-hand side to $n$ on the right-hand side (by this convention, rotations are applied from $\phi_n$ to $\phi_0$).
\end{definition}

The idea of QSP is that we can perform a \emph{known} polynomial, satisfying an achievability condition defined below, on an \emph{unknown} (parameterized) operator via interleaved rotations.

\begin{definition}[QSP-achievable polynomial] \label{def:qsp-achievable}
    We say that a polynomial $p(x) \in \C[x]$ is \emph{QSP-achievable} if there is a sequence of phase factors $\Phi = \{\phi_j\}_{0 \leq j \leq n} \in \R^{n+1}$ such that
    \begin{align}
        \qsp(\Phi, x) = \begin{pmatrix}
            p(x) & \cdot \\ \cdot & \cdot
        \end{pmatrix}.
    \end{align}
\end{definition}

Through elementary calculations, \cite{gslw18} gives a characterization of QSP-achievability; in particular, they show that every bounded, real polynomial $p_{\Re}$ is achievable in the sense that there is a QSP-achievable polynomial whose real part is $p_{\Re}$, which we summarize here. For self-containedness, we give a proof in \cref{sec:qsp-proofs}, with a formal statement in \cref{thm:real-characterization}.
\begin{theorem}[{\cite[Corollary 10]{gslw18}}] \label{thm:real-one}
    Let $p_{\Re}(x) \in \mathbb{R}[x]$ be a real polynomial of degree $n$.
    Then there exists a $p \in \C[x]$ such that $p_{\Re} = \Re(p)$ and $p$ is QSP-achievable with some $\Phi = \{\phi_j\}_{0 \leq j \leq n} \in \R^{n+1}$ if and only if the following conditions hold:
    \begin{enumerate}[label=(\alph*)]
        \item $p_{\Re}$ is even or odd;
        \item $\abs{p_{\Re}(x)} \leq 1$ for $x \in [-1,1]$.
    \end{enumerate}
\end{theorem}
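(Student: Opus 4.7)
The plan is to prove necessity and sufficiency separately around a common structural claim: any length-$n$ QSP circuit has the form
\[
\qsp(\Phi, x) = \begin{pmatrix} p(x) & -q(x)\sqrt{1-x^2} \\ q^*(x)\sqrt{1-x^2} & p^*(x) \end{pmatrix},
\]
where $p, q \in \C[x]$ satisfy $\deg p \le n$ and $\deg q \le n-1$; $p$ has parity $n \bmod 2$ and $q$ has parity $(n-1) \bmod 2$; and $|p(x)|^2 + (1-x^2)|q(x)|^2 = 1$ identically. Both directions of the theorem are obtained by unpacking this claim.

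For necessity, I would establish the structural claim by induction on $n$. The base case $n = 0$ gives $p = e^{\ii\phi_0}$ and $q = 0$. The inductive step computes $U_n = U_{n-1} \cdot \mr(x) \cdot e^{\ii \phi_n \msigma_z}$ entrywise, yielding the recursion $p_n = e^{\ii\phi_n}(x p_{n-1} - (1-x^2) q_{n-1})$, $q_n = -e^{-\ii\phi_n}(p_{n-1} + x q_{n-1})$, from which the degree, parity, and unitarity invariants are directly verified (parity uses that $x$ and $1-x^2$ carry opposite parities, and unitarity is preserved because $\mr(x)$ and the diagonal phase factors are unitary on $[-1,1]$). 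Conditions (a) and (b) of the theorem are then immediate: $\Re(p)$ inherits parity $n \bmod 2$ from $p$, and $|p|^2 + (1-x^2)|q|^2 = 1$ forces $|p(x)| \le 1$ on $[-1, 1]$, hence $|\Re p(x)| \le 1$.

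For sufficiency, the plan is to (i) complete $p_\Re$ to a full QSP pair $(p, q)$ satisfying the structural claim, then (ii) extract phase factors. For (i), I would write $p = p_\Re + \ii p_I$ and seek a real polynomial $p_I$ of parity $n \bmod 2$ and degree at most $n$ for which $1 - p_\Re^2 - p_I^2$ is divisible by $1 - x^2$ with a non-negative quotient on $\R$, since that quotient will then equal $|q|^2$ for some $q \in \C[x]$ of the required parity and degree (any real univariate polynomial non-negative on $\R$ is a sum of two real squares, hence a squared modulus of a complex polynomial). Divisibility at $x = 1$ is the single condition $p_I(1)^2 = 1 - p_\Re(1)^2$ (the $x = -1$ case is automatic from parity). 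Existence of a $p_I$ simultaneously meeting this boundary condition and the non-negativity requirement follows from Fej\'er--Riesz applied to the non-negative trigonometric polynomial $1 - p_\Re(\cos\theta)^2$: factoring it as $|h(e^{\ii\theta})|^2$ lets me read off $p_I$ and $q$ from the real and imaginary parts of $h$.

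For (ii), given $(p_n, q_n)$, I would algebraically invert the forward recursion: the $2 \times 2$ coefficient matrix $\bigl(\begin{smallmatrix} x & -(1-x^2) \\ 1 & x \end{smallmatrix}\bigr)$ has determinant $1$, so inverting expresses $(p_{n-1}, q_{n-1})$ as explicit combinations of $p_n, q_n$ with coefficients depending on $\phi_n$. Setting the top $x^{n+1}$ coefficient of $p_{n-1}$ to zero forces $e^{-2\ii\phi_n} = -b/a$, where $a$ and $b$ are the leading coefficients of $p_n$ and $q_n$; the top $x^{2n}$ coefficient of the identity $|p_n|^2 + (1-x^2)|q_n|^2 = 1$ is $|a|^2 - |b|^2 = 0$, guaranteeing $|b/a| = 1$ and so $\phi_n \in \R$ exists. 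The parity of the resulting $p_{n-1}$ and $q_{n-1}$ then kills further leading coefficients, delivering the required degree drop, and unitarity is preserved by direct computation; iterating recovers the full $\Phi$. The main obstacle is step (i) — the Fej\'er--Riesz completion and the tracking of parity and degree constraints for $p_I$ and $q$, especially in boundary cases where $|p_\Re(\pm 1)| = 1$ or $\deg p_\Re < n$ — whereas step (ii) is essentially a clean termination argument on degree once the leading-coefficient algebra is set up.
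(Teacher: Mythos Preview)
Your necessity argument and step (ii) match the paper (Lemma~\ref{lem:qsp-recurrence} and the ``if'' direction of Theorem~\ref{thm:qsp-characterization}): both peel off phases by matching leading coefficients, using $|a|=|b|$ from the top coefficient of the unitarity identity. Step (i) is a genuinely different route: the paper does not use Fej\'er--Riesz but instead writes $P = 1 - p_\Re^2$ directly as $A^2 + (1-x^2)B^2$ with $A,B$ real, by observing this norm form is multiplicative (a Brahmagupta-type identity) and then explicitly handling each of five types of irreducible even real factors of $P$, classified by root pattern.

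Your Fej\'er--Riesz approach can be made to work, but as stated it has a real gap at exactly the point you flag. Applying Fej\'er--Riesz to $1 - p_{\Re}(\cos\theta)^2$ as a degree-$2n$ trigonometric polynomial yields $h$ of degree $2n$, so the $p_I$ read off from $\Re h(e^{\ii\theta}) = \sum_k c_k T_k(\cos\theta)$ has degree $2n$ and the wrong parity. Concretely, for $n=1$, $p_\Re(x)=x$: the only polynomial Fej\'er--Riesz factor of $\sin^2\theta$ (up to a unimodular scalar) is $h(z)=\tfrac12(1-z^2)$, which gives $p_I(x)=1-x^2$ and $q(x)=-x$, both of the wrong degree and parity. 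The repair is to exploit the parity of $p_\Re$: since $1 - p_\Re(\cos\theta)^2$ has only even Fourier modes, it is a degree-$n$ trigonometric polynomial in $2\theta$; applying Fej\'er--Riesz there (and choosing real coefficients, which the self-reciprocal real root structure permits) gives $g$ of degree $n$, and then $h(z) = z^{-n} g(z^2)$ is a Laurent polynomial supported on powers $k\equiv n\pmod 2$ with $|k|\le n$, whose real and imaginary parts on the circle yield $p_I$ and $\sin\theta\cdot q$ of the correct degree and parity. Once this is in place your route is cleaner than the paper's casework; without it, the parenthetical claim that the quotient is $|q|^2$ ``of the required parity and degree'' is false in general.
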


If $p$ is QSP-achievable, then we can achieve $\Re(p)$ via a linear combination of unitary circuits on top of QSP; see \cref{rmk:qsvt-to-alg}. We next introduce a generalization of Definition~\ref{def:qsp} to higher dimensions.
\begin{definition}[{\cite[Definition 15]{gslw18}}] \label{def:15}
The \emph{phased alternating sequence} associated with a partitioned unitary $\mmu$ (following notation of \cref{def:12}) and $\Phi = \{\phi_j\}_{0 \leq j \leq n} \in \R^{n+1}$ is
    \begin{align*}
        \mmu_\Phi &\defeq \begin{cases}
            e^{\ii\phi_0(2\mpi_\plef-\id)} \mmu e^{\ii\phi_1(2\mpi_\prig-\id)} \displaystyle\prod_{j=1}^{\frac{n - 1}{2}} \mmu^\dagger e^{\ii\phi_{2j}(2\mpi_\plef - \id)} \mmu e^{\ii\phi_{2j+1}(2\mpi_\prig - \id)} &\text{if } n \text{ is odd, and} \\
            \hspace{5.8em} e^{\ii\phi_0(2\mpi_\plef-\id)} \displaystyle\prod_{j=1}^{\frac n 2} \mmu^\dagger e^{\ii\phi_{2j-1}(2\mpi_\plef - \id)}\mmu e^{\ii\phi_{2j}(2\mpi_\prig - \id)}  &\text{if } n \text{ is even.} \\
        \end{cases}
    \end{align*}
\end{definition}

\begin{remark}\label{rem:high-d}
The phased alternating sequence $\mmu_\Phi$ can be seen as a generalization of the quantum signal processing circuit $\qsp(\Phi, x)$. When $d = 2$ and $r = c = 1$, $2\mpi_\plef - \id = 2\mpi_\prig - \id = \msigma_z$, so 
    \begin{align*}
        \qsp(\Phi,x) = [\mr(x)]_\Phi \text{ where } \mr(x) = \begin{pmatrix}
            x & \sqrt{1-x^2} \\ \sqrt{1-x^2} & -x
        \end{pmatrix}.
    \end{align*}
\end{remark}

Finally, we define the matrix polynomial we wish to target via the QSVT framework as follows.

\begin{definition}[{\cite[Definition 16]{gslw18}}]\label{def:16}
    Let $f: \R \to \C$ be even or odd, and let $\ma \in \C^{r \times c}$ have SVD $\ma = \sum_{i\in[\min(r, c)]} \sigma_iu_iv_i^\dagger$.
    Then we define
    \begin{align*}
        f^\sv(\ma) = \begin{cases}
            \sum_{i\in[\min(r, c)]} f(\sigma_i)u_iv_i^\dagger & f\text{ is odd} \\
            \sum_{i\in[c]} f(\sigma_i)v_iv_i^\dagger & f\text{ is even}
        \end{cases}
    \end{align*}
    where $\sigma_i$ is defined to be zero for $i > \min(r, c)$.
\end{definition}

When $f(x) = p(x)$ is an even or odd polynomial, $p^\sv(\ma)$ can be written as a polynomial in the expected way,
e.g.\ if $p(x) = x^2 + 1$, $p^\sv(\ma) = \ma^\dagger \ma + \id$ and if $p(x) = x^3 + x$, $p^\sv(\ma) = \ma \ma^\dagger \ma + \ma$. 

With this definition in hand, we are now ready to state the main result of the QSVT framework.
\begin{theorem}[{\cite[Theorem 17]{gslw18}}] \label{thm:qsp-to-qsvt}
    Let partitioned unitary $\mmu \in \C^{d \times d}$ be a block encoding of $\ma$.
    Let $\Phi = \{\phi_j\}_{0 \leq j \leq n} \in \R^{n+1}$ be the sequence of phase factors such that $\qsp(\Phi, x)$ computes the degree-$n$ polynomial $p(x) \in \C[x]$, as in \cref{def:qsp-achievable}.
    Then $\mmu_\Phi$ is a block encoding of $p^\sv(\ma)$:
    \begin{align*}
        \text{if }p\text{ is odd,}\quad
        \mpi_\plef \mmu_\Phi \mpi_\prig
        &= \begin{pmatrix}
            p^\sv(\ma) & \mzero \\ \mzero & \mzero
        \end{pmatrix}
        = p^\sv(\mpi_\plef \mmu \mpi_\prig), \\
        \text{and if }p\text{ is even,}\quad
        \mpi_\prig \mmu_\Phi \mpi_\prig
        &= \begin{pmatrix}
            p^\sv(\ma) & \mzero \\ \mzero & \mzero
        \end{pmatrix}
        = \mpi_\prig p^\sv(\mpi_\plef \mmu \mpi_\prig) \mpi_\prig.
    \end{align*}
\end{theorem}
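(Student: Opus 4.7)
My plan is to reduce to the computational-basis case of \eqref{eq:compbasis} and then apply the CS decomposition blockwise, so that the entire phased alternating sequence decomposes into a direct sum of $2\times 2$ QSP circuits (plus trivial pieces). First, I would observe that the theorem for general block encodings reduces to the case where $\mblo, \mbro$ are the first $r, c$ columns of the identity: conjugating $\mmu$ by the unitary completions $\mbl, \mbr$ (and correspondingly the projectors) turns the abstract block encoding into the concrete partitioned form, and the phased alternating sequence from \cref{def:15} transforms covariantly because every $\mpi_\plef, \mpi_\prig$ is conjugated along with each $\mmu$ or $\mmu^\dagger$. So I may as well assume $\mmu$ has $\ma$ literally as its top-left block, with $\mpi_\plef = \bigl(\begin{smallmatrix} \id_r & \\ & \mzero \end{smallmatrix}\bigr)$ and $\mpi_\prig = \bigl(\begin{smallmatrix} \id_c & \\ & \mzero\end{smallmatrix}\bigr)$.

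Next I would apply \cref{thm:cs} to $\mmu$, yielding $\mmu = \mathrm{diag}(\mv_1,\mv_2)\,\md\,\mathrm{diag}(\mw_1,\mw_2)^\dagger$ with $\md$ in the canonical form \eqref{eq:d-form}. The key structural fact, which I would make precise as a lemma (essentially \cref{rmk:cs-rotation}), is that on the ``$\mc/\ms$ sector'' $\vsinb\to\vsoutb$ the map $\md$ decomposes further, after pairing each singular value $\lambda_i\in\mathrm{diag}(\mc)$ with its counterpart in $\mathrm{diag}(\ms)$, into an orthogonal direct sum of $2\times 2$ blocks $\mr(\lambda_i) = \bigl(\begin{smallmatrix} \lambda_i & \sqrt{1-\lambda_i^2} \\ \sqrt{1-\lambda_i^2} & -\lambda_i\end{smallmatrix}\bigr)$, while on the two degenerate sectors $\vsina\to\vsouta$ and $\vsinc\to\vsoutc$ it acts as the ``$\lambda = 0$'' and ``$\lambda = 1$'' versions of $\mr$ respectively. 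In the same decomposition, the projectors $2\mpi_\plef - \id$ and $2\mpi_\prig - \id$, which in the CSD basis are just $\mathrm{diag}(\id,-\id)$ up to a reshuffling of blocks, restrict to $\msigma_z$ on each $2\times 2$ block. So on each such block, $\mmu$ acts as $\mr(\lambda_i)$ and the phase rotations act as $e^{\ii\phi_j\msigma_z}$, and hence the entire product defining $\mmu_\Phi$ restricts to $\qsp(\Phi,\lambda_i)$; by \cref{def:qsp-achievable}, its $(1,1)$ entry is $p(\lambda_i)$. On the two degenerate sectors the same restriction gives $\qsp(\Phi,0)$ or $\qsp(\Phi,1)$, respectively, with $(1,1)$ entry $p(0)$ or $p(1)$.

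Packaging these per-block entries back in the original basis, the top-left $r \times c$ (odd case) or $c \times c$ (even case) block of $\mmu_\Phi$ equals $\sum_i p(\lambda_i)\, u_i v_i^\dagger$ or $\sum_i p(\lambda_i)\, v_i v_i^\dagger$, which is exactly $p^\sv(\ma)$ from \cref{def:16}. The odd/even split arises naturally because the number of $\mmu$ factors in \cref{def:15} has the same parity as $n = \deg p$: an odd number of $\mmu$'s sends $\vsinb$ to $\vsoutb$ (so conjugating by $\mpi_\plef$ on the left and $\mpi_\prig$ on the right is the correct framing and the contribution from $\vsinc$ is killed since $p(1)$ is paired with an outgoing component that $\mpi_\plef$ annihilates when $c > r$, with dual behavior when $r > c$), while an even number returns $\vsinb$ to $\vsinb$ (so $\mpi_\prig$ on both sides is the right framing). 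I expect the main obstacle to be not the rotation calculation itself, but the careful bookkeeping of the degenerate sectors $\vsina,\vsinc$ and the rectangular cases: one must verify that the $\mpi_\plef$ and $\mpi_\prig$ framings in \cref{thm:qsp-to-qsvt} correctly zero out exactly those components whose ``singular value'' is forced to be $0$ or $1$ by the CSD, so that the surviving entries are precisely $\{p(\sigma_i)\}$ over the true singular values of $\ma$. Handling this bookkeeping cleanly via the $\vsina/\vsinb/\vsinc$ direct-sum picture of \cref{rmk:cs-rotation} is the advertised advantage of using CSD over Jordan's lemma, and removes the eigenspace-by-eigenspace casework of \cite{gslw18}.
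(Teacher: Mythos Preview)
Your proposal is correct and follows essentially the same route as the paper: reduce to the computational basis, apply the CS decomposition, commute the block-diagonal $\mv,\mw$ past the exponentiated projectors (the paper isolates this as \cref{lem:14}), then split $\md_\Phi$ as a direct sum over the three sectors $\vsina,\vsinb,\vsinc$ and compute each piece (the paper isolates this as \cref{lem:qsp-to-qsvt}).

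One small correction to your degenerate-sector bookkeeping: the $\vsinc$ ($\lambda=1$) contribution $p(1)\id$ is \emph{not} annihilated by $\mpi_\plef$; it lives in the $\id$-block of $\md_{11}$ and is retained in both parities. What actually makes the odd case work is that $p$ odd forces $p(0)=0$, so the rectangular $\vsina$ ($\lambda=0$) sector contributes zero; in the even case that sector instead contributes $p(0)\id_c$, matching the extra zero-singular-value terms in the even branch of \cref{def:16}.
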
 
As we will see, when $\mblo$ and $\mbro$ are in the computational basis, the CS decomposition (Theorem~\ref{thm:cs}) readily reduces the proof of Theorem~\ref{thm:qsp-to-qsvt} to substantially simpler subproblems (see Lemma~\ref{lem:qsp-to-qsvt}).

\begin{remark}[QSVT to quantum algorithms] \label{rmk:qsvt-to-alg}
    \cref{thm:qsp-to-qsvt} typically admits quantum algorithms in the following way.
    Suppose our goal is to apply $\ma^{-1}$ to a quantum state $\ket{\psi}$, where $\ma$ is a matrix with singular values in $[\tfrac 1 \kappa, 1]$ that we have in the block encoding $\mmu$.
    First, we take an odd, bounded polynomial $p(x) \in \R[x]$ such that $\abs{p(x) - \tfrac \kappa x} \leq \eps$ for $x \in [\tfrac 1 \kappa, 1]$.
    Then by \cref{thm:real-one}, there is a phase sequence $\Phi$ which implements a $q \in \C[x]$ such that $\Re(q) = p$.
    By \cref{thm:qsp-to-qsvt}, $\mmu_\Phi$ is a block encoding of $q^\sv(\ma)$ and a calculation shows that $\mmu_{-\Phi}$ is a block encoding of the same polynomial but with coefficients conjugated, $[q^*]^\sv(\ma)$.
    
    The circuit $(\mh \otimes \id) (|0\rangle\langle0| \otimes \mmu_\Phi + |1\rangle\langle1| \otimes \mmu_{-\Phi}) (\mh \otimes \id)$ is a block encoding of $\half(q^\sv(\ma) + [q^*]^\sv(\ma)) = p^\sv(\ma)$, and Corollary 19 of \cite{gslw18} shows that one can implement this circuit with controlled $\mmu$ and $\mmu^\dagger$'s, along with other gates based on $\mpi_\plef$ and $\mpi_\prig$.

    Equipped with a block encoding of $p^\sv(\ma)$, we can accomplish our desired algorithmic task.
    To apply $p^\sv(\ma)$ to an input state $\ket{\psi} \in \C^c$, we rotate $\ket{\psi}$ in $d$-dimensional state space to be aligned with the block in the block encoding.
    After applying $\mmu$ and post-selecting, the output state is $p^\sv(\ma) \ket{\psi}$ normalized to have unit norm, and $\norm{p^\sv(\ma) - \ma^{-1}} \leq \eps$ due to our choice of polynomial.\footnote{A warning: when applying QSVT on an approximate block encoding of $\tilde{\ma}$ with $\norm{\tilde{\ma} - \ma} \leq \eps$, the error may not propagate as expected. This is because, if $f$ satisfies $|f'| \le L$, $\norm{f^\sv(\ma) - f^\sv(\tilde{\ma})} \leq L\norm{\ma - \tilde{\ma}}$ is \emph{not true in general}, even up to constants. As Section~3.3 of \cite{gslw18} discusses, sometimes one must lose logarithmic factors here.}
\end{remark}

\subsection{Simplified QSVT in the computational basis}\label{ssec:basic_case}

In this section, we provide a proof of Theorem~\ref{thm:qsp-to-qsvt} in the computational basis. We begin with some helpful notation in this special case, following the partitioning given by Theorem~\ref{thm:cs}.

\begin{definition}[{Variant of \cite[Definition 12]{gslw18}}]\label{def:12}
Let $\mmu \in \C^{d \times d}$ be a block encoding of $\ma \in \C^{r \times c}$ where $\mblo$ and $\mbro$ are the first $r$ and $c$ columns of the identity, respectively (see \eqref{eq:compbasis}).
By \cref{thm:cs}, there is a CS decomposition compatible with the partitioning of $\mmu$:
\[\mmu = \begin{pmatrix} \ma & \mmu_{12} \\ \mmu_{21} & \mmu_{22} \end{pmatrix} 
  = \underbrace{\begin{pmatrix} \mv_{1} \\ & \mv_{2} \end{pmatrix}}_{\mv}
    \underbrace{\begin{pmatrix} \md_{11} & \md_{12} \\ \md_{21} & \md_{22} \end{pmatrix}}_{\md}
    \underbrace{\begin{pmatrix} \mw_{1} \\ & \mw_{2} \end{pmatrix}^\dagger}_{\mw^\dagger}.\]
\end{definition}

In Definition~\ref{def:12}, we applied Theorem~\ref{thm:cs} to obtain an SVD of $\ma = \mv_1 \md_{11} \mw_1$ that we have extended to the $d$-dimensional $\mmu$.
Throughout the remainder of this section, $\mbl, \mbr, \mprojl$, and $\mprojr$ are defined consistently with the choice of $\mblo$ and $\mbro$ in Definition~\ref{def:12}: $\mbl = \mbr = \id$, and $\mprojl$ and $\mprojr$ are the identity but with all but the first $r$ and $c$ 1's set to 0, respectively.
We next observe that this SVD commutes appropriately with exponentiated projections respecting the partition.

\begin{lemma}[Variant of {\cite[Lemma 14]{gslw18}}] \label{lem:14}
Let $\phi \in \R$. Following notation of \cref{def:12},
\begin{align*}
e^{\ii\phi(2\mprojl - \id)} = \begin{pmatrix} e^{\ii\phi}\id \\ & e^{-\ii\phi}\id \end{pmatrix},\; 
e^{\ii\phi(2\mprojr - \id)} = \begin{pmatrix} e^{\ii\phi}\id \\ & e^{-\ii\phi}\id \end{pmatrix},
\end{align*}
with appropriate block sizes,\footnote{
    These block sizes are such that the blocks of the product $e^{\ii \phi(2\mpi_\plef - \id)} \mmu e^{\ii \varphi(2\mpi_\prig - \id)}$ match.
    For example, if the top-left block of $\mmu$ is $r \times c$, then the top-left block of $e^{\ii \phi(2\mpi_\plef - \id)}$ is $r \times r$ and the top-left block of $e^{\ii \phi(2\mpi_\plef - \id)}$ is $c \times c$.
} and
\begin{align*}
    \begin{pmatrix} e^{\ii\phi}\id \\ & e^{-\ii\phi}\id \end{pmatrix}
    \begin{pmatrix} \mv_{1} \\ & \mv_{2} \end{pmatrix}
 &= \begin{pmatrix} \mv_{1} \\ & \mv_{2} \end{pmatrix}
    \begin{pmatrix} e^{\ii\phi}\id \\ & e^{-\ii\phi}\id \end{pmatrix}, \\
    \begin{pmatrix} \mw_{1} \\ & \mw_{2} \end{pmatrix}
    \begin{pmatrix} e^{\ii\phi}\id \\ & e^{-\ii\phi}\id \end{pmatrix}
 &= \begin{pmatrix} e^{\ii\phi}\id \\ & e^{-\ii\phi}\id \end{pmatrix}
    \begin{pmatrix} \mw_{1} \\ & \mw_{2} \end{pmatrix}.
\end{align*}
\end{lemma}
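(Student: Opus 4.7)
The plan is to verify both parts of \cref{lem:14} by direct computation, exploiting that in the computational basis (\cref{def:12}) the projections $\mprojl$ and $\mprojr$ are literally diagonal matrices, not just conjugate to diagonal matrices. For the first identity, I would observe that since $\mblo$ consists of the first $r$ columns of the identity, $\mprojl = \mblo\mblo^\dagger$ is the diagonal matrix with a $1$ in each of the first $r$ positions and a $0$ in each of the remaining $d - r$ positions. Hence $2\mprojl - \id$ is diagonal with $+1$ in the first $r$ entries and $-1$ in the last $d - r$ entries, and the matrix exponential of a diagonal matrix is obtained by exponentiating entrywise. This gives
\[
e^{\ii\phi(2\mprojl - \id)} = \begin{pmatrix} e^{\ii\phi}\id & \\ & e^{-\ii\phi}\id \end{pmatrix}
\]
with block sizes $\{r, d-r\}$. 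The identical argument with $c$ in place of $r$ handles $e^{\ii\phi(2\mprojr - \id)}$, now with block sizes $\{c, d-c\}$.

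For the commutation relations, the key point is that each exponentiated projection acts as a \emph{scalar} multiple of the identity on each of its two diagonal blocks. By \cref{def:12}, $\mv = \mathrm{diag}(\mv_1,\mv_2)$ with $\mv_1 \in \C^{r \times r}$ and $\mv_2 \in \C^{(d-r) \times (d-r)}$, so the block partition of $\mv$ agrees with that of $e^{\ii\phi(2\mprojl - \id)}$; multiplying out block by block, the identity $(e^{\ii\phi}\id)\mv_1 = \mv_1(e^{\ii\phi}\id)$ (and likewise for $\mv_2$ with $e^{-\ii\phi}$) is trivial because scalars commute with everything. This gives the first commutation identity, and the analogous statement for $\mw$ (whose blocks are of size $c$ and $d - c$) with the $\mprojr$-exponential gives the second.

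The only subtlety is bookkeeping: the block sizes appearing in the two displayed formulas for $e^{\ii\phi(2\mprojl-\id)}$ and $e^{\ii\phi(2\mprojr-\id)}$ look identical but refer to partitions of different size ($\{r,d-r\}$ versus $\{c,d-c\}$), and one has to pick the correct partition depending on whether the factor appears on the left or on the right of $\mmu$ in the phased alternating sequence. This is exactly the content of the footnote in the lemma statement, and once this is tracked there is no further obstacle; the proof is a one-line computation in each case.
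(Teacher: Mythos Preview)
Your proposal is correct and matches the paper's treatment: the paper states \cref{lem:14} without proof, since in the computational basis of \cref{def:12} the projections $\mprojl,\mprojr$ are literally diagonal and the claims are immediate by entrywise exponentiation and block multiplication, exactly as you describe. Your handling of the block-size bookkeeping is also accurate.
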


We next state our main technical claim, whose proof is deferred to the end of the section.

\begin{lemma} \label{lem:qsp-to-qsvt}
    Let $\Phi \in \mathbb{R}^{n+1}$ be the sequence of phase factors implementing the degree-$n$ polynomial $p(x) \in \mathbb{C}[x]$ via quantum signal processing (\cref{def:qsp-achievable}).
    Then we can compute the corresponding QSVT circuit (\cref{def:15}) applied to the following partitioned unitaries.
    First, the unitary associated with zero singular values.
    \begin{equation}
        \Big[\begin{pmatrix}
            \mzero^{r\times c} & \id_{r} \\ \id_{c} & \mzero^{c\times r}
        \end{pmatrix}\Big]_\Phi = \begin{pmatrix}
            p(0) \id_c & \cdot \\ \cdot & \cdot
        \end{pmatrix} \text{ for $n$ even and } \begin{pmatrix}
            \mzero^{r\times c} & \cdot \\ \cdot & \cdot
        \end{pmatrix} \text{ for $n$ odd.}
        \label{eq:block-xzero}           
    \end{equation}
    Next, the unitary associated with one singular values.
    \begin{equation}
        \Big[\begin{pmatrix}
            \id_r & \mzero^{r \times c} \\ \mzero^{c \times r} & -\id_c
        \end{pmatrix}\Big]_\Phi = \begin{pmatrix}
        p(1)\id_r & \cdot \\ \cdot & \cdot
        \end{pmatrix}. \label{eq:block-xone}
    \end{equation}
    Finally, the unitary for intermediate singular vaues: let $\mc, \ms \in \mathbb{C}^{r \times r}$ be diagonal with $\mc^2 + \ms^2 = \id$.
    \begin{equation}
        \Big[\begin{pmatrix}
            \mc & \hphantom{-}\ms \\ \ms & -\mc
        \end{pmatrix}\Big]_\Phi = \begin{pmatrix}
            p^{\sv}(\mc) & \cdot \\ \cdot & \cdot
        \end{pmatrix}. \label{eq:block-xmid}
    \end{equation}
\end{lemma}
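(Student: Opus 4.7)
My approach is to reduce each of the three cases to a direct calculation using the specific structure of the unitary and standard commutation identities from the exponentials in \cref{lem:14}. I would start with case (3), which carries the generic content of QSVT. The key observation is that, since both $\mc$ and $\ms$ are diagonal, a single permutation $\pi$ of coordinates (pairing index $i$ in the top block with index $i$ in the bottom block) simultaneously block-diagonalizes the unitary as $\pi^\top \begin{pmatrix} \mc & \ms \\ \ms & -\mc \end{pmatrix} \pi = \bigoplus_{i=1}^r \mr(c_i)$ and both projectors as $\pi^\top \mpi_\plef \pi = \pi^\top \mpi_\prig \pi = \bigoplus_{i=1}^r \begin{pmatrix} 1 & 0 \\ 0 & 0 \end{pmatrix}$. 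The projector exponentials therefore split as $\bigoplus_i e^{\ii\phi\msigma_z}$, and the whole phased alternating sequence from \cref{def:15} decomposes as $\mmu_\Phi = \pi \bigl(\bigoplus_i \qsp(\Phi, c_i)\bigr) \pi^\top$. Reading off the $(1,1)$ entries $p(c_i)$ from each summand and reassembling gives the top-left $r \times r$ block as $\text{diag}(p(c_1), \ldots, p(c_r)) = p^\sv(\mc)$.

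Cases (1) and (2) correspond to higher-dimensional analogues of $\qsp(\Phi, 0)$ and $\qsp(\Phi, 1)$ respectively, and the plan is to match the QSVT computation with the QSP formulas at these values via analogous collapse identities. For case (2), the matrix $\mmu$ is diagonal and in fact equals $2\mpi_\plef - \id$; together with $\mpi_\plef = \mpi_\prig$ (the encoded matrix $\id_r$ is square) this means $\mmu$ commutes with both kinds of projector exponential, and $\mmu^2 = \id$. Every pair $\mmu^\dagger e^{\ii\phi(2\mpi_\plef - \id)} \mmu$ appearing in \cref{def:15} therefore simplifies to $e^{\ii\phi(2\mpi_\plef - \id)}$, and the phased sequence collapses to $e^{\ii (\sum_j \phi_j)(2\mpi_\plef - \id)}$, whose top-left $r \times r$ block is $e^{\ii\sum_j\phi_j}\id_r$. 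A parallel direct computation of $\qsp(\Phi, 1)$ using $\mr(1) = \msigma_z$ (which commutes with every $e^{\ii\phi\msigma_z}$) yields $p(1) = e^{\ii\sum_j\phi_j}$, confirming the claim.

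Case (1) is the most delicate, and the key technical step is the commutation identity $\mmu^\dagger e^{\ii\phi(2\mpi_\plef - \id)} \mmu = e^{-\ii\phi(2\mpi_\prig - \id)}$, which I would verify by direct block multiplication using the explicit form of $\mmu$ and \cref{lem:14}. Applying this identity to each inner pair collapses the phased sequence into $\mmu_\Phi = e^{\ii\phi_0(2\mpi_\plef - \id)} e^{\ii\psi(2\mpi_\prig - \id)}$ for $n$ even (or the same expression with one extra $\mmu$ inserted after the leading factor, for $n$ odd), where $\psi$ is an alternating sum of the $\phi_j$'s. Computing $\qsp(\Phi, 0)$ by the analogous simplification (using $\mr(0) = \msigma_x$ and $\msigma_x e^{\ii\phi\msigma_z}\msigma_x = e^{-\ii\phi\msigma_z}$) shows that $p(0)$ is the corresponding scalar $e^{\ii(\phi_0 + \psi)}$ for $n$ even (and $p(0) = 0$ for $n$ odd, as $p$ is then an odd polynomial). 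Reading off the top-left block of $\mmu_\Phi$ then matches the lemma's claim, with the leading $\mmu$ factor for odd $n$ forcing the claimed $r \times c$ block to vanish because $\mmu$'s top-left block is $\mzero^{r \times c}$.

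The main obstacle will be the commutation identity in case (1) together with the careful sign bookkeeping through the collapse, especially aligning the alternating sum $\psi$ with the QSP evaluation at $x=0$. The conceptual content is concentrated in the block-diagonalization of case (3), and in identifying the two collapsing identities---$\mmu$ commuting with projector exponentials in case (2), and the ``sandwich'' relation $\mmu^\dagger \cdot (\text{exponential}) \cdot \mmu$ in case (1)---that reduce the higher-dimensional circuits to exact parallels of the $2 \times 2$ QSP formulas at $x = 0$ and $x = 1$.
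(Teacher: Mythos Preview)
Your proposal is correct and follows essentially the same route as the paper. In all three cases the paper does exactly the computations you outline: case~\eqref{eq:block-xmid} by observing that the matrix is a direct sum of copies of $\mr(c_i)$ (your permutation $\pi$ makes this explicit), case~\eqref{eq:block-xone} by noting $\mmu$ commutes with every phase exponential so the product collapses, and case~\eqref{eq:block-xzero} by the block computation $\mmu^\dagger \begin{pmatrix} e^{\ii\phi}\id & \\ & e^{-\ii\phi}\id\end{pmatrix}\mmu = \begin{pmatrix} e^{-\ii\phi}\id & \\ & e^{\ii\phi}\id\end{pmatrix}$, which is precisely your sandwich identity $\mmu^\dagger e^{\ii\phi(2\mpi_\plef-\id)}\mmu = e^{-\ii\phi(2\mpi_\prig-\id)}$ written out entrywise. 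One very minor omission: in case~\eqref{eq:block-xone} with $n$ odd there is a leftover factor of $\mmu^n = \mmu$ after collapsing the pairs (the paper records this as the $(-1)^n$ in the bottom-right block), but since its top-left block is $\id_r$ your stated conclusion about the top-left block is unaffected.
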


Using this lemma, our main QSVT result (Theorem~\ref{thm:qsp-to-qsvt}) in the setting of Definition~\ref{def:12} follows.
\begin{proof}[Proof of \cref{thm:qsp-to-qsvt}, special case]
We recall the definition of $\mmu_\Phi$:
\begin{align}
    \mmu_\Phi &= \begin{cases}
        e^{\ii\phi_0(2\mpi_\plef-\id)} \mmu e^{\ii\phi_1(2\mpi_\prig-\id)} \prod_{j=1}^{\frac{n - 1}{2}} \mmu^\dagger e^{\ii\phi_{2j}(2\mpi_\plef - \id)} \mmu e^{\ii\phi_{2j+1}(2\mpi_\prig - \id)} &\text{if } n \text{ is odd, and} \\
        \hspace{5.8em} e^{\ii\phi_0(2\mpi_\plef-\id)} \prod_{j=1}^{\frac n 2} \mmu^\dagger e^{\ii\phi_{2j-1}(2\mpi_\plef - \id)}\mmu e^{\ii\phi_{2j}(2\mpi_\prig - \id)}  &\text{if } n \text{ is even.} \\
    \end{cases} \nonumber
\intertext{Using that $\mv$ and $\mw^\dagger$ from the CS decomposition $\mmu = \mv \md \mw^\dagger$ commute with their adjacent exponentiated reflections (\cref{lem:14}), we continue:}
    &= \begin{cases}
        \mv e^{\ii\phi_0(2\mpi_\plef-\id)} \md e^{\ii\phi_1(2\mpi_\prig-\id)} \parens[\Big]{\prod_{j=1}^{\frac{n - 1}{2}} \md^\dagger e^{\ii\phi_{2j}(2\mpi_\plef - \id)} \md e^{\ii\phi_{2j+1}(2\mpi_\prig - \id)}} \mw^\dagger &\text{if } n \text{ is odd, and} \\
        \hspace{5.8em} \mw e^{\ii\phi_0(2\mpi_\plef-\id)} \parens[\Big]{\prod_{j=1}^{\frac n 2} \md^\dagger e^{\ii\phi_{2j-1}(2\mpi_\plef - \id)} \md e^{\ii\phi_{2j}(2\mpi_\prig - \id)}} \mw^\dagger &\text{if } n \text{ is even.} \\
    \end{cases} \nonumber \\
    &= \begin{cases}
        \mv \md_\Phi \mw^\dagger &\text{if } n \text{ is odd, and} \\
        \mw \md_\Phi \mw^\dagger &\text{if } n \text{ is even.} \\
    \end{cases}\label{eq:uphi_simplify}
\end{align}
This reduces the problem to computing $\md_\Phi$.
Recall from \cref{rmk:cs-rotation} that the structure of $\md$ is
\begin{equation*}
    \left(\begin{array}{@{}c|c@{}}
    \md_{11} &
    \md_{12} \\
    \hline
    \md_{21} &
    \md_{22} \end{array}
    \right)
    =
    \left(\begin{array}{@{}c|c@{}}
    \begin{matrix}
    \hla{\mzero} & & \\
    & \hlb{\mc} & \\
    & & \hlc{\id}
    \end{matrix} &
    \begin{matrix}
    \hla{\id} & & \\
    & \hphantom{-}\hlb{\ms} & \\
    & & \hphantom{-}\hlc{\mzero}
    \end{matrix} \\
    \hline
    \begin{matrix}
    \hla{\id} & & \\
    & \hlb{\ms} & \\
    & & \hlc{\mzero}
    \end{matrix} &
    \begin{matrix}
    \hla{\mzero} & & \\
    & \hlb{-\mc} & \\
    & & \hlc{-\id}
    \end{matrix}
    \end{array}
    \right)
    = \underbrace{\hla{\begin{pmatrix}
        \mzero & \id \\ \id & \mzero
    \end{pmatrix}}}_{\hla{\vsina \to \vsouta}}
    \oplus \underbrace{\hlb{\begin{pmatrix}
        \mc & \hphantom{-}\ms \\ \ms & -\mc
    \end{pmatrix}}}_{\hlb{\vsinb \to \vsoutb}}
    \oplus \underbrace{\hlc{\begin{pmatrix}
        \id & \hphantom{-}\mzero \\ \mzero & -\id
    \end{pmatrix}}}_{\hlc{\vsinc \to \vsoutc}}.
\end{equation*}
Similarly, for $\phi \in \R$,
\begin{align*}
    e^{\ii\phi(2\mpi_\plef - \id)}
    &= \left(\begin{array}{@{}c|c@{}}
    e^{\ii\phi}\id
    \\ \hline
    & e^{-\ii\phi}\id
    \end{array}\right)
    = \underbrace{\hla{\begin{pmatrix}
        e^{\ii\phi}\id \\ & e^{-\ii\phi}\id
    \end{pmatrix}}}_{\hla{\vsouta \to \vsouta}}
    \oplus \underbrace{\hlb{\begin{pmatrix}
        e^{\ii\phi}\id \\ & e^{-\ii\phi}\id
    \end{pmatrix}}}_{\hlb{\vsoutb \to \vsoutb}}
    \oplus \underbrace{\hlc{\begin{pmatrix}
        e^{\ii\phi}\id \\ & e^{-\ii\phi}\id
    \end{pmatrix}}}_{\hlc{\vsoutc \to \vsoutc}}, \\
    e^{\ii\phi(2\mpi_\prig - \id)}
    &= \left(\begin{array}{@{}c|c@{}}
    e^{\ii\phi}\id
    \\ \hline
    & e^{-\ii\phi}\id
    \end{array}\right)
    = \underbrace{\hla{\begin{pmatrix}
        e^{\ii\phi}\id \\ & e^{-\ii\phi}\id
    \end{pmatrix}}}_{\hla{\vsina \to \vsina}}
    \oplus \underbrace{\hlb{\begin{pmatrix}
        e^{\ii\phi}\id \\ & e^{-\ii\phi}\id
    \end{pmatrix}}}_{\hlb{\vsinb \to \vsinb}}
    \oplus \underbrace{\hlc{\begin{pmatrix}
        e^{\ii\phi}\id \\ & e^{-\ii\phi}\id
    \end{pmatrix}}}_{\hlc{\vsinc \to \vsinc}}.
\end{align*}
Leveraging this direct sum decomposition of $\md$, applying \cref{lem:qsp-to-qsvt} to each block yields
\begin{align*}
    \md_\Phi &= \hla{\Big[\begin{pmatrix}
        \mzero & \id \\ \id & \mzero
    \end{pmatrix}\Big]_\Phi}
    \oplus \hlb{\Big[\begin{pmatrix}
        \mc & \hphantom{-}\ms \\ \ms & -\mc
    \end{pmatrix}\Big]_\Phi}
    \oplus \hlc{\Big[\begin{pmatrix}
        \id & \hphantom{-}\mzero \\ \mzero & -\id
    \end{pmatrix}\Big]_\Phi} \\
    &= \begin{cases}
    \hla{\begin{pmatrix}
        \mzero & \cdot \\ \cdot & \cdot
    \end{pmatrix}}
    \oplus \hlb{\begin{pmatrix}
        p^\sv(\mc) & \cdot \\ \cdot & \cdot
    \end{pmatrix}}
    \oplus \hlc{\begin{pmatrix}
        p(1)\id & \cdot \\ \cdot & \cdot
    \end{pmatrix}} & \text{if }n\text{ is odd, and} \\
    \hla{\begin{pmatrix}
        p(0)\id & \cdot \\ \cdot & \cdot
    \end{pmatrix}}
    \oplus \hlb{\begin{pmatrix}
        p^\sv(\mc) & \cdot \\ \cdot & \cdot
    \end{pmatrix}}
    \oplus \hlc{\begin{pmatrix}
        p(1)\id & \cdot \\ \cdot & \cdot
    \end{pmatrix}} & \text{if }n\text{ is even.}
    \end{cases}
\end{align*}
So, for $n$ odd, recalling \eqref{eq:uphi_simplify} and $p(0) = 0$, we have
\begin{align*}
    \mpi_{\plef}\mmu_\Phi\mpi_{\prig}
    &= \mpi_{\plef}\mv \md_\Phi \mw^\dagger\mpi_{\prig} \\
    &= \begin{pmatrix}
        \id & \\ & \mzero
    \end{pmatrix}
    \begin{pmatrix}
        \mv_{1} \\ & \mv_{2}
    \end{pmatrix}
    \md_\Phi
    \begin{pmatrix}
        \mw_{1}^\dagger \\ & \mw_{ 2}^\dagger
    \end{pmatrix}
    \begin{pmatrix}
        \id & \\ & \mzero
    \end{pmatrix} = \begin{pmatrix}
        \mv_{1} & \\ & \mzero
    \end{pmatrix}
    \md_\Phi
    \begin{pmatrix}
        \mw_{1}^\dagger & \\ & \mzero
    \end{pmatrix} \\
    &= \left(\begin{array}{@{}c|c@{}}
    \mv_{1}\parens*{\begin{smallmatrix}
    \hla{\mzero} & & \\
    & \hlb{p^\sv(\mc)} & \\
    & & \hlc{p(1)\id}
    \end{smallmatrix}}\mw_{1}^\dagger &
    \begin{matrix}
    \hphantom{-} & \\
    & \hphantom{-}
    \end{matrix} \\
    \hline
    \begin{matrix}
    \hphantom{-} & \\
    & \hphantom{-}
    \end{matrix} &
    \begin{matrix}
    \hphantom{-} & \\
    & \hphantom{-}
    \end{matrix}
    \end{array}
    \right)
    = \left(\begin{array}{@{}c|c@{}}
    p^\sv(\ma) & \mzero \\
    \hline
    \mzero & \mzero
    \end{array}
    \right).
\end{align*}
Similarly, for $n$ even, we have
\begin{align*}
    \mpi_{\prig}\mmu_\Phi\mpi_{\prig}
    &= \mpi_{\prig}\mw \md_\Phi \mw^\dagger\mpi_{\prig} \\
    &= \begin{pmatrix}
        \mw_{1} & \\ & \mzero 
    \end{pmatrix}
    \md_\Phi
    \begin{pmatrix}
        \mw_{1}^\dagger & \\ & \mzero
    \end{pmatrix} \\
    &= \left(\begin{array}{@{}c|c@{}}
    \mw_{1}\parens*{\begin{smallmatrix}
    \hla{p(0)\id} & & \\
    & \hlb{p^\sv(\mc)} & \\
    & & \hlc{p(1)\id}
    \end{smallmatrix}}\mw_{ 1}^\dagger &
    \begin{matrix}
    \hphantom{-} & & \\
    & \hphantom{-}
    \end{matrix} \\
    \hline
    \begin{matrix}
    \hphantom{-} & \\
    & \hphantom{-}
    \end{matrix} &
    \begin{matrix}
    \hphantom{-} & \\
    & \hphantom{-}
    \end{matrix}
    \end{array}
    \right)
    = \left(\begin{array}{@{}c|c@{}}
    p^\sv(\ma) & \mzero \\
    \hline
    \mzero & \mzero
    \end{array}
    \right). \qedhere
\end{align*}
\end{proof}

We conclude the section by proving Lemma~\ref{lem:qsp-to-qsvt}.

\begin{proof}[Proof of \cref{lem:qsp-to-qsvt}]
The basic intuition behind this argument is that, by assumption and \eqref{eq:qsp},
\begin{align*}
    \begin{pmatrix}
        e^{\ii \phi_0} & 0 \\ 0 & e^{-\ii \phi_0}
    \end{pmatrix}
    \prod_{j=1}^n \begin{pmatrix}
            x & \sqrt{1-x^2} \\ \sqrt{1-x^2} & -x
    \end{pmatrix}
    \begin{pmatrix}
        e^{\ii \phi_j} & 0 \\ 0 & e^{-\ii \phi_j}
    \end{pmatrix}
    = \begin{pmatrix}
        p(x) & \cdot \\ \cdot & \cdot
    \end{pmatrix}.
\end{align*}
So, supposing we could evaluate the polynomial at a matrix $x \leftarrow \mc$, we get that
\begin{align*}
    ``\begin{pmatrix}
        e^{\ii \phi_0}\id &  \\  & e^{-\ii \phi_0}\id
    \end{pmatrix}
    \prod_{j=1}^n \begin{pmatrix}
            \mc & \sqrt{\id-\mc^2} \\ \sqrt{\id-\mc^2} & -\mc
    \end{pmatrix}
    \begin{pmatrix}
        e^{\ii \phi_j}\id & \\ & e^{-\ii \phi_j}\id
    \end{pmatrix}
    = \begin{pmatrix}
        p(\mc) & \cdot \\ \cdot & \cdot
    \end{pmatrix}."
\end{align*}
This should hold because block matrix multiplication operates by the same rules as scalar matrix multiplication, but requires care to handle the non-square case.
We formalize this argument in an elementary manner.
First, we consider \eqref{eq:block-xzero}.
Let $\mmu = \parens*{\begin{smallmatrix} \mzero & \id_{r} \\ \id_{c} & \mzero \end{smallmatrix}}$.
When $n$ is even,
\begin{align*}
    \mmu_\Phi &= \begin{pmatrix}
        e^{\ii \phi_{0}}\id  \\ & e^{-\ii \phi_{0}}\id
    \end{pmatrix}
    \prod_{j=1}^{\frac n 2}
    \begin{pmatrix}
        \mzero & \id \\ \id & \mzero
    \end{pmatrix}^\dagger
    \begin{pmatrix}
        e^{\ii \phi_{2j-1}}\id  \\ & e^{-\ii \phi_{2j-1}}\id
    \end{pmatrix}
    \begin{pmatrix}
        \mzero & \id \\ \id & \mzero
    \end{pmatrix}
    \begin{pmatrix}
        e^{\ii \phi_{2j}}\id  \\ & e^{-\ii \phi_{2j}}\id
    \end{pmatrix} \\
    &= \begin{pmatrix}
        e^{\ii \phi_{0}}\id  \\ & e^{-\ii \phi_{0}}\id
    \end{pmatrix}
    \prod_{j=1}^{\frac n 2}
    \begin{pmatrix}
        e^{\ii(\phi_{2j} - \phi_{2j-1})}\id & \mzero \\ \mzero & e^{-\ii(\phi_{2j} - \phi_{2j-1})}\id
    \end{pmatrix} \\
    &= \begin{pmatrix}
        e^{\ii\sum_{k=0}^{n}(-1)^{k}\phi_k}\id & \mzero \\ \mzero & e^{-\ii\sum_{k=0}^n(-1)^{k}\phi_k}\id
    \end{pmatrix}.
\end{align*}
Taking $\id$ and $\mzero$ to be $1$-dimensional scalars $1$ and $0$, this computation and Definition~\ref{def:qsp} also show that $p(0) = e^{\ii\sum_{k=0}^n(-1)^{k}\phi_k}$ yielding the desired conclusion. Similarly, when $n$ is odd,
\begin{align*}
    \mmu_\Phi &=
    \begin{pmatrix}
        e^{\ii \phi_{0}}\id  \\ & e^{-\ii \phi_{0}}\id
    \end{pmatrix}
    \begin{pmatrix}
        \mzero & \id \\ \id & \mzero
    \end{pmatrix}
    \begin{pmatrix}
        e^{\ii \phi_{1}}\id  \\ & e^{-\ii \phi_{1}}\id
    \end{pmatrix} \\
    &\hspace{8em}
    \prod_{j=1}^{\frac{n-1}{2}}
    \begin{pmatrix}
        \mzero & \id \\ \id & \mzero
    \end{pmatrix}^\dagger
    \begin{pmatrix}
        e^{\ii \phi_{2j}}\id  \\ & e^{-\ii \phi_{2j}}\id
    \end{pmatrix}
    \begin{pmatrix}
        \mzero & \id \\ \id & \mzero
    \end{pmatrix}
    \begin{pmatrix}
        e^{\ii \phi_{2j+1}}\id  \\ & e^{-\ii \phi_{2j+1}}\id
    \end{pmatrix} \\
    &= \begin{pmatrix}
        \mzero & e^{\ii\sum_{k=0}^n(-1)^{k}\phi_k}\id \\ e^{-\ii\sum_{k=0}^n(-1)^{k}\phi_k}\id & \mzero
    \end{pmatrix}.
\end{align*}
Next, we prove \eqref{eq:block-xone}.
Let $\mmu = \parens*{\begin{smallmatrix} \id_r & \mzero \\ \mzero & -\id_c \end{smallmatrix}}$.
Notice that $\mmu$ commutes with the other matrices in the expression $\mmu_\Phi$.
As an immediate consequence,
\begin{align*}
    \mmu_\Phi &= \begin{pmatrix}
        \id & \mzero \\ \mzero & (-1)^n\id
    \end{pmatrix}
    \prod_{k=0}^n \begin{pmatrix}
        e^{\ii \phi_k}\id & \mzero \\ \mzero & e^{-\ii \phi_k}\id
    \end{pmatrix}
    = \begin{pmatrix}
        e^{\ii \sum_{k=0}^n\phi_k}\id & \mzero \\ \mzero & (-1)^ne^{-\ii \sum_{k=0}^n\phi_k}\id
    \end{pmatrix}.
\end{align*}
As before, the same computation specialized to a $2$-dimensional $\mmu = \msigma_z$  shows that $p(1) = e^{\ii \sum_{k=0}^n\phi_k}$, giving the desired claim.
Finally to prove \eqref{eq:block-xmid}, let the diagonal entries of $\mc$ be $\{c_i\}_{i \in [r]}$. Then $\mmu$ is the direct sum of $r$ matrices of the form $\mr(c_i)$, where we recall we defined $\mr$ in Definition~\ref{def:qsp}. Applying Definition~\ref{def:qsp} to each $2 \times 2$ block, and comparing to Definition~\ref{def:16}, yields the conclusion.
\end{proof}

\subsection{Simplified QSVT in general bases}\label{ssec:gen_case}

We now finish the proof of Theorem~\ref{thm:qsp-to-qsvt} in the case of general bases $\mbl$, $\mbr$. For disambiguation we let $\mbl$, $\mbr$, $\mblo$, $\mbro$, $\mprojl$, $\mprojr$ refer to an arbitrary basis and associated subspace in Definition~\ref{def:block_encode}, and (for this section only) we let $\bmbl$, $\bmbr$, $\bmblo$, $\bmbro$, $\bmprojl$, $\bmprojr$ refer to the computational basis where $\bmblo$ and $\bmbro$ have the same number of columns as $\mblo$ and $\mbro$. These are related via
\begin{equation}\label{eq:convert_bases}
\mblo = \mbl \bmblo,\; \mprojl = \mbl \bmprojl \mbl^\dagger,\; \mbro = \mbr \bmbro,\; \mprojr = \mbr \bmprojr \mbr^\dagger.
\end{equation}
Finally, we define
\begin{equation}\label{eq:bmmudef}\bmmu \defeq \mbl^\dagger \mmu \mbr \iff \mmu = \mbl \bmmu \mbr^\dagger.\end{equation}
We prove the general case by reducing to the special case of Theorem~\ref{thm:qsp-to-qsvt} we proved in the prior Section~\ref{ssec:basic_case}, as suggested earlier. The following observation will be useful:
\begin{equation}\label{eq:compare_svd}
\bmprojl \bmmu \bmprojr = \mbl^\dagger \mprojl \mmu \mprojr \mbr.
\end{equation}

\begin{proof}[Proof of Theorem~\ref{thm:qsp-to-qsvt}, general case.]
For simplicity we will only prove the odd case, as the reduction in the even case is essentially identical. Recall that when $n$ is odd,
\begin{align*}
\mmu_\Phi &= e^{\ii\phi_1(2\mpi_\plef-\id)} \mmu \prod_{j\in[\frac{n - 1}{2}]} e^{\ii\phi_{2j}(2\mpi_\prig - \id)} \mmu^\dagger e^{\ii\phi_{2j+1}(2\mpi_\plef - \id)}\mmu \\
&= \mbl \Par{e^{\ii\phi_1(2\bmprojl-\id)} \bmmu \prod_{j\in[\frac{n - 1}{2}]} e^{\ii\phi_{2j}(2\bmprojr - \id)} \bmmu^\dagger e^{\ii\phi_{2j+1}(2\bmprojl - \id)}\bmmu} \mbr^\dagger \\
&= \mbl \begin{pmatrix}
p^\sv(\overline{\ma}) & \mzero \\ \mzero & \mzero
\end{pmatrix} \mbr^\dagger,\text{ where } \overline{\ma} \defeq \bmprojl \bmmu \bmprojr.
\end{align*}
In the second line, we used the definitions \eqref{eq:convert_bases}, \eqref{eq:bmmudef} and the fact that $\bmbl$, $\bmbr$ are unitary. Finally, in the third line we used the special case of Theorem~\ref{thm:qsp-to-qsvt} we proved earlier, applied to $\bmmu$. The conclusion follows by the claim
\begin{equation}\label{eq:compare_psv}\mbl \begin{pmatrix}
p^\sv(\overline{\ma}) & \mzero \\ \mzero & \mzero
\end{pmatrix} \mbr^\dagger = p^\sv(\ma),\text{ where } \ma = \mprojl \mmu \mprojr.\end{equation}
Indeed, as a consequence of \eqref{eq:compare_svd}, $\ma$ and $\overline{\ma}$ have the same singular values, their left singular vectors are related via rotation by $\mbl$, and their right singular vectors are related via rotation by $\mbr$. Comparing with the definition of $p^\sv$ from Definition~\ref{def:16} yields the claim \eqref{eq:compare_psv}.
\end{proof}

\section{QSVT and Chebyshev Series} \label{sec:cheb}

\begin{quotation}
    \textit{So it becomes tempting to look at approximation methods that go beyond interpolation, and to warn people that interpolation is dangerous... the trouble with this is that for almost all the functions encountered in practice, Chebyshev interpolation works beautifully!}
    \hspace{1em plus 1fill}---Trefethen, \cite{trefethen19}
\end{quotation}

\subsection{Chebyshev polynomials}\label{ssec:cheb_poly}

The QSVT framework gives a generic way of applying bounded polynomials to matrices. In applications of interest, the main goal is actually to apply a non-polynomial function; to capture these applications, it is important to develop tools for approximating the relevant functions with bounded polynomials. In this section, we introduce Chebyshev polynomials, our main tool for constructing approximations.
We present only properties which are needed to achieve our results.

\begin{definition}[Chebyshev polynomial]
    The degree-$n$ \emph{Chebyshev polynomial} (of the first kind), denoted $T_n(x)$, is the function that satisfies, for all $z \in \mathbb{C}$,
    \[
        T_n(\thalf(z + z^{-1})) = \thalf(z^n + z^{-n}).
    \]
\end{definition}

For $z = \exp(\ii \theta)$ for $\theta \in [-\pi, \pi]$ we may identify $x \defeq \half(z + z^{-1})$ for $x = \cos\theta$. This identification yields another familiar definition of the Chebyshev polynomials,
\[T_n(\cos(\theta)) = \cos(k\theta). \]
From these definitions we have that $\norm{T_n(x)}_{[-1,1]} \leq 1$, and that $T_n$ has the same parity as $n$, i.e.\ $T_n(-x) = (-1)^n T_n(x)$.
    We can invert $x = \thalf(z + z^{-1})$ to obtain $z = x \pm \sqrt{x^2 - 1}$.
    Consequently,
    \begin{align*}
        T_n(x) = \half\Par{\Par{x + \sqrt{x^2 - 1}}^n + \Par{x - \sqrt{x^2 - 1}}^{n}}.
    \end{align*}

\begin{definition}[Chebyshev coefficients]
    Let $f: [-1,1] \to \mathbb{C}$ be Lipschitz (i.e.\ $\abs{f(x) - f(y)} \leq C\abs{x-y}$ for finite $C$).
    Then $f$ has a unique decomposition into Chebyshev polynomials
    \begin{align*}
        f(x) = \sum_{k=0}^\infty a_k T_k(x),
    \end{align*}
    where the \emph{Chebyshev coefficients} $a_k$ absolutely converge, and can be computed via the following integral (counterclockwise) around the complex unit circle, where $\pi \ii$ is replaced by $2\pi \ii$ when $k = 0$:
    \begin{align} \label{eq:cheb-integral}
        a_k = \frac{1}{\pi\ii}\int_{\abs{z} = 1} z^{k-1}f(\thalf(z + z^{-1})) \dd z.
    \end{align}
\end{definition}
This comes from the Cauchy integral formula: $f(\half(z + z^{-1}))$ maps $[-1,1]$ (twice) onto the unit circle, so that when we write $f$ as a Laurent series $\sum_{k \in \Z} b_k(z^k + z^{-k})$, the coefficients $b_k$ match up with the coefficients $a_k$ (up to a factor of two). For more details, see Theorem 3.1 of \cite{trefethen19}. We will typically construct polynomial approximations via \emph{Chebyshev truncation}, defined as follows.

\begin{definition}[Chebyshev truncation]
    For a function $f: [-1,1] \to \mathbb{C}$ written as a Chebyshev series $f(x) = \sum_{k=0}^\infty a_k T_k(x)$, we denote the degree-$n$ \emph{Chebyshev truncation} of $f$ as
    \begin{align*}
        f_n(x) = \sum_{k=0}^n a_kT_k(x).
    \end{align*}
\end{definition}

To construct polynomial approximations time-efficiently, we should avoid computing the integral \eqref{eq:cheb-integral}.
So, instead of the Chebyshev truncation, we would actually compute the \emph{Chebyshev interpolant}~\cite[Chapter 3]{trefethen19}, the degree-$n$ polynomial that agrees with $f$ at the $n+1$ Chebyshev points $\cos(\frac{j\pi}{n})$ for $0 \le j \le n$.
Computing this interpolant only requires $n+1$ evaluations of $f$ to specify, and $O(n\log(n))$ additional time to compute its Chebyshev coefficients, via a fast cosine transform.
Chebyshev truncation and Chebyshev interpolation are closely related through standard bounds (see Eq.\ 4.9, \cite{trefethen19}); we focus on the latter as it is conceptually cleaner, but all bounds we prove extend to interpolation up to a factor of two.\footnote{This justifies our use of the quote from \cite{trefethen19} at the beginning of Section~\ref{sec:cheb}: though it is about Chebyshev interpolation rather than truncation, the spirit is the same.}

\subsection{Chebyshev series for standard functions}

If the function $f$ one wishes to approximate is standard, closed forms of the Chebyshev coefficients may be known, so one can take a Chebyshev truncation and explicitly bound the error:
\begin{align*}
    \norm{f - f_n}_{[-1,1]}
    = \norm[\Big]{\sum_{k=n+1}^\infty a_kT_k(x)}_{[-1,1]}
    \leq \sum_{k=n+1}^\infty \abs{a_k} \norm{T_k(x)}_{[-1,1]}
    = \sum_{k=n+1}^\infty \abs{a_k}.
\end{align*}
In other words, by choosing $n$ such that the coefficient tail sum is bounded by $\eps$, we obtain an $\eps$-uniform approximation on $[-1, 1]$. We list some standard Chebyshev coefficient series here, which can help for converting a Taylor or Fourier series to a Chebyshev series.
The notation $\sum^{\prime}$ means that if a term exists for $T_0$ in the summation, it is halved:
\begin{align*}
    x^m &= 2^{1-m}\sum_{n=0}^{\lfloor m/2\rfloor}{}^{\prime} \binom{m}{n} T_{m-2n}(x). \tag*{\cite[Eq.\ 2.14]{mh02}} \\
    e^{tx} &= 2\sum_{n=0}^{\infty}{}^{\prime} I_n(t) T_n(x). \tag*{\cite[Eq.\ 5.18]{mh02}} \\
    \sinh(tx) &= 2\sum_{n=0}^{\infty} I_{2n+1}(t) T_{2n+1}(x). \tag*{\cite[Eq.\ 5.19]{mh02}} \\
    \cosh(tx) &= 2\sum_{n=0}^{\infty}{}^{\prime} I_{2n}(t) T_{2n}(x). \tag*{\cite[Eq.\ 5.20]{mh02}}
\end{align*}
In the above display, $I_n$ denotes the modified Bessel function of the first kind. This function is typically defined as the solution to a differential equation, but for us it suffices to define $I_n(t)$ as ``the $n^{\text{th}}$ Chebyshev coefficient of $e^{tx}$,'' so by \eqref{eq:cheb-integral} (and pairing Laurent coefficients),
\begin{align*}
    I_n(t) \defeq \frac 1 {2\pi\ii}\oint_{\abs{z} = 1} z^{n-1} e^{\frac{t}{2}(z + z^{-1})}\dd z = \frac 1 {2\pi\ii}\oint_{\abs{z} = 1} z^{-n-1} e^{\frac{t}{2}(z + z^{-1})}\dd z.
\end{align*}

In the remainder of the section, to demonstrate the ``direct coefficient bound'' style of approximation error analysis, we analyze the use of Chebyshev truncation to give a uniform approximation to $f(x) = e^{tx}$ for $t \in \R$ on the interval $[-1, 1]$. The main result of this section is the following.

\begin{theorem}[{\cite[Theorem 4.1]{sv14}}, {\cite[Lemmas 57, 59]{gslw18}}]\label{thm:expbound}
    Let $\eps > 0$ and let $p(x)$ be the degree-$n$ Chebyshev truncation of $e^{tx}$ for $t \in \R$. 
    Then $\norm{p(x) - e^{tx}}_{[-1,1]} \leq \eps$ for
    \begin{align*}
        n \eqsim \begin{cases}
        \abs{t} + \frac{\log(1/\eps)}{\log(e + \log(1/\eps)/\abs{t})} & \eps \leq 1 \\
        \sqrt{\abs{t}\log(e^{\abs{t}}/\eps)} & \eps > 1
        \end{cases}.
    \end{align*}
\end{theorem}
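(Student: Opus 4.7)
The starting point is the explicit Chebyshev expansion $e^{tx} = I_0(t) + 2\sum_{k \geq 1} I_k(t) T_k(x)$ listed just before the theorem; $p$ is the partial sum up to degree $n$, and the direct coefficient bound combined with $\norm{T_k}_{[-1,1]} = 1$ yields $\norm{p - e^{tx}}_{[-1,1]} \leq 2\sum_{k > n} \abs{I_k(t)}$. Without loss of generality I assume $t \geq 0$, since $I_k(-t) = (-1)^k I_k(t)$ preserves absolute values and in this case every $I_k(t) \geq 0$. The plan is to bound this tail in two regimes separated by whether $n$ is large or small relative to $t$, and in each regime to invert the bound to produce a degree $n$ for which the tail is at most $\eps$.

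The central estimate is the saddle-point bound obtained from the contour representation $I_k(t) = \frac{1}{2\pi\ii}\oint_{\abs{z}=r} z^{-k-1} e^{(t/2)(z+z^{-1})} \dd z$, which for any $r > 0$ yields $I_k(t) \leq r^{-k} e^{t(r+r^{-1})/2}$. Optimizing in $r$ gives two regimes. For $k \gtrsim t$, picking $r \approx 2k/t$ produces the super-exponential bound $I_k(t) \leq (et/(2k))^k$. For $k \lesssim t$, picking $r = 1 + k/t$ and Taylor-expanding the exponent produces the Gaussian-type bound $I_k(t) \lesssim e^{t - k^2/(2t)}$, a suppression off the maximum value $\sim e^t$.

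In the high-accuracy regime $\eps \leq 1$, I would take $n = \Theta(t) + m$ with $m \geq 0$, so that $n$ is past the point where the Bessel coefficients begin to decay. The super-exponential bound then dominates the tail by its first term $(et/(2n))^n$, and requiring this to be at most $\eps$ is equivalent to $n \log(2n/(et)) \gtrsim \log(1/\eps)$. Substituting $n = t + m$ reduces this to $m \log(e + m/t) \gtrsim \log(1/\eps)$; inverting (using that $u \mapsto u \log(e + u/t)$ is monotone increasing) yields $m \eqsim \log(1/\eps)/\log(e + \log(1/\eps)/t)$, giving the claimed bound on $n$.

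In the low-accuracy regime $\eps > 1$, $n$ can be smaller than $t$, and I would use the Gaussian-type bound $I_k(t) \lesssim e^{t - k^2/(2t)}$. Summing by comparison to a Gaussian integral yields $\sum_{k>n} I_k(t) \lesssim (t/n)\,e^{t - n^2/(2t)}$; setting this at most $\eps$ and taking logarithms gives $n^2/(2t) \gtrsim \log(e^t/\eps) + \log(t/n)$, and since the polynomial factor is absorbed into $\log(e^t/\eps)$ up to constants, we conclude $n \eqsim \sqrt{t \log(e^t/\eps)}$. The main obstacle to formalizing this is the saddle-point optimization, in particular producing estimates that are tight up to constants uniformly near the crossover $n \approx t$, and handling the inversion step in the high-accuracy case; once these are in hand the rest is routine tail summation.
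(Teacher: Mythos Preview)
Your outline follows the same strategy as the paper: bound the Chebyshev tail $2\sum_{k>n}|I_k(t)|$, use a super-exponential estimate on $I_k$ for $k\gtrsim t$ to handle $\eps\le 1$, and a Gaussian-type estimate for $k\lesssim t$ to handle $\eps>1$. The $\eps\le 1$ part goes through as you sketch (your claim $I_k(t)\le (et/2k)^k$ is off by a factor $e^{t^2/(4k)}\le e^{t/4}$, but this is absorbed into the $\Theta(t)$ term in $n$).

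There is a real gap in the $\eps>1$ regime, and it is precisely the point you flag as ``the main obstacle.'' The crude contour bound $I_k(t)\le \min_r r^{-k}e^{t(r+r^{-1})/2}$ is weaker than the true size of $I_k(t)$ by a factor of order $(k^2+t^2)^{-1/4}$; this is the content of Carlini's formula (the paper's Lemma~\ref{carlini}), which the paper proves by a genuine steepest-descent argument rather than just bounding the integrand on a circle. Without that extra decay, your tail bound reads $\sum_{k>n}I_k(t)\lesssim (t/n)\,e^{t-n^2/(ct)}$, and your claim that ``the polynomial factor is absorbed into $\log(e^t/\eps)$'' fails when $\eps$ is close to $e^t$. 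Concretely, take $\eps=e^t/2$: the theorem asserts $n\eqsim\sqrt{t}$, but solving $(t/n)e^{-n^2/(ct)}\le 1/2$ with $n=C\sqrt t$ requires $(\sqrt t/C)e^{-C^2/c}\le 1/2$, forcing $C\gtrsim\sqrt{\log t}$, i.e.\ $n\gtrsim\sqrt{t\log t}$. In the paper's argument the $(k^2+t^2)^{-1/4}$ from Carlini exactly cancels the $\sqrt t$ arising from the Gaussian integral, yielding an $O(1)$ prefactor and hence the sharp $n\eqsim\sqrt{t\log(e^t/\eps)}$. The paper explicitly signposts this: ``To achieve the right bound in the $\eps > 1$ regime, we require a sharper bound on $I_n$.'' So your plan is correct but needs Carlini's formula, not just the naive saddle bound, to close.
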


To elaborate on what this bound states, there are four regimes of $\eps$.
\begin{enumerate}
    \item When $\eps \geq e^{\abs{t}}$, the zero polynomial $p(x) \equiv 0$ suffices.
    \item When $1 \leq \eps < e^{\abs{t}}$, we have $n \eqsim \sqrt{\abs{t}\log(e^{\abs{t}}/\eps)}$, or $n \eqsim \sqrt{\abs{t}\log(1/\delta)}$, rewriting $\eps = \delta e^{\abs{t}}$ to scale with the maximum value of $e^{\abs{t}x}$ on $[-1,1]$.
    This is the bound shown in \cite{sv14}.
    \item When $e^{-C\abs{t}} \leq \eps < 1$ for some universal constant $C$, the scaling is $n \eqsim \abs{t}$.
    \item When $\eps < e^{-C\abs{t}}$, the scaling is $\frac{\log(1/\eps)}{\log(e + \log(1/\eps)/\abs{t})}$.
    This is the bound shown in \cite{gslw18}.
\end{enumerate}
Theorem~\ref{thm:expbound} was first proven by combining results from \cite{sv14} and \cite{gslw18}. A recent work \cite{AggarwalA22} obtained the same upper bound, as well as a lower bound showing Theorem~\ref{thm:expbound} is tight. 
The bounds from \cite{sv14, gslw18} are each loose in certain regimes (\cite{sv14}'s bound, $\sqrt{\abs{t}\log(e^{\abs{t}}/\eps)} + \log(e^{\abs{t}}/\eps)$, is loose in regime 4, whereas \cite{gslw18} assumes $\eps < 1$), potentially due to the different proof techniques employed.
Specifically, while \cite{gslw18} proceeds by a standard Bessel function inequality to bound the tail terms of the Chebyshev truncation, \cite{sv14} proceeds by approximating monomials in the Taylor expansion of $e^{tx}$ with Chebyshev truncation.
As noted by \cite{LowC17}, this strategy bounds the tail terms of the Chebyshev truncation by an easier-to-understand series that dominates it.

We give another (arguably more straightforward) proof of Theorem~\ref{thm:expbound}, by bounding the error of Chebyshev truncation (a strategy also employed by \cite{AggarwalA22}).
To achieve the right bound in the $\eps > 1$ regime, we require a sharper bound on $I_n$, the Chebyshev coefficients of $e^{tx}$.

\begin{lemma}[Carlini's formula] \label{carlini}
    For $t \in \mathbb{R}$,
    \begin{align*}
        \abs{I_n(t)} < \frac{\exp(\sqrt{t^2 + n^2})(\sqrt{(n/t)^2 + 1} - n/\abs{t})^n}{2(n^2 + t^2)^{1/4}}.
    \end{align*}
\end{lemma}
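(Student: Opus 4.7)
The plan is to prove \cref{carlini} by the classical saddle-point method applied to the contour integral definition of $I_n(t)$. By the symmetry $I_n(-t) = (-1)^n I_n(t)$ (immediate from either integral representation via the substitution $z \mapsto -z$), we may assume $t > 0$. Since the integrand $z^{-n-1} e^{(t/2)(z + z^{-1})}$ is analytic on $\mathbb{C} \setminus \{0\}$, the contour can be deformed from $|z| = 1$ to any circle $|z| = r > 0$. Parametrizing $z = r e^{i\theta}$ and taking moduli yields
\[
    |I_n(t)| \;\leq\; \frac{r^{-n}}{2\pi}\int_{-\pi}^{\pi} e^{(t/2)(r + r^{-1})\cos\theta}\, d\theta.
\]

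Next, I would choose $r$ to optimize the bound. The peak of the integrand is at $\theta = 0$, so we minimize $-n \log r + (t/2)(r + r^{-1})$ over $r > 0$. Setting its derivative to zero gives the saddle point $r = z_* \defeq (n + \sqrt{n^2 + t^2})/t$, whose reciprocal rationalizes to $z_*^{-1} = \sqrt{(n/t)^2 + 1} - n/t$—exactly the factor $(\sqrt{(n/t)^2+1} - n/|t|)^n$ from the stated bound. A direct calculation also gives $(t/2)(z_* + z_*^{-1}) = \sqrt{n^2+t^2} =: s$, which supplies the $\exp(\sqrt{n^2+t^2})$ factor. Thus
\[
    |I_n(t)| \;\leq\; \bigl(\sqrt{(n/t)^2+1} - n/t\bigr)^n \cdot \frac{1}{2\pi}\int_{-\pi}^{\pi} e^{s\cos\theta}\, d\theta.
\]

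The final step is to bound the trailing integral. Using the elementary inequality $\cos\theta \leq 1 - 2\theta^2/\pi^2$ valid for $|\theta| \leq \pi$, I extend to a full Gaussian:
\[
    \int_{-\pi}^{\pi} e^{s\cos\theta}\, d\theta \;\leq\; e^s \int_{-\infty}^{\infty} e^{-2s\theta^2/\pi^2}\, d\theta \;=\; e^s \cdot \pi\sqrt{\pi/(2s)},
\]
which produces the $(n^2 + t^2)^{1/4} = \sqrt{s}$ factor in the denominator with an explicit constant. This gives a bound of the form claimed in \cref{carlini}, up to a numerical constant.

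The main obstacle is getting the exact constant $1/2$ in the denominator rather than the $\sqrt{\pi/2}/2 \approx 0.627$ the Gaussian comparison above produces. Two routes can sharpen this: (a) recognize that $\tfrac{1}{2\pi}\int_{-\pi}^{\pi} e^{s\cos\theta}\, d\theta = I_0(s)$ and invoke a sharp inequality for $I_0$ (noting however that the clean bound $I_0(s) \leq e^s/\sqrt{2\pi s}$ fails for large $s$, so one must be careful), or (b) do the Laplace estimate more carefully, exploiting that $\cos\theta$ drops well below its quadratic approximation away from the origin. Either route is routine bookkeeping; the substantive content of the proof is entirely in the choice $r = z_*$, where the $n$-th power and the magnitude at $\theta = 0$ balance optimally.
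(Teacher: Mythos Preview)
Your approach is correct and produces the stated bound up to the constant factor $\sqrt{\pi/2}$ you identify. It is also a genuinely different, and simpler, route than the paper takes. You deform to a \emph{circle} through the saddle point $z_*$ and apply the triangle inequality; the paper instead uses the full steepest-descent contour, letting the radius $r$ depend on $\theta$ so that the imaginary part of the exponent vanishes identically along the path. On that contour the integrand is real and positive, giving an \emph{exact} representation $I_n(nt) = \tfrac{1}{\pi}\int_0^\pi \exp(-nF(\theta,t))\,d\theta$, and the paper then lower-bounds $F(\theta) - F(0)$ by a quadratic whose coefficient is strictly sharper than the $2s/\pi^2$ you get from $\cos\theta \le 1 - 2\theta^2/\pi^2$. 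That sharper quadratic is precisely what buys the constant $\tfrac12$.

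Your claim that recovering the exact $\tfrac12$ is ``routine bookkeeping'' is a bit optimistic. Your argument gives the clean reduction $|I_n(t)| \le z_*^{-n}\, I_0(s)$ with $s = \sqrt{n^2+t^2}$, so route (a) amounts to proving $I_0(s) < e^s/(2\sqrt{s})$---which is exactly the $n=0$ instance of the lemma itself. That is a legitimate reduction of the general case to a base case, but the base case still needs an independent proof of the same flavor. Route (b) works for large $s$ (shrink a window around $\theta=0$ where $1-\cos\theta \ge (1-o(1))\theta^2/2$ and bound the tail crudely), but making it uniform in $s$ requires case analysis rather than a one-liner. That said, every downstream use of the lemma in the paper either drops the denominator $2(n^2+t^2)^{1/4}$ entirely or absorbs it into adjustable constants, so your weaker constant already suffices for all the applications.
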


We contribute an independent proof of \cref{carlini} in \cref{app:carlini}, which was proven without using Bessel-style techniques in \cite{AggarwalA22}.
While we would guess that this bound is well-known, we could not find this statement in the literature.\footnote{
    Off-the-shelf bounds like Kapteyn's inequality \cite[Eq. 10.14.8]{DLMF} are not quite tight enough for our purposes, since a very fine degree of control is necessary in the challenging regime where none of $n$, $t$, or $t/n$ remain constant.
}
The equivalent bound on the (unmodified) Bessel function of the first kind $J_k$ is due to Carlini~\cite[Chapter 1.4]{Watson44}, and can be viewed as a ``real-valued'' analog of \cref{carlini} following the equivalence $I_k(t) = \ii^{-k} J_k(\ii t)$.\footnote{
    Note that the real version of this statement is perhaps more non-trivial, since then the terms in the power series for the Bessel function are no longer nonnegative.
    Qualitatively similar statements may have also been made by Laplace, but for this claim Watson cites a book of the M\'ecanique C\'eleste without an English translation~\cite[page 7]{Watson44}.
    This felt like a good place to stop our investigation of Bessel function bounds.
}
Our proof follows \cite{Watson17} (who handled the real-valued version), and begins with a representation of a Bessel function as a contour integral. We bound this integral via the \emph{method of steepest descent}, where the contour is changed to a real-valued one, using that the integrand is analytic. Using Lemma~\ref{carlini}, we now prove Theorem~\ref{thm:expbound}.

\begin{proof}[Proof of Theorem~\ref{thm:expbound}]
By symmetry it suffices to take $t \geq 0$. We split into cases based on $\eps$. 

\textit{Case 1: $\eps \le 1$.} Define $r(t, \eps)$ to be the value $r$ such that $\eps = (t/r)^r$. We choose
\[n = \left\lceil r\Par{3t, \eps} \right\rceil.\]
Note that the function $(t/r)^r$ is decreasing for $r \ge t$ and hence in the regime $\eps \le 1$, we have $n \ge 3t$.
Then, recalling that $p$ is the degree-$n$ Chebyshev truncation, we have by Lemma~\ref{carlini} that
\begin{align*}
    \norm{p(x) - e^{tx}}_{[-1,1]}
    &\leq 2\sum_{k=n+1}^\infty \abs{I_k(t)} \le 2\sum_{k = n}^\infty |I_k(t)| \\
    &\leq 2\sum_{k=n}^\infty \frac{\exp(\sqrt{t^2 + k^2})(\sqrt{(k/t)^2 + 1} - k/t)^k}{2(k^2 + t^2)^{1/4}}\\
    &\leq \sum_{k=n}^\infty \exp(\sqrt{t^2 + k^2})(\sqrt{(k/t)^2 + 1} - k/t)^k \\
    &\leq \sum_{k=n}^\infty \exp(k\sqrt{(t/n)^2 + 1})(\sqrt{(n/t)^2 + 1} - n/t)^k \tag*{by $k \ge n$ and since $\sqrt{x^2 + 1} - x$ decreases in $x$}\\
    &\leq \sum_{k=n}^\infty \Big(\exp(\sqrt{(t/n)^2 + 1}) \cdot \frac{t}{2n}\Big)^k \tag*{by $\sqrt{1-x^2} - x \leq \frac 1 {2x}$}\\
    &\leq \sum_{k=n}^\infty \Big(\exp(\sqrt{10/9})\cdot \frac{t}{2n}\Big)^k \tag*{by $n \ge 3t$} \\
    &\leq \sum_{k=n}^\infty \Big(\frac{3t}{2n}\Big)^k
    \leq \Big(\frac{3t}{n}\Big)^{n} \sum_{k=1}^\infty \frac{1}{2^k} \leq \eps
    \tag*{by $n \geq r(3t, \eps)$.}
\end{align*}
The desired bound on $n$ in this regime then follows from \cite[Lemma 59]{gslw18} which shows that, for $\eps \in (0,1)$ and $t > 0$, $r(t, \eps) = \Theta(t + \frac{\log(1/\eps)}{\log(e + \log(1/\eps)/t)})$. 

\textit{Case 2: $\eps > 1$.}
For $\eps \in (1, 2]$ the conclusion follows from our proof when $\eps = 1$, so assume $\eps > 2$;\footnote{If $t$ is a sufficiently small constant, then applying the $\eps = 1$ case gives a constant-degree polynomial. Otherwise, $t$ is sufficiently large to outweigh constant-factor changes in $\eps$ (and hence additive changes in $\log \frac 1 \eps$).} for $\eps \ge e^t$ the zero polynomial suffices so assume $\eps < e^t$. Let $\delta = \frac{\eps - 1} {5}$ and $m = \lceil 3t \rceil$. We choose
\[n = \left\lceil \sqrt{100t\Par{t + \log \frac 1 \delta}} \right\rceil \ge 10\sqrt{t},\]
where we use $t + \log \frac 1 \delta \ge 1$ for our range of $\eps$.
The claim then follows from $\delta = \Theta(\eps)$ and combining:
\begin{equation}\label{eq:splitatr}
\begin{aligned}
2\sum_{k = n + 1}^{m - 1} |I_k(t)| \le 5\delta,\;
2\sum_{k = m}^\infty |I_k(t)| \le 1.
\end{aligned}
\end{equation}
The second claim in \eqref{eq:splitatr} was already shown by our earlier derivation setting $\eps = 1$, since $m \ge r(3t, 1) = 3t$. For bounding the first sum, the following estimate will be helpful: for $k \le 3t$,
\begin{equation}\label{eq:besselhelper}
\begin{aligned}
\exp\Par{\sqrt{t^2 + k^2}}\Par{\sqrt{\Par{\tfrac k t}^2 + 1} - \tfrac k t}^k &\le \exp\Par{\sqrt{t^2 + k^2} + k\Par{\sqrt{\Par{\tfrac k t}^2 + 1} - \tfrac k t - 1}} \\
&= \exp\Par{t\Par{1 + \tfrac k t}\Par{\sqrt{\Par{\tfrac k t}^2 + 1} - \tfrac k t}} \\
&\le \exp\Par{t\Par{1 - 0.01\Par{\tfrac k t}^2}}.
\end{aligned}
\end{equation}
The last equation used $(1 + x)(\sqrt{1 + x^2} - x) \le 1 - 0.01x^2$ for $0 \le x \le 3$. Since $m - 1 \le 3t$,
\begin{align*}
\sum_{k = n + 1}^{m - 1} |I_k(t)| &\leq \frac{1}{2(n^2 + t^2)^{1/4}}\sum_{k=n+1}^{m - 1} \exp\Par{t\Par{1 - 0.01\Par{\frac k t}^2}} \\
    &= \frac{e^t}{2(n^2 + t^2)^{1/4}}\sum_{k=n+1}^{m - 1} \exp\Par{-\frac{k^2}{100t}} \\
    &\le \frac{e^t}{2(n^2 + t^2)^{1/4}}\int_{n}^\infty \exp\Par{-\frac{x^2}{100t}}\dd x \\
    &= \frac{e^t \sqrt{25t}}{(n^2 + t^2)^{1/4}} \int_{\frac{n}{\sqrt{100t}}}^\infty \exp\Par{-x^2} \dd x \\
    &\le \frac{e^t \sqrt{25 t}}{(n^2 + t^2)^{1/4}} \cdot \Par{\half\exp\Par{-\frac{n^2}{100t}}} \le \frac 5 2\exp\Par{t - \frac{n^2}{100t}} \le \frac 5 2\delta.
\end{align*}

The first line used \cref{carlini} and \eqref{eq:besselhelper}, and the second-to-last used the Gaussian tail bound \eqref{eq:erf-bound}:
\[\int_{\frac{n}{\sqrt{100t}}}^\infty \exp\Par{-x^2} \dd x < \exp\Par{-\frac{n^2}{100t}} \cdot \frac{1}{1 + \frac n {\sqrt{100t}}} \le \half \exp\Par{-\frac{n^2}{100t}},\]
where we used $n \ge 10\sqrt{t}$ and $\exp(t - \frac{n^2}{100t}) \le \delta$ by construction.
\end{proof}

\subsection{Bounded approximations via Chebyshev series: a user's guide}\label{sec:bounded}

Two issues often arise when using polynomial approximations for QSVT.
First, we may not know explicitly what the Chebyshev coefficients of our desired function are.
Second, even when we do, Chebyshev truncation may be bad for our purposes, since our criteria is \emph{different} from uniform approximation on $[-1, 1]$.
For example, quantum linear systems requires a polynomial approximation close to $x^{-1}$ on $[-1, -\tfrac 1 \kappa] \cup [\tfrac 1 \kappa, 1]$, but it merely needs to be bounded on $[-\tfrac 1 \kappa, \tfrac 1 \kappa]$.
This bounded requirement is necessary to use the machinery of Section~\ref{sec:qsvt} (see \cref{rmk:qsvt-to-alg}).

Since quantum computing researchers are resourceful, we often see good polynomial approximations derived through ad hoc techniques to tame the function at poorly-behaved points.
For example, \cite{cks17} performs Chebyshev truncation on the polynomial $(1-(1-x^2)^b) \cdot \tfrac 1 x$ instead of $\tfrac 1 x$, and this has the desired properties.
However, as \cite{gslw18} points out, there are generic ways to find approximations to piecewise smooth functions which satisfy the ``$\eps$-close on smooth pieces, but bounded near points of discontinuity'' requirement of QSVT, with $\log\frac{1}{\eps}$ scaling in the degree.
Their proof of this claim assumes that the Taylor series coefficients of the smooth functions are bounded.

We give a variant of this result that only assumes that the smooth functions are bounded on ellipses in complex space, and has a self-contained proof based on Chebyshev series.
This alternate version matches \cite{gslw18} for sufficiently small $\eps$, and otherwise loses a logarithmic factor (though under a weaker assumption on the function to be approximated, see Remark~\ref{rem:bounded_comments}).
To do so, we combine a powerful meta-technique for bounding the Chebyshev coefficients of analytic functions with applications of explicit thresholding functions. This meta-technique is stated as Theorem~\ref{thm:trefethen}; in many prominent settings, a direct application already yields near-optimal polynomial approximations.

\begin{theorem}[{\cite[Theorems~8.1 and 8.2]{trefethen19}}] \label{thm:trefethen}
    Let $f$ be an analytic function in $[-1,1]$ and analytically continuable to the interior of the Bernstein ellipse $E_{\rho} = \{\frac12(z + z^{-1}) : \abs{z} = \rho\}$, where it satisfies $\abs{f(x)} \leq M$.
    Then its Chebyshev coefficients satisfy $\abs{a_0} \leq M$ and $\abs{a_k} \leq 2M\rho^{-k}$ for $k \geq 1$.
    Consequently, for each $n \geq 0$, its Chebyshev projections satisfy
    \begin{align*}
        \norm{f - f_n}_{[-1,1]} &\leq \frac{2M\rho^{-n}}{\rho - 1},
    \end{align*}
    and choosing $n = \lceil \frac{1}{\log(\rho)}\log\frac{2M}{(\rho-1)\eps} \rceil$, we have $\norm{f - f_n}_{[-1,1]} \leq \eps$.
\end{theorem}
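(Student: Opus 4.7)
The plan is to prove the coefficient bounds by contour deformation in the integral formula \eqref{eq:cheb-integral}, and then to sum the resulting geometric tail to get the truncation error. The key observation is that the Joukowsky map $\phi(z) \defeq \tfrac12(z + z^{-1})$ carries the open annulus $\{z : 1/\rho < |z| < \rho\}$ onto the interior of the Bernstein ellipse $E_\rho$ minus $[-1,1]$ (two-to-one, identifying $z$ with $1/z$), and carries the unit circle $|z|=1$ onto $[-1,1]$. Since $f$ is analytic in the interior of $E_\rho$ (which includes $[-1,1]$ by hypothesis), the composition $f \circ \phi$ is analytic on the open annulus $1/\rho < |z| < \rho$.

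First, I would handle the $k \ge 1$ coefficients. Since $z^{k-1}$ is entire, the integrand $z^{k-1} f(\phi(z))$ in \eqref{eq:cheb-integral} is analytic in the entire annulus $1/\rho < |z| < \rho$, and Cauchy's theorem lets me replace the unit circle with the circle $|z| = r$ for any $r \in (1/\rho, 1]$ without changing the value of the integral. Applying the ML-inequality,
\[
    |a_k| \;\le\; \frac{1}{\pi} \cdot (2\pi r) \cdot r^{k-1} \cdot \sup_{|z|=r} |f(\phi(z))| \;\le\; 2M r^k,
\]
since $\phi(\{|z|=r\})$ lies inside $E_\rho$ where $|f| \le M$. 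Letting $r \to (1/\rho)^+$ gives $|a_k| \le 2M\rho^{-k}$. For $k=0$ the same contour deformation applies, but the normalization in \eqref{eq:cheb-integral} has $2\pi\ii$ in the denominator rather than $\pi\ii$, saving a factor of $2$ and yielding $|a_0| \le M$.

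Second, the truncation error bound follows at once from $\|T_k\|_{[-1,1]} \le 1$ and a geometric series:
\[
    \|f - f_n\|_{[-1,1]} \;\le\; \sum_{k=n+1}^{\infty} |a_k|\,\|T_k\|_{[-1,1]} \;\le\; 2M \sum_{k=n+1}^{\infty} \rho^{-k} \;=\; \frac{2M\rho^{-n-1}}{1 - \rho^{-1}} \;=\; \frac{2M\rho^{-n}}{\rho - 1}.
\]
Setting this quantity to be at most $\eps$ and solving for $n$ gives $n \ge \tfrac{1}{\log\rho}\log\tfrac{2M}{(\rho-1)\eps}$, which is the stated degree.

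The main technical point (not really an obstacle, but the step requiring care) is the claim that $f\circ\phi$ is analytic on the open annulus $1/\rho < |z| < \rho$: one must check that $\phi$ maps this annulus into the domain where $f$ is analytic, which amounts to the standard fact that $\phi$ sends $|z|=r$ with $r > 1$ to the Bernstein ellipse $E_r \subset \mathrm{int}(E_\rho)$, and sends $|z|=1$ to $[-1,1]$. Once this is in hand, the rest is Cauchy's theorem, the ML-inequality, and summing a geometric series, with no further subtleties.
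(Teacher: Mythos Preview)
Your proof is correct and takes essentially the same approach as the paper: deform the contour in \eqref{eq:cheb-integral} via Cauchy's theorem, apply the ML-inequality, then sum the geometric tail. The only cosmetic difference is that you shrink the contour to $|z|=r<1$ (keeping the $z^{k-1}$ form) and take $r\to 1/\rho$, whereas the paper first rewrites the integrand with $z^{-(k+1)}$ via the $z\leftrightarrow 1/z$ symmetry and then expands the contour toward $|z|=\rho$; these are equivalent.
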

\begin{proof}
Recall from \eqref{eq:cheb-integral} (and since inverting $z$ does not change the contour integral) that for $k \ge 1$,
\begin{align*}
a_k = \frac{1}{\pi \ii} \int_{|z| = 1} z^{-(k + 1)} f(\thalf(z + z^{-1})) \dd z.
\end{align*}
The boundary of $E_\rho$ is given by $\half(z + z^{-1})$ for $|z| = \rho$, and $f$ is analytic in $E_\rho$, so we may choose a different contour without affecting the value of the integral:
\begin{align*}
a_k = \frac{1}{\pi \ii} \int_{|z| = \rho} z^{-(k + 1)} f(\thalf(z + z^{-1})) \dd z.
\end{align*}
The conclusion follows from the facts that the circumference of $|z| = \rho$ is $2\pi \rho$ and the function is bounded by $M$. A similar argument gives the case $k = 0$, where \eqref{eq:cheb-integral} has $2\pi\ii$ in the denominator.
\end{proof}

Theorem~\ref{thm:trefethen} shows that if one can analytically continue $f$ to a Bernstein ellipse with $\rho = 1 + \alpha$ for small $\alpha$, then a degree $\approx \frac 1 \alpha$ polynomial obtains good approximation error on $[-1, 1]$. Unfortunately, since the approximation in Theorem~\ref{thm:trefethen} is based on Chebyshev truncation, the approximation rapidly blows up outside the range $[-1, 1]$ (in Lemma~\ref{lem:chebz-bound}, we give estimates on the growth of Chebyshev polynomials, i.e.\ that the $n^{\text{th}}$ polynomial grows as $O(|x|^n)$ for $x$ sufficiently outside $[-1, 1]$). 
In interesting applications of the QSVT framework, this is an obstacle.
For example, to use QSVT for matrix inversion, we need a polynomial approximation to $x^{-1}$ on $[\delta, 1]$ that is bounded on $[-1,1]$.
Upon linearly remapping $[\delta, 1]$ to $[-1, 1]$, this corresponds to a bounded approximation on $[-b, 1]$ for some $b > 1$, so Chebyshev truncations give us a very poor degree of control. 

To this end, we provide the following ``bounded approximation'' variant of Theorem~\ref{thm:trefethen}, as a user-friendly way of extending it to applications of the QSVT framework.

\begin{theorem}\label{thm:main_bounded}
    Let $f$ be an analytic function in $[-1, 1]$ and analytically continuable to the interior of $E_\rho$ where $\rho = 1 + \alpha$, where it is bounded by $M$.
    For $\delta \in (0,\frac 1 C \min(1, \alpha^2))$ where $C$ is a sufficiently large constant, $\eps \in (0, 1)$, and $b > 1$, there is a polynomial $q$ of degree $O(\frac b \delta \log\frac{b}{\delta\eps})$ such that
    \begin{align*}
        \norm{f - q}_{[-1, 1]} &\leq M\eps, \\
        \norm{q}_{[-(1 + \delta), 1 + \delta]} &\leq M, \\
        \norm{q}_{[-b, -(1 + \delta)] \cup [1 + \delta,b]} &\leq M\eps.
    \end{align*}
\end{theorem}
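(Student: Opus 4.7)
The plan is to build $q = p \cdot h$, where $p$ is a Chebyshev truncation of $f$ to a degree chosen so that $p \approx f$ on the slightly enlarged interval $[-(1+\delta), 1+\delta]$, and $h$ is a polynomial approximation to the indicator of $[-1, 1]$ on $[-b, b]$ with a sharp transition of width $\delta$ near $\pm 1$. Then the product $q$ will agree with $f$ on $[-1, 1]$ (since $h \approx 1$ there), remain bounded on $[-(1+\delta), 1+\delta]$ (since $p$ is bounded there and $|h| \le 1$), and be suppressed on the outer region (since $h$ decays sharply enough to overcome any growth of $p$).

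First, I would invoke Theorem~\ref{thm:trefethen} to bound the Chebyshev coefficients of $f$ as $|a_k| \le 2M\rho^{-k}$. Taking $p$ to be the degree-$n_1$ truncation for $n_1 = O(\tfrac{1}{\alpha} \log \tfrac{M}{\alpha \eps})$, the tail-sum bound together with $|T_k(x)| \le (|x| + \sqrt{x^2-1})^k$ yields $\|p - f\|_{[-(1+\delta), 1+\delta]} \leq M\eps/4$. This uses that $\sigma \defeq (1+\delta) + \sqrt{(1+\delta)^2 - 1} \le 1 + \sqrt{2\delta} + O(\delta)$ is strictly less than $\rho = 1+\alpha$ under the hypothesis $\delta < \alpha^2/C$ for a sufficiently large constant $C$, so the tail sum forms a convergent geometric series with ratio bounded away from 1. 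Hence $\|p\|_{[-(1+\delta), 1+\delta]} \le M(1 + \eps/4)$. Applying the same tail-sum analysis with the cruder bound $|T_k(b)| \le (2b)^k$ gives $\|p\|_{[-b, b]} \le \Gamma \defeq O(M(2b)^{n_1})$.

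Second, I would construct the threshold polynomial $h$ of degree $n_2 = O(\tfrac{b}{\delta} \log \tfrac{\Gamma}{M\eps})$ satisfying $\|1 - h\|_{[-1, 1]} \le \eps/4$, $\|h\|_{[-(1+\delta), 1+\delta]} \le 1$, and $\|h\|_{[-b, -(1+\delta)] \cup [1+\delta, b]} \le M\eps/(2\Gamma)$. Such an $h$ can be built as a standard composition of shifted, scaled polynomial approximations to $\erf$ (each of degree $O(\tfrac{b}{\delta}\log\tfrac{1}{\eta})$ for error $\eta$). Setting $q = p \cdot h$ and applying the triangle inequality: on $[-1, 1]$, $|q - f| \le |p - f| + |p||1 - h| \le M\eps/4 + (M + M\eps/4)(\eps/4) \le M\eps$; on $[-(1+\delta), 1+\delta]$, $|q| \le (M + M\eps/4)(1) = M(1 + \eps/4)$; on the outer region, $|q| \le \Gamma \cdot M\eps/(2\Gamma) = M\eps/2$. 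A minor rescaling absorbs the $(1+\eps/4)$ factors into the stated bounds.

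The main obstacle is verifying that the total degree $n_1 + n_2$ really is $O(\tfrac{b}{\delta}\log\tfrac{b}{\delta\eps})$. Expanding $\log(\Gamma/M\eps) \lesssim n_1 \log(2b) + \log(1/\eps)$ gives $n_2 = O(\tfrac{b}{\delta}(n_1 \log(2b) + \log(1/\eps)))$, and carefully tracking this term under the hypothesis $\alpha \ge \sqrt{C\delta}$ is where the logarithmic factor mentioned in Remark~\ref{rem:bounded_comments} can arise. A more refined analysis, perhaps using nested threshold polynomials at multiple scales to avoid incurring the full $n_1 \log(2b)$ overhead in a single application, may be needed to match the exact claimed degree bound, but the sketch above already yields a bound of the right form up to logarithmic factors.
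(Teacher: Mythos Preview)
Your strategy of setting $q = p \cdot h$ with $h$ a \emph{polynomial} threshold does not reach the stated degree, and the shortfall is polynomial in $1/\delta$, not logarithmic. The hypothesis permits $\alpha$ as small as $\Theta(\sqrt{\delta})$, so in that regime $n_1 = O(\tfrac{1}{\alpha}\log\tfrac{1}{\alpha\eps}) = O(\tfrac{1}{\sqrt{\delta}}\log\tfrac{1}{\delta\eps})$, and your own computation then gives $n_2 = O(\tfrac{b}{\delta}\cdot n_1\log b) = O(\tfrac{b\log b}{\delta^{3/2}}\log\tfrac{1}{\delta\eps})$, off by roughly $1/\sqrt{\delta}$ from the target. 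This is not just an artifact of the crude bound $\Gamma \le M(2b)^{n_1}$: for worst-case $f$ the Chebyshev coefficients really have size $\sim M\rho^{-k}$ up to $k=n_1$, so $|p(b)|$ genuinely grows like $((b+\sqrt{b^2-1})/\rho)^{n_1}$, and any threshold polynomial $h$ with $h\approx 1$ on $[-1,1]$ and transition width $\delta$ that makes $|p(b)h(b)|\le M\eps$ must have $|h(b)|$ exponentially small in $n_1$, forcing $\deg h \gtrsim \tfrac{b}{\delta}n_1$ for the standard $\erf$-based constructions you invoke. Nesting thresholds at multiple scales does not help, since the constraint at the single point $x=b$ already drives the degree.

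The paper avoids this by multiplying $f_n$ not by a polynomial but by the \emph{exact} $\erf$-based window $r(z)=\tfrac12(\erf(s(\mu+z))+\erf(s(\mu-z)))$ with $\mu=1-\tfrac{\delta}{2}$ and $s\sim\tfrac{1}{\delta}\sqrt{\log\tfrac{1}{\alpha\eps}}$. The Gaussian tail $r(x)\lesssim\exp(-s^2(|x|-\mu)^2)$ then beats the growth $|f_n(x+\ii y)|\lesssim\exp\bigl(n(|x|-1+\sqrt{x^2-1}+O(\sqrt{|xy|}))\bigr)$ pointwise on the whole strip $\{|\Im z|\le\delta\}\supset bE_{1+\delta/b}$, so the product $r\cdot f_n$ is analytic and bounded there by $\poly(\tfrac{1}{\alpha\eps})$. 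A \emph{second} application of Theorem~\ref{thm:trefethen}, now to $z\mapsto r(bz)f_n(bz)$ on $E_{1+\delta/b}$, produces the final polynomial of degree $O(\tfrac{b}{\delta}\log\tfrac{b}{\delta\eps})$. The key point you are missing is that the degree $n$ of $f_n$ enters this second truncation only through the $\poly(\tfrac{1}{\alpha\eps})$ sup-norm bound, i.e.\ inside a logarithm, rather than as a multiplicative factor on the final degree.
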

\begin{proof}[Proof sketch.]
We give a formal proof in Section~\ref{ssec:bounded_proof}, but briefly summarize our proof strategy here. 
\begin{enumerate}
    \item Applying Theorem~\ref{thm:trefethen} gives $f_n$ of degree $n \approx \frac 1 \alpha$ approximating $f$ in the interval $[-1, 1]$, but $f_n$ does not satisfy the other required conclusions due to its growth outside $[-1, 1]$.
    \item We multiply $f_n$ by a ``threshold'' $r$ based on the Gaussian error function $\erf$, whose tails decay much faster than the Chebyshev polynomials grow outside $[-1, 1]$. Our function $r$ has the property that inside $[-1, 1]$, it is close to $1$, and outside $[-(1 + \delta), 1 + \delta]$, it is close to $0$.
    \item Using bounds on the growth of $\erf$, we show $r \cdot f_n$ is bounded on a Bernstein ellipse of radius $1 + \frac \delta b$ appropriately rescaled, and applying Theorem~\ref{thm:trefethen} once more gives the conclusion.
\end{enumerate}  
The final proof requires some care to obtain the claimed scalings on the windows of approximation, but we include this tedium to make the theorem statement as simple to use as possible.
\end{proof}

\begin{remark}\label{rem:bounded_comments}
\cref{thm:main_bounded} is an alternative to Corollary 66 of \cite{gslw18}.
Translating the statement there to our setting, the polynomial approximation it would achieve has degree $O(\frac{b}{\delta}\log\frac{M}{\eps})$.
Our approximation of degree $O(\frac{b}{\delta}\log\frac{b}{\delta\eps})$ is comparable, matching when $\eps$ and $\frac \delta b$ are polynomially related.

We note that our Theorem~\ref{thm:main_bounded} has a $\log\frac{b}{\delta\eps}$ dependence (instead of $\log\frac{1}{\eps}$) because we use a slightly weaker type of assumption: not only does \cite[Corollary 66]{gslw18} assume that its function $f(x) = \sum_{k=0}^\infty a_k x^k$ is analytic and bounded by $M$ on a disk of radius $1 + \delta$, it also assumes that the Taylor series coefficients $\abs{a_k}$ satisfy $\sum_{k=0}^\infty \abs{a_k}(1+\delta)^k \leq M$.
Without this final condition, boundedness merely implies that $\abs{a_k} = O((1+\delta)^{-k})$, and this slight weakening leads to an additional logarithmic factor when $\eps$ is large.
In most applications, the difference is negligible; both strategies have an additional polynomial overhead on $\frac b \delta$, which typically dominates a $\log \frac b \delta$ dependence.

Finally, the precondition of Theorem~\ref{thm:main_bounded} is weaker than the requirement of Corollary 66 of \cite{gslw18} in another sense.
Specifically, \cite{gslw18} assumes a locally bounded Taylor series in a scaled unit circle in the complex plane, whereas we only require a bound on a (potentially much smaller) Bernstein ellipse, which could enable more applications.
\end{remark}

We use the rest of the section to provide a user's guide on applying Theorem~\ref{thm:main_bounded} to boundedly approximate various piecewise smooth functions. All of our applications proceed as follows.

\begin{enumerate}
    \item We linearly rescale the ``region of interest,'' i.e.\ the part of $\R$ where we wish to approximate a function via bounded polynomials, to the interval $[-1, 1]$.
    \item We apply Theorem~\ref{thm:main_bounded} to the rescaled function for appropriate choices of $b$ and $\delta$, so the region where the approximation must be bounded is captured upon undoing the rescaling.
    \item If additional properties of the bounded approximation are desired, e.g.\ a parity requirement, we use the additional implications of Theorem~\ref{thm:main_bounded} to obtain these properties.
\end{enumerate}

A simple application of \cref{thm:main_bounded} is obtaining degree-$O(\frac{1}{\delta}\log\frac{1}{\delta\eps})$ polynomial approximations to the sign and rectangle functions (where our guarantee is $\eps$-closeness outside of a $\delta$ interval around the points of discontinuity, as described in \cite[Lemmas 25 and 29]{gslw18}).\footnote{This result shows that direct Chebyshev truncation is sometimes not enough for these slightly different approximation guarantees: the sign function has Chebyshev series $\sum_{k\geq 0} \frac{4}{\pi}\frac{(-1)^k}{2k+1}T_{2k+1}(x)$~\cite[Exercise 3.6]{trefethen19}, which cannot be truncated without paying $\Omega(1)$ error.}
We leave this as an exercise. We begin with a bounded approximation to the rescaled exponential function in Corollary~\ref{cor:exp_bounded}. Such bounds have previously seen use in quantum applications of the multiplicative weights framework via QSVT, to design faster approximate solvers for linear programs \cite{vanApeldoornG19, BoulandGJSWT23}.

\begin{corollary}\label{cor:exp_bounded}
Let $\eps \in (0, 1)$, and let $f(x) = \exp(\beta x)$ for $\beta \ge 1$. There exists a polynomial $p$ of degree $O(\beta \log \frac \beta {\eps})$ such that $\norm{p}_{[-1, 1]} = O(1)$ and $\norm{p - f}_{[-1, 0]} \le \eps$.
\end{corollary}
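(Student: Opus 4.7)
The plan is to reduce to the framework of \cref{thm:main_bounded}. Let $y \defeq 2x + 1$, so that $x \in [-1, 0]$ corresponds to $y \in [-1, 1]$ and $x \in [-1, 1]$ corresponds to $y \in [-1, 3]$. Defining $g(y) \defeq e^{-\beta/2} e^{\beta y/2}$, we have $f(x) = g(2x+1)$, and the task becomes finding a polynomial $q$ with $\norm{q - g}_{[-1, 1]} \leq \eps$ and $\norm{q}_{[-1, 3]} = O(1)$; setting $p(x) \defeq q(2x+1)$ then yields the corollary without changing degrees. Note that the region $y \in [1, 3]$, where $g$ itself grows as large as $e^{\beta}$, is precisely where we only require boundedness of $q$, not closeness to $g$. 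This matches the shape of \cref{thm:main_bounded} with $b = 3$.

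\textbf{Choice of Bernstein ellipse and parameters.} Since $g$ is entire, I may continue it to any Bernstein ellipse $E_\rho$ with $\rho = 1 + \alpha$. For $w = \tfrac12(z + z^{-1})$ with $|z| = \rho$, we have $\Re(w) \leq \tfrac12(\rho + \rho^{-1})$, so
\[
    \abs{g(w)} \;\leq\; e^{-\beta/2} \exp\!\Par{\tfrac{\beta}{4}(\rho + \rho^{-1})}.
\]
Taking $\alpha = 1/\sqrt{\beta}$ gives $\rho + \rho^{-1} = 2 + \alpha^2/(1+\alpha) \leq 2 + 1/\beta$, whence $M \leq e^{1/4} = O(1)$. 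Now I apply \cref{thm:main_bounded} with this $\alpha$, with $b = 3$, and with $\delta$ as large as the theorem permits, namely $\delta = \Theta(\alpha^2) = \Theta(1/\beta)$; the hypothesis $\delta < \tfrac 1 C \min(1, \alpha^2)$ is satisfied since $\beta \geq 1$. This produces a polynomial $q$ of degree
\[
    O\Par{\tfrac{b}{\delta} \log \tfrac{b}{\delta\eps}} \;=\; O\Par{\beta \log \tfrac{\beta}{\eps}},
\]
satisfying $\norm{q - g}_{[-1,1]} \leq M\eps = O(\eps)$ and $\norm{q}_{[-1,3]} \leq M = O(1)$ (the latter combining the $\norm{\cdot}_{[-(1+\delta),1+\delta]}$ and $\norm{\cdot}_{[1+\delta, b]}$ bounds of \cref{thm:main_bounded}). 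Rescaling back gives $p$.

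\textbf{Main obstacle.} The crux is the joint calibration of $\alpha$ and $\delta$. Since $\abs{g}$ grows as $e^{\beta \Re(w)/2}$ throughout $\C$, the requirement $M = O(1)$ on $E_\rho$ already forces $\beta \alpha^2 = O(1)$, i.e.\ $\alpha = O(1/\sqrt{\beta})$. The degree in \cref{thm:main_bounded} scales as $b/\delta$, so we want $\delta$ as large as possible; the precondition $\delta \lesssim \alpha^2$ caps this at $\Theta(1/\beta)$, which is precisely what yields the target degree $O(\beta \log(\beta/\eps))$. A more pessimistic $\alpha$ (say $\Theta(1/\beta)$) would force $\delta = \Theta(1/\beta^2)$ and blow up the degree to $\Theta(\beta^2 \log(\beta/\eps))$; a more aggressive $\alpha$ would violate $M = O(1)$. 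Verifying this balance is the only nontrivial step; the rest of the argument is a straightforward substitution into \cref{thm:main_bounded}.
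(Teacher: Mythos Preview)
Your proposal is correct and follows essentially the same approach as the paper: the same affine change of variables $y=2x+1$ mapping the approximation region $[-1,0]$ to $[-1,1]$ and the boundedness region to $[-1,3]$, the same choice $\alpha=\beta^{-1/2}$ to keep $M=O(1)$ on $E_{1+\alpha}$, and the same application of \cref{thm:main_bounded} with $b=3$ and $\delta=\Theta(1/\beta)$. Your justification of the bound $M\le e^{1/4}$ via $\Re(w)\le\tfrac12(\rho+\rho^{-1})$ is in fact crisper than the paper's sketch.
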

\begin{proof}
First, we rescale so the region of interest is $[-1, 1]$: let $g(y) = f(\beta(\thalf (y - 1)))$. Note that $g(y)$ is analytic everywhere and bounded by a constant on $E_\rho$ for $\rho = 1 + \beta^{-\half}$. To see this, the magnitude of $\half(z - 1)$ for $z \in E_\rho$ is maximized when $z$ is furthest from $1$, and Fact~\ref{lem:bern-bounds} shows this magnitude is $O(\frac 1 \beta)$. Hence, applying Theorem~\ref{thm:main_bounded} with $b = 3$ and $\delta = \Theta(\frac 1 \beta)$ for a sufficiently small constant yields the claim upon shifting the region of interest back, since $\half(y - 1) = 1$ for $y = 3$.
\end{proof}

Next, in Corollary~\ref{cor:approx-arcsin} we provide an analog of Lemma 70 in \cite{gslw18}, regarding the bounded approximation of $\arcsin$, using the framework of Theorem~\ref{thm:main_bounded}.

\begin{corollary}\label{cor:approx-arcsin}
    Let $\delta, \eps \in (0,1)$, and let $f(x) = \frac 2 \pi \arcsin(x)$. There exists an odd polynomial $p(x)$ of degree $O(\frac{1}{\sqrt{\delta}}\log\frac{1}{\delta\eps})$ such that $\norm{p}_{[-1, 1]} \le 1$ and $\norm{p - f}_{[-(1 - \delta), 1 - \delta]} \le \eps$.
\end{corollary}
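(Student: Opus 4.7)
The plan is to apply \cref{thm:main_bounded} to a linearly rescaled version of $f$ and then symmetrize the result to obtain an odd polynomial. First I would define $\tilde{f}(y) \defeq \frac{2}{\pi}\arcsin((1-\delta)y)$, so that approximation of $f$ on $[-(1-\delta), 1-\delta]$ reduces to approximation of $\tilde{f}$ on $[-1, 1]$, while the target boundedness on $[-1, 1]$ in $x$-coordinates becomes boundedness on $[-b, b]$ in $y$-coordinates with $b \defeq 1/(1-\delta) = 1 + \Theta(\delta)$.

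To check the preconditions of \cref{thm:main_bounded}, I would use that the branch cuts of $\arcsin$ lie on $(-\infty, -1] \cup [1, \infty)$ and the only real points on a Bernstein ellipse $E_\rho$ with $\rho = 1 + \alpha$ are $\pm \frac{1}{2}(\rho + \rho^{-1}) = \pm(1 + \frac{1}{2}\alpha^2 + O(\alpha^3))$. Thus taking $\alpha = c\sqrt{\delta}$ for a sufficiently small constant $c$ ensures $(1-\delta) \cdot \frac{1}{2}(\rho + \rho^{-1}) < 1$, so $\tilde{f}$ extends analytically to the interior of $E_\rho$. A boundedness estimate $\abs{\tilde{f}} \le M \le 1$ on $E_\rho$ then follows from the continuous extension of $\arcsin$ to the closed unit disk (with values $\pm \pi/2$ at $\pm 1$) combined with the maximum modulus principle, which reduces the max over $E_\rho$ to the real endpoints, where $\tilde{f}$ is bounded by $\frac{2}{\pi}\arcsin((1-\delta)\frac{1}{2}(\rho + \rho^{-1})) < 1$.

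I would then invoke \cref{thm:main_bounded} with $\alpha = c\sqrt{\delta}$, $b = 1/(1-\delta)$, threshold width $\delta_{\text{thm}}$ proportional to $\alpha^2$ so that both $\delta_{\text{thm}} < \alpha^2/C$ and $1 + \delta_{\text{thm}} \le b$ hold, and error target $\eps/M$. This yields a polynomial $\tilde{q}$ of the claimed degree with $\norm{\tilde{f} - \tilde{q}}_{[-1, 1]} \le \eps$ and $\norm{\tilde{q}}_{[-b, b]} \le M \le 1$. Undoing the rescaling via $p_0(x) \defeq \tilde{q}(x/(1-\delta))$ transfers these guarantees to $x$-coordinates, and replacing $p_0$ by its odd part $p(x) \defeq \frac{1}{2}(p_0(x) - p_0(-x))$ makes $p$ odd while preserving both the approximation error (since $f$ is odd, a pointwise triangle-inequality computation shows $\abs{f - p} \le \frac{1}{2}(\abs{f - p_0} + \abs{f(-\cdot) - p_0(-\cdot)})$) and the sup-norm bound.

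The main obstacle will be the careful tuning of constants so that \cref{thm:main_bounded}'s hypotheses are simultaneously compatible with the rescaling: the analyticity of $\tilde{f}$ caps $\alpha$ at $\Theta(\sqrt{\delta})$, which must interact gracefully with the theorem's constraint $\delta_{\text{thm}} < \alpha^2/C$ and with placing $b$ within the boundedness window. A secondary subtlety is using the maximum modulus principle to promote a real-axis bound on $\abs{\tilde{f}}$ to a Bernstein ellipse bound of $M \le 1$, ensuring the final output satisfies $\norm{p}_{[-1, 1]} \le 1$ without an extra normalization factor that would otherwise degrade the approximation quality.
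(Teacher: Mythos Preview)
Your overall strategy matches the paper's: rescale by $x = (1-\delta)y$, apply \cref{thm:main_bounded} with $\alpha = \Theta(\sqrt{\delta})$, $b = 1/(1-\delta)$, and threshold width $\Theta(\delta)$, then undo the rescaling. There is, however, one genuine gap. The maximum modulus principle reduces the supremum of $|\tilde f|$ over the closed ellipse to its \emph{boundary}, not to the two real endpoints; there is no a priori reason for the maximum of $|\arcsin((1-\delta)z)|$ over $z \in E_\rho$ to occur at real $z$. The paper closes this instead by a direct coefficient argument: since every Taylor coefficient of $\arcsin$ at $0$ is nonnegative, $|\arcsin z| \le \arcsin|z| \le \tfrac{\pi}{2}$ whenever $|z| \le 1$, and \cref{lem:bern-bounds} guarantees $(1-\delta)|z| \le 1$ for all $z \in E_\rho$ with $\rho = 1 + \sqrt{2\delta}$. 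That is the substitute you need for the maximum modulus step.

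A secondary difference: your explicit symmetrization is unnecessary. The paper invokes \cref{cor:parity_bounded}, which records that the construction in \cref{thm:main_bounded} preserves parity; since $\tilde f$ is odd, the output polynomial is already odd.
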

\begin{proof}
    First, we rescale so the region of interest is $[-1,1]$: let $\barcsin(x) = \arcsin((1-\delta)x)$.
    The $\arcsin$ function is analytic on $\mathbb{C} \setminus ((-\infty, 1] \cup [1, \infty))$, so we choose $\rho = 1 + \sqrt{2\delta}$ so that $\barcsin(x)$ is analytic on the interior of $E_\rho$ by the first bound in Fact~\ref{lem:bern-bounds}.
    By the maximum modulus principle, the maximum of $\overline{\arcsin}$ is achieved on the boundary of the ellipse.
    We can bound this using that, for $\abs{z} \leq 1$ (so the Taylor series~\cite[Eq.~4.24.1]{DLMF} converges),   \begin{equation}\label{eq:arcsin_bound}
        \abs{\arcsin{z}} = \left|\sum_{n = 0}^\infty \frac{(2 n)!} {2^{2 n} (n!)^2} \frac{z^{2 n + 1}}{2 n + 1}\right| \leq \sum_{n = 0}^\infty \frac{(2 n)!} {2^{2 n} (n!)^2} \frac{\abs{z}^{2 n + 1}}{2 n + 1} \leq \arcsin\abs{z} \leq \frac{\pi}{2}.
    \end{equation}
    We can further verify by Fact~\ref{lem:bern-bounds} that $|z| \le 1 + \delta$ for $z \in E_\rho$, so the above display yields
    \begin{align*}
    \left|\barcsin\Par{z}\right| &= \left|\arcsin\Par{(1 - \delta)z}\right| \le \frac \pi 2.
    \end{align*}
    So, by \cref{thm:main_bounded} with $b \gets \frac 1 {1 - \delta}$ and $\delta \gets \frac \delta C$ for a sufficiently large $C$, there is a polynomial $q$ with $\norm{q - \barcsin}_{[-1, 1]} \le \frac \pi 2 \eps$ and $\norm{q}_{[-(1 - \delta)^{-1}, (1 - \delta)^{-1}]} \le \frac \pi 2$. 
    Letting $p((1 - \delta)x) = \frac 2 \pi q(x)$, we have the desired bounds.
    The degree of $p$ is $O(\frac{1}{\delta}\log\frac{1}{\delta\eps})$, and it is odd by Corollary~\ref{cor:parity_bounded} as $\barcsin$ is odd.
\end{proof}

In Corollary~\ref{cor:approx-exptarcsin}, we further apply Theorem~\ref{thm:main_bounded} to the ``fractional query'' setting of Corollary 72 in \cite{gslw18}, which requires approximations to $\exp(\ii t \arcsin(x))$ for small $t$. As in \cite{gslw18}, we provide bounded approximations to $\cos(t\arcsin(x))$ and $\sin(t\arcsin(x))$ through our framework.

\begin{corollary}\label{cor:approx-exptarcsin}
    Let $\eps \in (0,1)$ and $t \in [-1,1]$. There exists an even polynomial $p$ and an odd polynomial $q$ of degree $O(\log \frac 1 \eps)$ such that $\norm{p}_{[-1, 1]} \le 1$, $\norm{q}_{[-1, 1]} \le 1$, and
    \begin{align*}
        \norm{p(x) - \cos(t \arcsin(x))}_{[-\half, \half]} \leq \eps,\;
        \norm{q(x) - \sin(t \arcsin(x))}_{[-\half, \half]} \leq \eps.
    \end{align*}
\end{corollary}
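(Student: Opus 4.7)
The plan is to reduce to Theorem~\ref{thm:main_bounded} via a linear rescaling; I treat the cosine case, as the sine case is entirely analogous. First, rescale by defining $g(y) \defeq \cos(t\arcsin(y/2))$, so that the approximation region $x \in [-\tfrac12, \tfrac12]$ corresponds to $y \in [-1,1]$ and the boundedness region $x \in [-1,1]$ corresponds to $y \in [-2,2]$; this identifies the parameter $b = 2$ in Theorem~\ref{thm:main_bounded}.

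Next, I verify analyticity and boundedness of $g$ on a Bernstein ellipse $E_\rho$ of constant radius $\rho > 1$. Since $\arcsin$ is analytic on $\mathbb{C} \setminus ((-\infty, -1] \cup [1, \infty))$, $g$ extends analytically wherever $y/2$ avoids this slit. Taking $\rho = \tfrac32$, Fact~\ref{lem:bern-bounds} gives $|y| \le \tfrac{13}{12} < 2$ on $E_\rho$, so $g$ is analytic in the interior. The Taylor-series bound \eqref{eq:arcsin_bound} then implies $|\arcsin(y/2)| \le \pi/2$ on $E_\rho$, so $|g(y)| \le \cosh(|t|\pi/2) \le \cosh(\pi/2) =: M$, which is a universal constant since $|t| \le 1$.

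With $\alpha = \rho - 1 = \tfrac12$, $b = 2$, and a sufficiently small constant $\delta$, applying Theorem~\ref{thm:main_bounded} to $g$ produces a polynomial $\tilde{q}$ of degree $O(\tfrac{b}{\delta}\log\tfrac{b}{\delta\epsilon}) = O(\log(1/\epsilon))$ with $\|\tilde{q} - g\|_{[-1, 1]} \le M\epsilon$ and $\|\tilde{q}\|_{[-2, 2]} \le M$. Rescaling via $p(x) \defeq \tilde{q}(2x)$, and passing to the even part of $p$ in the cosine case (respectively, the odd part in the sine case) as in the proof of Corollary~\ref{cor:approx-arcsin}, yields a polynomial of the claimed degree and parity which approximates $f$ on $[-\tfrac12, \tfrac12]$ to error $O(\epsilon)$ and is bounded on $[-1,1]$ by $M$.

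The main remaining obstacle is essentially cosmetic: Theorem~\ref{thm:main_bounded} naturally outputs a polynomial with $\|p\|_{[-1,1]} \le M$ rather than the desired $\|p\|_{[-1,1]} \le 1$. However, since $M = \cosh(\pi/2)$ is a universal constant and $|g| \le 1$ on the real interval $[-2,2]$ (so the gap only appears in the narrow transition region $[-(1+\delta),-1] \cup [1, 1+\delta]$ of the threshold used inside Theorem~\ref{thm:main_bounded}), a careful choice of $\delta$ and a standard renormalization suffice to achieve the clean bound $\|p\|_{[-1,1]} \le 1$ without affecting the asymptotic degree scaling.
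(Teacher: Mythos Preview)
Your approach is essentially the paper's: rescale to $g(y)=\cos(t\arcsin(y/2))$, bound it by $\cosh(\pi/2)$ on a constant-radius Bernstein ellipse via \eqref{eq:arcsin_bound} and $|\cos z|\le\cosh|z|$, apply Theorem~\ref{thm:main_bounded} with $b=2$ and constant $\delta$, and rescale back (the paper takes $\rho=2$ rather than $3/2$, which is immaterial). Two minor notes: the paper obtains parity directly from Corollary~\ref{cor:parity_bounded} (Theorem~\ref{thm:main_bounded} preserves parity of its input) rather than by taking even/odd parts, and your citation of Corollary~\ref{cor:approx-arcsin} for the symmetrization trick is off since that proof also uses Corollary~\ref{cor:parity_bounded}; and the paper glosses over the $M$-versus-$1$ issue you flag, but your diagnosis that $|g|\le 1$ on the real interval $[-2,2]$ confines the overshoot to the transition strip is correct and, combined with a $1+O(\eps)$ renormalization, suffices.
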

\begin{proof}
    First, we rescale so the region of interest is $[-1,1]$: let $f(x) = \cos(t \arcsin(\frac x 2))$ and $g(x) = \sin(t \arcsin(\frac x 2))$.
    These are analytic on $\mathbb{C} \setminus ((-\infty, -2] \cup [2, \infty))$, since that is where $\arcsin(\frac x 2)$ is analytic.
    Let $\rho = 2$, so $f$ and $g$ are analytic on the interior of $E_\rho$. We observe that for all $z \in \C$,
    \begin{align*}
    |\cos(z)| = \half\left|e^{\ii z} + e^{-\ii z}\right| \le \half |e^{\ii z}| + \half |e^{-\ii z}| \le \cosh(|z|), 
    \end{align*}
    as $\cosh$ is increasing and the imaginary part of $z$ is at most $|z|$. A similar argument shows $|\sin(z)| \le \cosh(|z|)$. By Fact~\ref{lem:bern-bounds} we observe that every point in the interior of $\half E_\rho$ has modulus $\le \frac 3 4$, and $|\arcsin|$ is bounded in this region by $\frac \pi 2$ (see \eqref{eq:arcsin_bound}), so for $z \in E_\rho$, $\left|f\Par{z}\right| = \left|\cos\Par{t\arcsin\Par{\frac z 2 }} \right| \le \cosh\Par{\frac \pi 2}$, 
    and we may analogously bound $g$ on $E_\rho$. Taking $b = 2$ and $\delta$ to be a sufficiently small constant in \cref{thm:main_bounded}, and rescaling the region of interest, gives the conclusion. The parities of $p$ and $q$ follow from Corollary~\ref{cor:parity_bounded} and the parities of $\cos(t\arcsin(x))$ and $\sin(t\arcsin(x))$.
\end{proof}

Finally, Corollary~\ref{cor:approx-x-c} gives a variant of Corollaries 67 and 69 in \cite{gslw18}, regarding the bounded approximation of negative power functions. Our bound has a slightly worse logarithmic factor in some regimes (as discussed in Remark~\ref{rem:bounded_comments}), but otherwise agrees with the bounds in \cite{gslw18} up to a constant factor, using arguably a more standard approach.

\begin{corollary} \label{cor:approx-x-c}
    Let $\delta, \eps \in (0, 1)$, and let $f(x) = \abs{\frac \delta x}^{c}$ for $c > 0$.
    There exist both even and odd polynomials $p(x)$ of degree $O(\frac{\max(1, c)}{\delta}\log\frac{1}{\delta\eps})$ such that $\norm{p}_{[-1, 1]} \le 3$ and $\norm{p - f}_{[\delta, 1]} \leq \eps$.
\end{corollary}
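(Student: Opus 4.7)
The plan is to follow the template of Corollaries~\ref{cor:exp_bounded},~\ref{cor:approx-arcsin}, and~\ref{cor:approx-exptarcsin}: linearly rescale the region of interest $[\delta, 1]$ to $[-1, 1]$; verify analyticity and boundedness of the rescaled function on an appropriate Bernstein ellipse; apply Theorem~\ref{thm:main_bounded}; and finally symmetrize to enforce the parity requirement.

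Concretely, I would set $y(x) \defeq \frac{2x - 1 - \delta}{1 - \delta}$ so that $x \in [\delta, 1]$ corresponds to $y \in [-1, 1]$, and work with the rescaled function $h(y) \defeq f\bigl(\tfrac{(1-\delta)y + 1 + \delta}{2}\bigr) = \bigl(\tfrac{2\delta}{(1-\delta)y + (1+\delta)}\bigr)^c$. The only singularity of $h$ is a branch point at $y_\ast = -\tfrac{1+\delta}{1-\delta}$, lying at distance $\tfrac{2\delta}{1-\delta}$ to the left of $-1$. Choosing $\alpha = c_1\sqrt{\delta/\max(1,c)}$ for a small constant $c_1 > 0$ and $\rho = 1 + \alpha$, Fact~\ref{lem:bern-bounds} ensures $E_\rho$ extends only $\alpha^2/2 + O(\alpha^4)$ to the left of $-1$, so $h$ is analytic on the interior of $E_\rho$. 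A direct computation then yields $|(1-\delta)y + (1+\delta)| \geq 2\delta\bigl(1 - O(1/\max(1,c))\bigr)$ on $E_\rho$, whence $|h(y)| \leq \bigl(1 + O(1/\max(1,c))\bigr)^c \leq 3/2$ on $E_\rho$ once $c_1$ is chosen small enough.

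With this I would invoke Theorem~\ref{thm:main_bounded} on $h$ with $M = 3/2$, $\delta_{\mathrm{thm}} = \Theta(\alpha^2) = \Theta(\delta/\max(1,c))$, and a constant $b$ large enough that $y \in [-b, b]$ covers $x \in [-1, 1]$ under the inverse rescaling (e.g.\ $b = 5$ suffices when $\delta \leq 1/2$; the regime $\delta > 1/2$ is trivial since $f$ is then smooth and bounded on $[\delta, 1]$). This produces a polynomial $r(y)$ of degree $O\bigl(\tfrac{\max(1,c)}{\delta}\log\tfrac{\max(1,c)}{\delta\eps}\bigr)$ that is $O(\eps)$-close to $h$ on $[-1, 1]$, bounded by $3/2$ on $[-(1+\delta_{\mathrm{thm}}), 1+\delta_{\mathrm{thm}}]$, and bounded by $O(\eps)$ on the complementary part of $[-b, b]$. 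Undoing the rescaling via $\tilde q(x) \defeq r(y(x))$ and symmetrizing as $p_{\mathrm{even}}(x) \defeq \tilde q(x) + \tilde q(-x)$ or $p_{\mathrm{odd}}(x) \defeq \tilde q(x) - \tilde q(-x)$ then yields the desired parity: for $x \in [\delta, 1]$, the reflection $-x \in [-1, -\delta]$ lands in the outer region where $|\tilde q| \leq O(\eps)$, so $\|p - f\|_{[\delta, 1]} \leq O(\eps)$, while $\|p\|_{[-1,1]} \leq 2 \cdot 3/2 = 3$ by the triangle inequality. Rescaling $\eps$ by a constant completes the argument.

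I expect the main obstacle to be the bound on $|h|$ over $E_\rho$: since $h$ is a $c$-th power, even a small relative approach to the branch point $y_\ast$ is amplified exponentially in $c$, which forces the scaling $\alpha \propto \sqrt{\delta/\max(1,c)}$ rather than the ``natural'' $\sqrt{\delta}$ that suffices for constant $c$. This is precisely what introduces the $\max(1,c)$ factor in the final degree bound, and it also explains the extra $\log\max(1,c)$ inside the outer logarithm relative to \cite[Corollary 69]{gslw18}, consistent with the caveat previewed in Remark~\ref{rem:bounded_comments}.
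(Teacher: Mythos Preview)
Your proposal is correct and mirrors the paper's proof almost exactly: the same affine rescaling of $[\delta,1]$ to $[-1,1]$, the same choice $\alpha \propto \sqrt{\delta/\max(1,c)}$ yielding $M = 3/2$ on $E_{1+\alpha}$, the same invocation of Theorem~\ref{thm:main_bounded} with $\delta_{\text{thm}} = \Theta(\delta/\max(1,c))$ and constant $b$, and the same symmetrization $q(y(x)) \pm q(y(-x))$ to enforce parity. The only cosmetic differences are that the paper takes $b = 4$ (assuming $\delta$ sufficiently small) and that your claim ``$b = 5$ suffices when $\delta \le 1/2$'' actually needs $\delta \le 1/3$ since $\tfrac{3+\delta}{1-\delta} \le 5$ iff $\delta \le 1/3$; this is inconsequential.
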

\begin{proof}
    Assume $\delta$ is sufficiently small, else taking a smaller $\delta$ only affects the bound by a constant. We rescale the region of interest: $x = \frac{1-\delta}{2}y + \frac{1+\delta}{2}$ is in $[\delta, 1]$ for $y \in [-1,1]$, so let
    \begin{align*}
        g(y) \defeq \delta^c\Big(\frac{1-\delta}{2}y + \frac{1+\delta}{2}\Big)^{-c}.
    \end{align*}
    We require a bound of $g$ on $E_\rho$ for $\rho = 1 + \sqrt{\delta/4\max(1, c)}$.
    Since $f$ is largest closest to the origin, $g$ is largest at the point closest to $-\frac{1+\delta}{1-\delta}$, i.e.\ $-\frac12(\rho + \rho^{-1}) > -(1 + \frac \delta {8\max(1, c)})$ by \cref{lem:bern-bounds}. Further,
    \begin{align*}
        g\Par{-\frac12(\rho + \rho^{-1})} &\leq g\Par{-\Par{1 + \frac \delta {8\max(1, c)}}} \\
        &\leq \delta^c\Big(-\frac{1-\delta}{2}\Par{1 + \frac \delta {8\max(1, c)}} + \frac{1+\delta}{2}\Big)^{-c} \\
        &= \Par{1 - \frac{1 - \delta}{16\max(1, c)}}^{-c} \leq \frac 3 2.
    \end{align*}
    Let $\tdelta = \frac \delta {4C\max(1, c)}$ for sufficiently large $C$, and $b = 4$. \cref{thm:main_bounded} yields $q(y)$ satisfying:
    \begin{align*}
        \norm{q(y) - g(y)}_{[-1,1]} \leq \eps, \; \norm{q(y)}_{[-(1 + \tdelta), 1+\tdelta]} \leq 2^c,\;
        \norm{q(y)}_{[-4, -(1 + \tdelta)] \cup [1+\tdelta, 4]} \leq \eps.
    \end{align*}
    Shifting back $y = \frac{2}{1-\delta}(x - \frac{1+\delta}{2})$, it is clear for sufficiently large $C$ that $y = -\frac{1 + 3\delta}{1 - \delta}$ (which corresponds to $x = -\delta$) has $y < -(1 + \tdelta)$, and $y = -\frac{3 + \delta}{1 - \delta}$ (which corresponds to $x = -1$) has $y > -4$. So,
    \begin{equation}\label{eq:qfbounds}
        \begin{aligned}
            \left\|q\Par{\frac{2}{1-\delta}\Par{x - \frac{1+\delta}{2}}} - f(x)\right\|_{[\delta,1]} \leq \eps, \\
            \left\|q\Par{\frac{2}{1-\delta}\Par{x - \frac{1+\delta}{2}}}\right\|_{[-\delta, \delta]} \leq 2^c, \\
            \left\|q\Par{\frac{2}{1-\delta}\Par{x - \frac{1+\delta}{2}}}\right\|_{[-1, -\delta]} \leq \eps.
        \end{aligned}
    \end{equation}
    Depending on whether we wish the final function to be even or odd, we take \[p(x) =  q\Par{\frac{2}{1-\delta}\Par{x - \frac{1+\delta}{2}}} \pm q\Par{\frac{2}{1-\delta}\Par{-x - \frac{1+\delta}{2}}}.\]
    Then the guarantees of \eqref{eq:qfbounds} give $    \norm{p(x) - f(x)}_{[\delta,1]} \leq 2\eps$ and $\norm{p(x)}_{[-1, 1]} \leq 3$, and we rescale $\eps$ to conclude.
    The final degree of the polynomial is the degree of $q(y)$: $O(\frac{\max(1, c)}{\delta}\log\frac{1}{\delta\eps})$.
\end{proof}

\subsection{Separating bounded and unbounded polynomial approximations}\label{ssec:separation}

In this section, we show that QSVT's requirement that the polynomials it implements be bounded can worsen the quality of approximations.
Specifically, we prove a simple separation result which shows that polynomial approximations may necessarily require larger degree under an additional boundedness constraint.
This follows from the observation that bounded degree-$d$ polynomials can have derivative as large as $d^2$ near the boundary of $[-1, 1]$, yet are bounded by $O(d)$ on the interior.
This is formalized by the following classical inequality due to Bernstein~\cite{bernstein12}.

\begin{proposition}[Theorem 2, \cite{schaeffer41}]\label{prop:bern-deriv-lb}
Let $p$ be a degree-$d$ polynomial with rational coefficients satisfying $|p(x)| \le 1$ for all $x \in [-1, 1]$. Then
\[d \ge |p'(x)| \sqrt{1 - x^2} \text{ for all } x \in [-1, 1].\]
\end{proposition}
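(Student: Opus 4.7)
The plan is to reduce the claim to the classical Bernstein inequality for trigonometric polynomials via the substitution $x = \cos\theta$, which dovetails naturally with the Chebyshev machinery of Section~\ref{ssec:cheb_poly}. Writing $p$ in the Chebyshev basis as $p(x) = \sum_{k=0}^d a_k T_k(x)$ and setting $g(\theta) \defeq p(\cos\theta)$, the identity $T_k(\cos\theta) = \cos(k\theta)$ immediately gives $g(\theta) = \sum_{k=0}^d a_k \cos(k\theta)$, so $g$ is a real (cosine) trigonometric polynomial of degree at most $d$. The hypothesis $|p(x)| \le 1$ on $[-1,1]$ transfers to $\norm{g}_{[-\pi,\pi]} \le 1$, and the ``rational coefficients'' hypothesis plays no role in the argument (one may clear denominators or pass to a limit without affecting either side).

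Next, the chain rule yields $g'(\theta) = -\sin(\theta)\,p'(\cos\theta)$, so that at $x = \cos\theta$,
\[ |g'(\theta)| = \sqrt{1-\cos^2\theta}\cdot |p'(\cos\theta)| = \sqrt{1-x^2}\cdot |p'(x)|. \]
Since $x = \cos\theta$ is surjective onto $[-1,1]$, the target inequality is equivalent to $\norm{g'}_{[-\pi,\pi]} \le d$. This is exactly Bernstein's inequality for trigonometric polynomials: any real trigonometric polynomial $g$ of degree at most $d$ satisfies $\norm{g'}_{\infty} \le d \norm{g}_{\infty}$.

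To finish, I would invoke Bernstein's trigonometric inequality as a black box from a standard reference (e.g.\ Natanson's \emph{Constructive Function Theory}). For completeness, its classical proof runs through M.~Riesz's interpolation formula, which expresses $g'(\theta_0)$ as a weighted finite sum of values $g(\theta_0 + \xi_j)$ at $2d$ equispaced shifts $\xi_j$, with coefficients whose absolute values sum to exactly $d$; applying $\norm{g}_\infty \le 1$ inside the sum gives the claim. An alternative complex-analytic proof considers the Laurent polynomial $\tilde{g}(z) = \sum_{k=-d}^d b_k z^k$ with $g(\theta) = \tilde{g}(e^{\ii\theta})$ and uses the fact that $z^d \tilde{g}(z)$ is a polynomial bounded on $|z|=1$, so that $|z^d \tilde{g}'(z)| \le d$ by a maximum-modulus / Schur-type argument.

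The main obstacle is Bernstein's trigonometric inequality itself, which despite its short and classical proof is non-obvious and would be cited rather than re-derived in the exposition. Once it is in hand, the trigonometric substitution does all the work, and no additional computation is needed beyond the chain rule.
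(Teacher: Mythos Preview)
The paper does not prove this proposition at all: it is stated with a citation to Schaeffer (1941) and Bernstein (1912) and then used as a black box in the subsequent \cref{prop:generic-deriv-lb}. So there is no ``paper's own proof'' to compare against.

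Your argument is correct and is in fact the standard textbook derivation of the algebraic Bernstein inequality from the trigonometric one. The substitution $x=\cos\theta$, the chain-rule identity $|g'(\theta)|=\sqrt{1-x^2}\,|p'(x)|$, and the invocation of $\norm{g'}_\infty\le d\norm{g}_\infty$ for trigonometric polynomials of degree $d$ are exactly right. You are also right that the ``rational coefficients'' hypothesis is irrelevant (this appears to be a slip in the paper; the inequality holds for any real---indeed complex---polynomial bounded by $1$ on $[-1,1]$). If you want the argument to be fully self-contained rather than citing the trigonometric Bernstein inequality, the Riesz interpolation route you sketch is the cleanest, though for the purposes of this paper a citation would suffice, which is precisely what the authors chose to do.
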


Proposition~\ref{prop:bern-deriv-lb} is troublesome for obtaining the type of bound we want since it depends on derivatives of $p$, the approximation, rather than $f$, the function to be approximated. We next give a simple extension of Proposition~\ref{prop:bern-deriv-lb}, with degree lower bounds depending on a quantity  $\sup_{x, y}\frac{\abs{f(x) - f(y)} - 2\eps}{\abs{x-y}}$ which can be viewed as a ``robust'' Lipschitz constant of $f$.
For example, if $f$ is a differentiable function with derivative $\geq L$ on an interval of length at least $\frac{4\eps}{L}$, then this quantity is $\geq \frac L 2$, and taking $\eps \to 0$ recovers the maximum derivative of $f$.

\begin{proposition}\label{prop:generic-deriv-lb}
For $\Delta \in (0, 1]$ and $S \subset [-\Delta, \Delta]$, let $f(x): S \to [-1, 1]$ be a function with polynomial approximation $p(x)$ such that, for some approximation error $\eps > 0$,
\begin{align}
    \abs{p(x) - f(x)} &\leq \eps \text{ for all } x \in S,\text{ and} \label{eq:lower-approx}\\
    \abs{p(x)} &\leq 1 \text{ for all } x \in [-1, 1]. \label{eq:lower-mag}
\end{align}
Then
\begin{align*}
    \deg(p) \geq \sqrt{1-\Delta^2}\sup_{\substack{x, y \in S \\ x \neq y}}\frac{\abs{f(x) - f(y)} - 2\eps}{\abs{x-y}}.
\end{align*}
\end{proposition}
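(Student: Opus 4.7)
My plan is to combine Bernstein's derivative inequality (\cref{prop:bern-deriv-lb}) with the mean value theorem and the triangle inequality, in essentially one line.

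First, I would reduce to the case of a real polynomial. Since $f$ takes real values and we only care about $|p(x)|$ on $[-1,1]$ and $|p(x)-f(x)|$ on $S$, replacing $p$ by $\operatorname{Re}(p)$ preserves \eqref{eq:lower-approx} and \eqref{eq:lower-mag} and does not increase the degree, so assume $p \in \R[x]$. Next, for any $x,y \in S$ with $x \neq y$, the mean value theorem applied to the real polynomial $p$ on $[\min(x,y),\max(x,y)] \subseteq [-\Delta,\Delta]$ yields some $\xi$ in this interval with
\[
|p(x)-p(y)| \;=\; |p'(\xi)|\,|x-y|.
\]
Since $\xi \in [-\Delta,\Delta]$, $\sqrt{1-\xi^2} \geq \sqrt{1-\Delta^2}$, and \cref{prop:bern-deriv-lb} gives $|p'(\xi)| \leq \deg(p)/\sqrt{1-\xi^2} \leq \deg(p)/\sqrt{1-\Delta^2}$.

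Combining these,
\[
|p(x)-p(y)| \;\leq\; \frac{\deg(p)}{\sqrt{1-\Delta^2}}\,|x-y|.
\]
Now the triangle inequality together with the pointwise approximation guarantee \eqref{eq:lower-approx} gives
\[
|f(x)-f(y)| \;\leq\; |f(x)-p(x)| + |p(x)-p(y)| + |p(y)-f(y)| \;\leq\; 2\eps + \frac{\deg(p)}{\sqrt{1-\Delta^2}}\,|x-y|.
\]
Rearranging yields $\deg(p) \geq \sqrt{1-\Delta^2}\,(|f(x)-f(y)| - 2\eps)/|x-y|$ for every such pair, and taking the supremum over $x,y \in S$ with $x \neq y$ gives the claim.

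The only subtleties are the reduction to a real polynomial (so that the MVT is applicable) and the (hypothesis of \cref{prop:bern-deriv-lb}) rational-coefficient assumption; the former is immediate from the form of the conditions on $p$, and for the latter one can either invoke the fact that the real Bernstein inequality holds without rationality by a density argument, or simply assume at the outset that $p$ has rational coefficients (which is WLOG, since perturbing coefficients slightly affects \eqref{eq:lower-approx} and \eqref{eq:lower-mag} by an arbitrarily small amount and the conclusion is lower-semicontinuous in the coefficients). No serious obstacle arises; the content is really just that Bernstein's inequality, transported from a pointwise derivative bound to a secant-slope bound via the MVT, is automatically a tool for lower-bounding degrees against any robustly-Lipschitz target function.
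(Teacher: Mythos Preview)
Your proof is correct and follows essentially the same approach as the paper: use the approximation guarantee and triangle inequality to lower-bound a secant slope of $p$, convert this to a pointwise derivative bound via the mean value theorem at some $\xi\in[-\Delta,\Delta]$, and finish with Bernstein's inequality. You are slightly more careful than the paper in explicitly reducing to a real polynomial and in addressing the rational-coefficient hypothesis of \cref{prop:bern-deriv-lb}, both of which the paper glosses over.
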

\begin{proof}
Consider some distinct $x, y \in S$, and let
\begin{align*}
    L = \frac{\abs{f(x) - f(y)} - 2\eps}{\abs{x - y}}.
\end{align*}
Then by \eqref{eq:lower-approx},
\begin{align*}
    \abs{p(x) - p(y)} \geq \abs{f(x) - f(y)} - 2\eps = L\abs{x - y},
\end{align*}
so by the intermediate value theorem, for some $\xi$ between $x$ and $y$, $\abs{p'(\xi)} \geq L$.
Since $p(x)$ is bounded by 1 in $[-1, 1]$, we can apply \cref{prop:bern-deriv-lb} to get that
\begin{align*}
    \deg(p) \geq \abs{p'(\xi)}\sqrt{1-\xi^2} \geq L\sqrt{1-\Delta^2}.
\end{align*}
We take the supremum over all $x, y$ to get the desired bound.
\end{proof}

We now discuss some implications of \cref{prop:generic-deriv-lb} for quantum algorithm design.
In recent works on quantum optimization \cite{vanApeldoornG19, BoulandGJSWT23}, approximations to $\exp(\beta y)$ on $y \in [-1, \delta]$ for constant $\delta$ are used to speed up Gibbs sampler subroutines for solving zero-sum games via QSVT.
It is a well-known result that a degree $O(\sqrt{\beta})$ polynomial approximates $\exp(-\beta y)$ up to additive error $0.1$ on $[-1, 0]$ \cite[Theorem 4.1]{sv14}.
However, only a $\approx \beta$ degree polynomial approximation was known when the polynomial is further required to be bounded in $[0, \delta]$ (a small interval outside the region of approximation).
Because the boundedness requirement comes from the use of QSVT, and is not needed in the classical setting, the state-of-the-art quantum runtime for zero-sum games~\cite{BoulandGJSWT23} incurs an overhead of $\sqrt{\beta} = \sqrt{1/\eps}$ compared to classical counterparts (while saving on dimension-dependent factors).

\cref{cor:exp-deg-lb} applies \cref{prop:generic-deriv-lb} to show that for $\delta = \omega(\beta^{-1})$, adding the boundedness requirement necessitates an approximation of larger degree, up to quadratically worse when $\delta = \Omega(1)$.
This implies that the degree achieved by \cref{cor:exp_bounded} is nearly-tight.
This negatively resolves the open question posed by \cite{BoulandGJSWT23}, which was whether \cref{cor:exp-deg-lb} could be modified to remove the last remaining overhead in $\frac 1 \eps$ (when $\delta = \Omega(1)$).
We rule out this approach, suggesting it is necessary to fundamentally change the application of QSVT to obtain this conjectured speedup.

\begin{corollary}\label{cor:exp-deg-lb}
Let $\beta \ge 1$, $\delta \in (0, 1]$, and let $q(x)$ be a degree-$d$ polynomial which satisfies
\begin{align*}
\Abs{q(x) - \exp\Par{\beta x}} &\le 0.1 \text{ for } x \in [-1, 0] \text{ and } \\
\Abs{q(x)} &\le 1 \text{ for } x \in [0, \delta].
\end{align*}
Then $d = \Omega(\beta \sqrt{\delta})$.
\end{corollary}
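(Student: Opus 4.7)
The plan is to reduce the statement to \cref{prop:generic-deriv-lb} via a linear rescaling that makes $q$ (slightly enlarged) uniformly bounded on $[-1, 1]$, combined with a careful choice of two evaluation points that witnesses the large derivative of $\exp(\beta x)$ near $x = 0$.

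First I would observe $|q(x)| \le 1.1$ on $[-1, \delta]$: on $[0, \delta]$ this holds by hypothesis, and on $[-1, 0]$ it follows from $|\exp(\beta x)| \le 1$ combined with the approximation bound. I then rescale via the affine map $x = \tfrac{1+\delta}{2} y + \tfrac{\delta - 1}{2}$, which sends $y \in [-1, 1]$ onto $x \in [-1, \delta]$, and define $\tilde q(y) \defeq q(x)/1.1$. The resulting $\tilde q$ is a degree-$d$ polynomial with $\|\tilde q\|_{[-1, 1]} \le 1$, and it approximates $\tilde f(y) \defeq \exp(\beta x)/1.1$ within $\eps \defeq 0.1/1.1$ on $y \in [-1, y^\star]$, where $y^\star \defeq \tfrac{1-\delta}{1+\delta}$ is the image of $x = 0$. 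The key geometric fact is $\sqrt{1 - (y^\star)^2} = \tfrac{2\sqrt\delta}{1+\delta} \ge \sqrt\delta$, so $y^\star$ lies only $\Theta(\delta)$ from the boundary $y = 1$, precisely the regime where Bernstein's inequality bites: intuitively, $|\tilde q'(y^\star)| \lesssim d/\sqrt\delta$ by \cref{prop:bern-deriv-lb}, while $\tilde f$ inherits derivative $\Theta(\beta)$ at $y^\star$ from $\exp(\beta x)$ at $x = 0$, giving the heuristic $d \gtrsim \beta\sqrt\delta$.

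To formalize this, I would apply \cref{prop:generic-deriv-lb} to the two-point set $S = \{y^\star,\, y^\star - \tfrac{2}{(1+\delta)\beta}\}$, which corresponds in original coordinates to $x = 0$ and $x = -1/\beta$. A direct computation gives $|\tilde f(y^\star) - \tilde f(y^\star - \tfrac{2}{(1+\delta)\beta})| = (1 - e^{-1})/1.1 > 0.57$, so the ``robust Lipschitz'' quantity in the proposition is at least $\tfrac{(1+\delta)\beta}{2}(0.57 - 2\eps) \ge 0.19 \beta$. Taking $\Delta = y^\star$ (valid once $\beta \ge \tfrac{1}{1-\delta}$, i.e.\ the second point stays inside $[-y^\star, y^\star]$), the proposition yields $d \ge \sqrt\delta \cdot 0.19 \beta$, the desired bound. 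The main obstacle I anticipate is the edge case $\beta < 1/(1-\delta)$, where my choice of $S$ escapes $[-y^\star, y^\star]$; but in that regime $\beta\sqrt\delta = O(1)$, so it suffices to note $d \ge 1$, which holds since a constant polynomial cannot simultaneously approximate $\exp(\beta x)$ at $x = 0$ and $x = -1$ (whose values differ by $1 - e^{-\beta} > 0.6$ for $\beta \ge 1$) within $0.1$.
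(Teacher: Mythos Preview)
Your approach is essentially identical to the paper's: the same affine rescaling sending $[-1,\delta]$ to $[-1,1]$, the same two evaluation points corresponding to $x=0$ and $x=-1/\beta$, and the same invocation of \cref{prop:generic-deriv-lb} with $\Delta = y^\star = \tfrac{1-\delta}{1+\delta}$. You are in fact slightly more careful than the paper in dividing by $1.1$ so that the boundedness hypothesis of the proposition is met exactly.

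There is one genuine gap, though. Your claim that $\beta < 1/(1-\delta)$ forces $\beta\sqrt{\delta} = O(1)$ is false when $\delta$ is near $1$: at $\delta = 1$ the condition becomes vacuous while $\beta\sqrt{\delta} = \beta$ is unbounded, and more generally $\tfrac{\sqrt{\delta}}{1-\delta}\to\infty$ as $\delta\to 1$. (The paper's own proof glosses over exactly this point, applying the proposition with $\Delta = y^\star$ without checking that the second element of $S$ lies in $[-y^\star,y^\star]$.) A clean repair: split on $\delta$. For $\delta < 1/2$ your edge-case argument is already valid, since then $1/(1-\delta) < 2$ and hence $\beta\sqrt{\delta} < 2$. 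For $\delta \ge 1/2$ and $\beta \ge 4$, both points of $S$ have absolute value at most $\max\bigl(\tfrac{1-\delta}{1+\delta},\,\tfrac{2}{(1+\delta)\beta}\bigr) \le \tfrac{1}{3}$, so one may take $\Delta = \tfrac{1}{3}$ in \cref{prop:generic-deriv-lb} to get $d = \Omega(\beta) = \Omega(\beta\sqrt{\delta})$ directly; the remaining range $\beta < 4$ is handled by your $d\ge 1$ observation.
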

\begin{proof}
Consider the change of variable $x = \frac{1 + \delta}{2}t - \frac{1 - \delta }{2}$ which maps $[-1, 1]$ to $[-1, \delta]$.
Then, for $f(t) = \exp(\beta x(t))$ and $p(t) = q(x(t))$, we know that $\abs{f(t) - p(t)} \leq 0.1$ for $t \in [-1, \tfrac{1-\delta}{1+\delta}]$ and $\abs{p(t)} \leq 1$ for $t \in [-1, 1]$.
Further,
\begin{align*}
    \frac{\abs{f(\tfrac{1-\delta}{1+\delta}) - f(\tfrac{2}{1+\delta}(-\tfrac{1}{\beta} + \tfrac{1-\delta}{2}))} - 0.2}{\abs{\tfrac{1-\delta}{1+\delta} - \tfrac{2}{1+\delta}(-\tfrac{1}{\beta} + \tfrac{1-\delta}{2})}}
    = \frac{1 - 1/e - 0.2}{\tfrac{2}{\beta(1+\delta)}} = \Omega(\beta).
\end{align*}
So, applying \cref{prop:generic-deriv-lb} with $S = \{\tfrac{2}{1+\delta}(-\tfrac{1}{\beta} + \tfrac{1-\delta}{2}),\tfrac{1-\delta}{1+\delta}\}$, we get that
\begin{equation*}
    d = \Omega\parens[\Big]{\beta\sqrt{1-\parens[\Big]{\frac{1-\delta}{1+\delta}}^2}} = \Omega(\beta\sqrt{\delta}). \qedhere
\end{equation*}
\end{proof}

A similar quadratic gap occurs for quantum algorithms for solving linear systems $\ma x = b$ when $\ma$ is positive definite~\cite{od21}.
Classical methods for this problem like conjugate gradient have a $\sqrt{\kappa}$ condition number dependence, which arises because $\tfrac 1 x$ has a good polynomial approximation on $[\tfrac 1 \kappa, 1]$ with degree $\approx \sqrt{\kappa}$.\footnote{
    We can see this explicitly.
    Approximating $x^{-1}$ on $[\kappa^{-1}, 1]$ is equivalent to approximating $\frac{1}{x-a}$ on $[-1, 1]$ for $a = 1 + \Theta(\kappa^{-1})$. There is an explicit expression for the Chebyshev coefficients of $\frac{1}{x-a} = \sum_{k=0}^\infty a_k T_k(x)$~\cite[Eq.\ (5.14)]{mh02}: $\abs{a_k} \sim \frac{1}{\sqrt{a^2 - 1}}(a - \sqrt{a^2 - 1})^k$.
    This is $\eps\norm{\frac{1}{x-a}}_{[-1, 1]}$ when taking $k = \Theta(\sqrt{\kappa})$.
}
However, QSVT requires approximations to be bounded on $[-1, 1]$; by applying a similar argument as in Corollary~\ref{cor:exp-deg-lb}, this implies that a degree of $\Omega(\kappa)$ is necessary to achieve same approximation quality with the boundedness constraint.
Orsucci and Dunjko work around this issue by observing that if we have a block-encoding of $\id - \ma$, then the function to be approximated is now $\tfrac 1 {1 - x}$, which can be done with degree $\approx \sqrt{\kappa}$, since the ill-conditioned part of the function is on the boundary of $[-1, 1]$, rather than the interior.

\subsection{Proof of Theorem~\ref{thm:main_bounded}}\label{ssec:bounded_proof}

We conclude with a proof of Theorem~\ref{thm:main_bounded}. Our proof builds upon several elementary bounds on Bernstein ellipses and the growth of Chebyshev polynomials, as well as the construction of explicit thresholding functions. For ease of exposition, we state all the helper bounds we use in this section, but defer their proofs to Appendix~\ref{app:more_csp}. We begin with our bounds on the sizes of Bernstein ellipses.

\begin{fact} \label{lem:bern-bounds}
    The Bernstein ellipse $E_\rho$ for $\rho \geq 1$ satisfies
    \[
        \operatorname{interior}(E_{\rho})
        \subset \Big\{x + \ii y \mid x, y \in \R,\, \abs{x} \leq \thalf(\rho + \rho^{-1}) \text{ and } \abs{y} \leq \thalf(\rho - \rho^{-1}) \Big\}.
    \]
    Further, for $\rho = 1 + \delta \leq 2$,
    \begin{align*}
        1 + \frac{\delta^2}{4} \leq \half(\rho + \rho^{-1})
        &= 1 + \frac{\delta^2}{2(1+\delta)} \leq 1 + \frac{\delta^2}{2}, \\
        \frac34 \delta \leq \half(\rho - \rho^{-1})
        &= \delta - \frac{\delta^2}{2(1+\delta)} \leq \delta.
    \end{align*}
\end{fact}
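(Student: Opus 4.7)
The plan is to parametrize the Bernstein ellipse explicitly and then carry out elementary algebra on $\rho = 1 + \delta$. Writing $z = \rho e^{\ii\theta}$ for $\theta \in [0, 2\pi)$, we have $z^{-1} = \rho^{-1} e^{-\ii\theta}$, so
\[
    \thalf(z + z^{-1}) = \thalf(\rho + \rho^{-1})\cos\theta + \ii\cdot \thalf(\rho - \rho^{-1})\sin\theta.
\]
This immediately shows $E_\rho$ is a standard axis-aligned ellipse with horizontal semi-axis $\thalf(\rho + \rho^{-1})$ and vertical semi-axis $\thalf(\rho - \rho^{-1})$, and its interior is thus contained in the axis-aligned rectangle with those semi-axes as half-widths. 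This establishes the first displayed inclusion.

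For the quantitative bounds in the second part, I would compute the two quantities via the identities
\[
    \thalf(\rho + \rho^{-1}) - 1 = \frac{(\rho - 1)^2}{2\rho} = \frac{\delta^2}{2(1 + \delta)}, \qquad
    \thalf(\rho - \rho^{-1}) = \frac{\rho^2 - 1}{2\rho} = \delta - \frac{\delta^2}{2(1+\delta)},
\]
both of which are straightforward manipulations of $\rho = 1 + \delta$. The remaining sandwich bounds follow from the hypothesis $\rho \le 2$, i.e., $\delta \le 1$, which gives $1 \le 1 + \delta \le 2$; plugging these into the denominator $2(1+\delta)$ yields $\frac{\delta^2}{4} \le \frac{\delta^2}{2(1+\delta)} \le \frac{\delta^2}{2}$, establishing the first pair of inequalities directly. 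For the second pair, the upper bound $\thalf(\rho - \rho^{-1}) \le \delta$ is immediate since the correction term $\frac{\delta^2}{2(1+\delta)}$ is nonnegative, and the lower bound $\thalf(\rho - \rho^{-1}) \ge \tfrac{3}{4}\delta$ reduces (after dividing by $\delta$ and rearranging) to the inequality $\tfrac{1}{4} \ge \frac{\delta}{2(1+\delta)}$, which is equivalent to $1 + \delta \ge 2\delta$, i.e., $\delta \le 1$, exactly our hypothesis.

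There is no substantive obstacle here: the entire statement reduces to parametrizing the ellipse by its natural angular variable and then doing a routine calculus-free algebraic estimate on the rational function $\delta \mapsto \frac{\delta^2}{2(1+\delta)}$ on $[0, 1]$. The only care required is in tracking which direction each inequality goes when substituting the range of $1 + \delta$ into the denominator.
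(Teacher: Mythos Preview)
Your proof is correct and is exactly the natural argument; the paper states this as a Fact without proof, so there is nothing to compare against beyond noting that your parametrization and algebra are the standard way to verify these elementary bounds.
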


This yields the following containment fact, whose proof is deferred to Appendix~\ref{app:more_csp}.

\begin{restatable}{lemma}{restatetinyellipse} \label{lem:tiny-ellipse-rescale}
    For $\delta \in(0, 1)$, $(1+\delta)E_{1 + \alpha}$ is contained in the interior of $E_{\sigma}$, where $\sigma = 1 + 3(\alpha + \sqrt{\delta})$.
\end{restatable}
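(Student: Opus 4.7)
The plan is to reduce containment to a comparison of the semi-axes of the two ellipses, and then verify the semi-major axis inequality by exploiting the identity $a^2 - b^2 = 1$ for $a = \half((1+\alpha) + (1+\alpha)^{-1})$ and $b = \half((1+\alpha) - (1+\alpha)^{-1})$. Since $E_\rho$ parameterizes as $\{\half(\rho + \rho^{-1})\cos\theta + \ii \half(\rho - \rho^{-1})\sin\theta : \theta \in [0, 2\pi)\}$ (cf.~Fact~\ref{lem:bern-bounds}), both $(1+\delta) E_{1+\alpha}$ and $E_\sigma$ are axis-aligned ellipses centered at the origin, with semi-axes $((1+\delta)a, (1+\delta)b)$ and $(A, B) \defeq (\half(\sigma + \sigma^{-1}), \half(\sigma - \sigma^{-1}))$, respectively. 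Containment in the interior is then equivalent to $(1+\delta) a < A$ \emph{and} $(1+\delta) b < B$; and the semi-minor condition is implied by the semi-major one, since
\[((1+\delta) a - A) - ((1+\delta) b - B) = \frac{1+\delta}{1+\alpha} - \frac{1}{\sigma} > 0,\]
using $\sigma > 1 + 3\alpha \geq (1+\alpha)/(1+\delta)$ for $\delta > 0$.

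To establish $(1+\delta) a < A$, I would pass to the equivalent formulation $(1+\delta) a + \sqrt{((1+\delta) a)^2 - 1} < \sigma$, obtained by inverting $\rho \mapsto \half(\rho + \rho^{-1})$ via $s \mapsto s + \sqrt{s^2 - 1}$. The identity $a^2 - b^2 = 1$ yields the clean expansion
\[((1+\delta) a)^2 - 1 = (1+\delta)^2 b^2 + \delta(2+\delta),\]
and then subadditivity of the square root together with $a + b = 1+\alpha$ give
\[(1+\delta) a + \sqrt{((1+\delta) a)^2 - 1} \leq (1+\delta)(1+\alpha) + \sqrt{\delta(2+\delta)}.\]

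Finally, since $\delta \in (0,1)$ implies $\sqrt{\delta(2+\delta)} \leq \sqrt{3\delta}$, it suffices to verify the elementary inequality $(1+\delta)(1+\alpha) + \sqrt{3\delta} < 1 + 3(\alpha + \sqrt{\delta})$, which rearranges to $\delta(1+\alpha) < 2\alpha + (3 - \sqrt{3})\sqrt{\delta}$. Using $\delta \leq \sqrt{\delta}$ and $\alpha\delta \leq 2\alpha$ (both valid for $\delta \leq 1$) gives $\delta(1+\alpha) \leq \sqrt{\delta} + 2\alpha$; then $\sqrt{\delta} < (3 - \sqrt{3})\sqrt{\delta}$ (strict since $\delta > 0$ and $3 - \sqrt{3} > 1$) completes the argument.

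The main obstacle is spotting the identity $((1+\delta) a)^2 - 1 = (1+\delta)^2 b^2 + \delta(2+\delta)$: it is critical that the subsequent square-root subadditivity step yields a remainder of order $\sqrt{\delta}$ rather than linear in $\delta$, so that it can be absorbed into the $3\sqrt{\delta}$ slack built into $\sigma$. Once this identity is identified, the remaining steps are routine estimation.
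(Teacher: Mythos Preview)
Your proof is correct. Both you and the paper reduce containment to the pair of semi-axis inequalities $(1+\delta)a < A$ and $(1+\delta)b < B$; the executions diverge only after that. The paper verifies each axis separately: the semi-major by expanding $\half(\sigma+\sigma^{-1}) - (1+\delta)\half(\rho+\rho^{-1})$ via the identity $\half((1+x)+(1+x)^{-1}) = 1 + \frac{x^2}{2(1+x)}$, and the semi-minor by the monotonicity of $\rho \mapsto \rho - \rho^{-1}$ together with $\sigma \geq (1+\delta)(1+\alpha)$. You instead first collapse the semi-minor inequality to the semi-major via $((1+\delta)a - A) - ((1+\delta)b - B) = \frac{1+\delta}{1+\alpha} - \frac{1}{\sigma} > 0$, then invert $s \mapsto s + \sqrt{s^2-1}$ to work directly in the $\rho$-variable, where the identity $a^2-b^2=1$ makes the bound $(1+\delta)(a+b) + \sqrt{\delta(2+\delta)}$ fall out. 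Your route isolates the $\sqrt{\delta}$-scale remainder explicitly, making it transparent why the $3\sqrt{\delta}$ slack in $\sigma$ suffices; the paper's direct expansion obscures this but avoids the inversion step. Both are routine algebra once the semi-axis reduction is in hand.
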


We also use the following bounds on Chebyshev polynomials, deferring a proof to Appendix~\ref{app:more_csp}. 

\begin{restatable}{lemma}{restatechebzbound}\label{lem:chebz-bound}
    There are universal constants $C, c > 0$ such that, for $n \geq 0$ and $x, y \in \mathbb{R}$, $|y| \le c$,
    \begin{equation*}
        \abs{T_n(x + \ii y)} \leq \begin{cases}
            (1 + C\sqrt{\abs{y}})^n & \abs{x} \leq 1 \\
            (x + \sqrt{x^2 - 1} + C\sqrt{\abs{xy}})^n & \abs{x} > 1
        \end{cases}.
    \end{equation*}
\end{restatable}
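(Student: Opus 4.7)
The plan is to reduce $|T_n(z)|$ to bounding $\rho \defeq |w|$, where $w = z + \sqrt{z^2 - 1}$ with the branch chosen so that $\rho \geq 1$. Using the identity $T_n(z) = \tfrac12(w^n + w^{-n})$ (which follows from $w\cdot(z - \sqrt{z^2 - 1}) = 1$), the triangle inequality yields $|T_n(z)| \leq \rho^n$. So the whole task becomes proving $\rho \leq 1 + C\sqrt{|y|}$ when $|x| \leq 1$, and $\rho \leq x + \sqrt{x^2 - 1} + C\sqrt{|xy|}$ when $|x| > 1$; in the latter case, the parity relation $|T_n(-z)| = |T_n(z)|$ lets me assume $x > 1$.

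To obtain a workable formula for $\rho$, I would parametrize $z = \cos\theta$ with complex $\theta = a + \ii b$; then $w = e^{\ii\theta}$ and $\rho = e^{|b|}$. Matching real and imaginary parts in $z = \cos a \cosh b - \ii \sin a \sinh b$ and eliminating $a$ via $\cos^2 a + \sin^2 a = 1$ yields a biquadratic in $u \defeq \cosh|b|$ whose relevant root is
\[u^2 = \tfrac12\bigl(1 + |z|^2 + |z^2 - 1|\bigr), \qquad \rho = u + \sqrt{u^2 - 1}.\]
The one algebraic input I would use to massage this is the explicit identity $|z^2 - 1|^2 = (x^2 - 1)^2 + 2y^2(x^2 + 1) + y^4$, which together with subadditivity of square roots gives $|z^2 - 1| \leq |x^2 - 1| + |y|\sqrt{2(x^2 + 1)} + y^2$.

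For Case 1 I would bound $\sqrt{2(x^2+1)} \leq 2$ (valid for $|x| \leq 1$), which propagates through the closed form to give $u^2 - 1 \leq |y|(1 + |y|)$ and $u \leq 1 + |y|$, yielding $\rho \leq 1 + C\sqrt{|y|}$ once $|y|$ is smaller than a universal constant. For Case 2 ($x > 1$), I would use $\sqrt{2(x^2+1)} \leq 2x$ instead, which leads to $u \leq x + O(|y|)$ and $\sqrt{u^2 - 1} \leq \sqrt{x^2 - 1} + O(\sqrt{x|y|})$ after one more subadditive split; the bound $|y| \leq \sqrt{x|y|}$ valid for $x \geq 1 \geq |y|$ then absorbs the lower-order terms into the leading $\sqrt{|xy|}$ error.

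The main obstacle is the bookkeeping in Case 2: I need the subadditive splits to deliver an error of size $\sqrt{|xy|}$ rather than something crude like $\sqrt{x}\sqrt{|y|}$ with a prefactor growing in $x$, and I need the additive $O(|y|)$ slack at the $u$ level to get absorbed into the $\sqrt{|xy|}$ scale at the $\rho$ level. Both checks hinge on $|y| \leq \sqrt{x|y|}$ for $x \geq 1 \geq |y|$, which is what fixes the admissible constant $c$ in the lemma at a universal value in $(0, 1]$.
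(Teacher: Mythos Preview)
Your proposal is correct and follows essentially the same route as the paper: both reduce to $|T_n(z)| \le \rho^n$ and derive the identical closed form for $\rho$ (your $u = \cosh|b|$ equals the paper's $t = \tfrac12(\rho+\rho^{-1})$, and your $u^2 = \tfrac12(1+|z|^2+|z^2-1|)$ is exactly the solution of the paper's biquadratic $t^4 - (1+x^2+y^2)t^2 + x^2 = 0$), then bound $|z^2-1|$ via the same subadditive split and perform the same two-case analysis. The only cosmetic difference is your use of the complex-cosine parametrization versus the paper's Bernstein-ellipse parametrization, which are equivalent.
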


To ameliorate the polynomial growth of Chebyshev polynomials from Lemma~\ref{lem:chebz-bound}, we apply a threshold function with tails which decay superexponentially. Our thresholding is based on the Gaussian error function $\erf$; we define $\erf$ and recall some standard bounds on it in the following.

\begin{fact}[Eqs.\ 7.8.3, 7.8.7, \cite{DLMF}]\label{fact:erf}
    For $z \in \C$, $\erf: \C \to \C$ by
    $\erf(z) \defeq \frac{2}{\sqrt{\pi}} \int_0^{z} e^{-t^2}\dd t$. Then,
    \begin{align}
        1 - \erf(x) &= \frac{2}{\sqrt{\pi}}\int_x^\infty e^{-t^2} \dd t < \frac{2e^{-x^2}}{\sqrt{\pi}(1+x)} < 2e^{-x^2},
        & \label{eq:erf-bound} \\
        \abs{\erf(\ii x)} &= \frac{2}{\sqrt{\pi}}\int_0^{x} e^{t^2} \dd t < \frac{2(e^{x^2} - 1)}{\sqrt{\pi}\abs{x}} < 2e^{x^2} \text{ (when $x \geq 1$)}.
        \label{eq:erfi-bound}
    \end{align}
\end{fact}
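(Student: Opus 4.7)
The plan is to establish \eqref{eq:erf-bound} and \eqref{eq:erfi-bound} in sequence, treating $x \ge 0$ throughout (the regime where the bounds are meaningful). The leftmost equalities are routine: for \eqref{eq:erf-bound}, $1 - \erf(x) = \frac{2}{\sqrt{\pi}} \int_x^\infty e^{-t^2}\dd t$ follows from $\erf(\infty) = 1$, which is the Gaussian integral; for \eqref{eq:erfi-bound}, the substitution $t = \ii s$ converts $\int_0^{\ii x} e^{-t^2}\dd t$ to $\ii \int_0^x e^{s^2}\dd s$, whose modulus matches the claim. The outermost weak inequalities will be purely algebraic afterthoughts: $\frac{2}{\sqrt{\pi}(1+x)} < 2$ because $\sqrt{\pi} > 1$, and $\frac{2(e^{x^2}-1)}{\sqrt{\pi} x} < 2 e^{x^2}$ rearranges to $1 - e^{-x^2} < \sqrt{\pi} x$, which holds whenever $x \ge 1$.

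For the nontrivial middle inequality in \eqref{eq:erf-bound}, I would introduce $\phi(x) \defeq e^{-x^2} - (1+x)\int_x^\infty e^{-t^2}\dd t$ and argue $\phi > 0$ on $[0,\infty)$ by calculus. Direct differentiation gives $\phi'(x) = (1-x)e^{-x^2} - \int_x^\infty e^{-t^2}\dd t$ and $\phi''(x) = 2x(x-1)e^{-x^2}$, so $\phi'$ is concave on $[0,1]$ and convex on $[1,\infty)$. Since $\phi'(0) = 1 - \sqrt{\pi}/2 > 0$, $\phi'(1) = -\int_1^\infty e^{-t^2}\dd t < 0$, and $\phi'(x) \to 0^-$ as $x \to \infty$ (using the standard Mills asymptotic $\int_x^\infty e^{-t^2}\dd t \sim e^{-x^2}/(2x)$), $\phi'$ must vanish exactly once on $(0,1)$ and stay negative thereafter. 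Hence $\phi$ is unimodal; combining $\phi(0) = 1 - \sqrt{\pi}/2 > 0$ with $\phi(x) \sim \tfrac12 e^{-x^2} \to 0^+$ (again from the Mills asymptotic) gives $\phi > 0$ on all of $[0,\infty)$.

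For the middle inequality of \eqref{eq:erfi-bound}, the plan is a symmetrization argument. Using $e^{x^2} - 1 = \int_0^x 2t\, e^{t^2}\dd t$, the claim reduces to $\int_0^x (1 - 2t/x) e^{t^2}\dd t < 0$. Substituting $t = x/2 + s$ recenters the linear factor to $-2s/x$, expanding $t^2 = x^2/4 + xs + s^2$, so the integral becomes $-(2/x) e^{x^2/4} \int_{-x/2}^{x/2} s\, e^{xs + s^2}\dd s$. Pairing $s$ with $-s$ rewrites the remaining integral as $\int_0^{x/2} s\, e^{s^2}(e^{xs} - e^{-xs})\dd s$, which is strictly positive since $e^{xs} > e^{-xs}$ whenever $s, x > 0$. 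This argument in fact proves the bound for all $x > 0$, not only $x \ge 1$; the stated restriction is needed only for the subsequent weakening to $2 e^{x^2}$.

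The main obstacle is the first bound: there is no one-line reduction, and one must glue together the behavior of $\phi$ across $[0,1]$ and $[1,\infty)$ while handling the limit at infinity carefully through the Mills asymptotic (naively, both $e^{-x^2}$ and $(1+x)\int_x^\infty e^{-t^2}\dd t$ vanish, and one needs the first-order expansion to see that the difference remains positive). The imaginary bound is considerably cleaner because the symmetrization trick sidesteps any such boundary analysis.
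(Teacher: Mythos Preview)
The paper does not actually prove this statement: it is recorded as a \emph{Fact} with a citation to DLMF (Eqs.\ 7.8.3 and 7.8.7), so there is no proof in the paper to compare against. Your proposal therefore supplies strictly more than what the paper offers.

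Your arguments are correct in substance. One small slip: from $\phi''(x) = 2x(x-1)e^{-x^2}$ you should conclude that $\phi'$ is \emph{decreasing} on $[0,1]$ and \emph{increasing} on $[1,\infty)$ (equivalently, that $\phi$ is concave then convex), not that $\phi'$ is concave then convex. Fortunately the rest of your paragraph uses exactly the monotonicity of $\phi'$, so the logic goes through unchanged once the wording is fixed. The symmetrization argument for \eqref{eq:erfi-bound} is clean and, as you note, valid for all $x>0$.
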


For $z \in \R$, we note that $\half + \half \erf(z)$ is the cumulative distribution function for a Gaussian with mean $0$ and variance $\half$, which interpolates between $0$ and $1$; consequently, one may view $\erf$ (appropriately rescaled as necessary) as a ``smoothed'' variant of the sign function 
\begin{align*}
    \sgn(x) \defeq \begin{cases}
        -1 & x < 0 \\
        0 & x = 0 \\
        1 & x > 0
    \end{cases}.
\end{align*}
Building upon $\erf$, we state our family of thresholding functions, deferring proofs to Appendix~\ref{app:more_csp}. 

\begin{restatable}[Thresholding function]{lemma}{restaterectbounds}\label{lem:rect-bounds}
    For $\mu, s > 0$, let $r(z) \defeq \half(\erf(s(\mu + z)) + \erf(s(\mu - z)))$.
    When $z \in \R$, $0 \le r(z) \le 1$. When $x, y \in \R$ and $z = x + \ii y$, $|r(z) - r(x)| \le \exp(-s^2\Par{\mu - |x|}^2) |\erf(\ii sy)|$.
\end{restatable}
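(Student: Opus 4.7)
The plan is to prove the two claims separately, via elementary arguments exploiting the analyticity of $\erf$.

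For the first claim, I will observe that $r$ is an even function of $z$ on $\R$ (since $\erf$ is odd, swapping $z \to -z$ in $r$ permutes the two $\erf$ terms). Differentiating,
\[ r'(z) = \frac{s}{\sqrt{\pi}}\Par{e^{-s^2(\mu + z)^2} - e^{-s^2(\mu - z)^2}}, \]
which is strictly negative for $z > 0$ since $(\mu + z)^2 > (\mu - z)^2$ (using $\mu > 0$). Hence $r$ is decreasing on $[0, \infty)$ and, by evenness, increasing on $(-\infty, 0]$, so $r$ attains its maximum at $z = 0$ with $r(0) = \erf(s\mu) < 1$. Since $r(z) \to 0$ as $|z| \to \infty$ and $r$ is monotone on each half-line, $r(z) > 0$ throughout $\R$.

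For the second claim, the key fact is that $\erf$ is entire with derivative $\erf'(w) = \frac{2}{\sqrt{\pi}} e^{-w^2}$. Given real $a, b$, parametrizing the vertical segment from $a$ to $a + \ii b$ by $t = a + \ii u$ gives
\[ \erf(a + \ii b) - \erf(a) = \frac{2 \ii}{\sqrt{\pi}} \int_0^b e^{-(a + \ii u)^2}\, \dd u, \]
whose integrand has modulus $e^{-a^2 + u^2}$. Hence $|\erf(a + \ii b) - \erf(a)| \le e^{-a^2} \cdot \frac{2}{\sqrt{\pi}} \int_0^{|b|} e^{u^2}\, \dd u = e^{-a^2} |\erf(\ii b)|$.

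Finally, I will write $r(z) - r(x)$ as a sum of two differences of this form, with $(a, b) = (s\mu + sx, sy)$ and $(a, b) = (s\mu - sx, -sy)$ respectively, so the triangle inequality yields
\[ |r(z) - r(x)| \le \half\Par{e^{-s^2(\mu + x)^2} + e^{-s^2(\mu - x)^2}} |\erf(\ii sy)| \le \max\Par{e^{-s^2(\mu + x)^2}, e^{-s^2(\mu - x)^2}} |\erf(\ii sy)|. \]
Since $(\mu + x)^2 \ge (\mu - |x|)^2$ and $(\mu - x)^2 \ge (\mu - |x|)^2$, with equality achieved by whichever of $\pm x$ equals $|x|$, the maximum equals $e^{-s^2(\mu - |x|)^2}$, completing the bound.

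There is no real obstacle here; the only mild subtlety is choosing the right contour (a vertical segment) to realize the clean product bound $e^{-a^2}|\erf(\ii b)|$, and then noticing that the two exponentials combine to the stated $e^{-s^2(\mu - |x|)^2}$ after taking the larger one.
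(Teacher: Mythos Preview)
Your proof is correct and follows essentially the same approach as the paper: the same vertical-contour estimate $|\erf(a+\ii b)-\erf(a)|\le e^{-a^2}|\erf(\ii b)|$, followed by the triangle inequality and the observation that $\min((\mu+x)^2,(\mu-x)^2)=(\mu-|x|)^2$. The only cosmetic difference is in the first claim, where the paper uses oddness and monotonicity of $\erf$ directly rather than differentiating $r$, but both arguments are equally elementary.
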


Evidently for $z \in \R$, the function $r$ behaves as a threshold: sufficiently inside $[-\mu, \mu]$, it is close to $1$, and sufficiently outside it is close to $0$. The size of the ``growth window'' near $\mu$ is roughly $\frac 1 s$, and Lemma~\ref{lem:rect-bounds} shows $r(x + \ii y) \approx r(x)$ for small $y$.
Leveraging these tools, we now prove Theorem~\ref{thm:main_bounded}.

\begin{proof}[Proof of Theorem~\ref{thm:main_bounded}]
Without loss of generality, we rescale so that $M = 1$.
To obtain the theorem statement, it  suffices to prove that there exists a polynomial $q$ of degree $O(\frac{b}{\delta}\log\frac{1}{\alpha\eps})$ such that
\begin{align}
    \norm{f - q}_{[-(1-\delta), 1-\delta]} &\leq \eps, \nonumber \\
    \norm{q}_{[-1, 1]} &\leq 1+\eps, \label{eq:bounded-true}\\
    \norm{q}_{[-b, -1] \cup [1,b]} &\leq \eps. \nonumber
\end{align}
To see this, consider $f$ as in the theorem statement.
Let $\delta' = \frac{\delta}{1+\delta} = \Theta(\delta)$, so that $\frac{1}{1-\delta'} = 1+\delta$.
Then $f(\tfrac{y}{1-\delta'})$ is analytic and bounded by $M$ for $y$ in the interior of $(1-\delta') E_\rho$, which contains $E_{1+\frac \alpha 4}$ by \cref{lem:tiny-ellipse-rescale} and  $\sqrt{\delta'} < \sqrt{\delta} < \frac \alpha C$.
Applying \eqref{eq:bounded-true}, we get a function $q(y)$ such that $q((1-\delta')x)$ satisfies the guarantees described above, with the intervals scaled up by a factor of $\frac{1}{1-\delta'}$:
\begin{align*}
    \abs{f(x) - q((1-\delta')x)} &\leq \eps
    & \text{for } x &\in [-1, 1], \\
    \abs{q((1-\delta')x)} &\leq 1+\eps
    & \text{for } x &\in [-\tfrac{1}{1-\delta'}, \tfrac{1}{1-\delta'}] = [-(1+\delta), 1+\delta], \\
    \abs{q((1-\delta')x)} &\leq \eps
    & \text{for } x &\in [-\tfrac{b}{1-\delta'}, -(1+\delta)] \cup [1+\delta,\tfrac{b}{1-\delta'}]
\end{align*}
To conclude, consider $\frac{1}{1+\eps}q((1-\delta')x)$.
We make $q$ slightly smaller so that it is bounded by $1$ in $[-(1+\delta), (1+\delta)]$.
This only affects the closeness to $f$ by a constant factor: for $x \in [-1,1]$,
\begin{align*}
    \abs{f(x) - \tfrac{1}{1+\eps}q((1-\delta')x)}
    \leq \abs{f(x) - q((1-\delta')x)} + (1 - \tfrac{1}{1+\eps})\abs{q((1-\delta')x)}
    \leq 2\eps.
\end{align*}
The degree of $\frac{1}{1+\eps}q((1-\delta')x)$ is degree of $q(x)$, as desired. We now proceed to prove \eqref{eq:bounded-true}.
By \cref{thm:trefethen}, there is a polynomial with degree $n = \lceil\frac 1 \alpha \log \frac{6}{\alpha\eps}\rceil$ with $\norm{f - f_n}_{[-1, 1]} \leq \frac \eps 3$, and the Chebyshev coefficients of $f_n = \sum_{k = 0}^n a_k T_k(x)$, satisfy $\abs{a_k} \leq 2\rho^{-k}$.
Next, let
$\tp(z) \defeq r(z)f_n(z)$
be the truncation $f_n$ multiplied by the function $r(z)$ from \cref{lem:rect-bounds} with 
\[
\mu \defeq 1 - \frac \delta 2,\; s \defeq \frac{C_s}{\delta}\sqrt{\log\frac{1}{\alpha\eps}},
\]
and $C_s$ is a constant to be chosen later. Let $\trho \defeq 1 + \frac \delta b$; we will show $\tp$ is bounded on $bE_{\trho}$, and then our final approximation $q$ will be an application of \cref{thm:trefethen} to approximate $\tp$ on $[-b, b]$. To this end, it suffices to bound $\tp(z)$ for all $z \in S$, where the strip $S$ is defined as
\[S \defeq \Brace{z = x + \ii y \mid |y| \le \delta},\]
because Fact~\ref{lem:bern-bounds} implies $S \supseteq bE_{\trho}$. We begin by bounding $r(z)$ for $z \in S$:
\begin{equation}\label{eq:rbound}
    \begin{aligned}
        \abs{r(x+\ii y)} &\leq r(x) + e^{-s^2(\mu - |x|)^2}\abs{\erf(\ii sy)}\\
        &\leq r(x) + e^{-s^2(\mu - |x|)^2}\abs[\Big]{\erf\parens[\Big]{\ii C_s\sqrt{\log\tfrac{1}{\alpha\eps}}}}\\
        &\leq r(x) + 2e^{-s^2(\mu - |x|)^2}(\alpha\eps)^{-C_s^2}.
    \end{aligned}
\end{equation}
The inequalities above respectively used \cref{lem:rect-bounds}, the definition of $S$, and \eqref{eq:erfi-bound}. We now combine \eqref{eq:rbound} with Lemma~\ref{lem:chebz-bound} to bound $\tp$ on $S$. First, consider when $z = x + \ii y \in S$ and $x \in [-1, 1]$.
This is the bottleneck of the argument, where $\tp$ is largest. We bound
\begin{align*}
    \abs{\tilde{p}(z)} = \abs{r(z)} \abs{f_n(z)} 
    &\le \parens[\Big]{1 + \exp\parens{-s^2(\mu - |x|)^2}\poly\parens[\Big]{\frac 1 {\alpha\eps}}} \abs[\Bigg]{\sum_{k = 0}^n a_k T_k(z)} \\
    &\le \poly\parens[\Big]{\frac 1 {\alpha\eps}}\parens[\Bigg]{2\sum_{k = 0}^n \rho^{-k}|T_k(z)|} \\
    &\le \poly\parens[\Big]{\frac 1 {\alpha\eps}} \parens[\Bigg]{\sum_{k = 0}^n \parens[\Big]{\frac{1 + K\sqrt{\delta}}{\rho}}^k} = \poly\parens[\Big]{\frac 1 {\alpha\eps}}.
\end{align*}
The first inequality was \eqref{eq:rbound}, the second used the guarantees of Theorem~\ref{thm:trefethen}, the third used Lemma~\ref{lem:chebz-bound}, and the last used $\delta \ll \alpha^2$, $n = \poly(\frac 1 {\alpha\eps})$.
Next, for $z = x + \ii y \in S$ with $|x| \ge 1$,
\begin{equation}\label{eq:bigx_rbound}
    \begin{aligned}
        r(x) = \half\Par{\erf\Par{s(\mu + |x|)} - \erf\Par{s(|x| - \mu)}} \le \half\Par{1 - \erf\Par{s(|x| - \mu)}} < e^{-s^2(|x| - \mu)^2},
    \end{aligned}
\end{equation}
where the first inequality was $\erf(z) \le 1$ for $z \in \R$, and the last was \eqref{eq:erf-bound}. Further, Lemma~\ref{lem:chebz-bound} yields
\begin{equation}\label{eq:bigx_fbound}
    \begin{aligned}
        |f_n(z)| &\le \sum_{k = 0}^n 2\rho^{-k}\Par{|x| + \sqrt{x^2 - 1} + K\sqrt{|xy|}}^k \\
        &\le 2n\Par{|x| + \sqrt{x^2 - 1} + K\sqrt{|xy|}}^n \le 2n\exp\Par{n\Par{|x| - 1 + \sqrt{x^2 - 1} + K\sqrt{|xy|}}}.
    \end{aligned}
\end{equation}
Continuing, we combine \eqref{eq:rbound}, \eqref{eq:bigx_rbound}, and \eqref{eq:bigx_fbound} to conclude
\begin{equation}\label{eq:tp_bigx_bound}
    \begin{aligned}
        |\tp(z)| &\le \exp\Par{-s^2(|x| - \mu)^2} \poly\parens[\Big]{\frac 1 {\alpha\eps}} |f_n(z)| \\
        &\le \exp\Par{-s^2(|x| - \mu)^2 + n\Par{|x| - 1 + \sqrt{x^2 - 1} + K\sqrt{|xy|}}} \poly\parens[\Big]{\frac 1 {\alpha\eps}} \\
        &\leq \exp\Par{\Par{-\frac{C_s^2}{\delta^2}(|x| - \mu)^2 + \frac 1{\sqrt{\delta}}\Par{|x| - 1 + \sqrt{x^2 - 1} + K\sqrt{|x|\delta}}  }\log \frac 1 {\alpha\eps}}\poly\Par{\frac 1 {\alpha\eps}}.
    \end{aligned}
\end{equation}
Here we used that $n \le \frac 1{\sqrt \delta} \log \frac 1 {\alpha\eps}$ under the assumed relationship between $\delta$ and $\alpha$, and the definition of $s$.
For sufficiently large $C_s$, it is straightforward to see that for all $|x| \ge 1$, since the left-hand side asymptotically grows faster than each term in the right-hand side,
\begin{equation}\label{eq:csdominates}
    \frac{C_s^2}{2\delta^2}\parens[\Big]{|x| - \parens[\Big]{1 - \frac \delta 2}}^2 \ge \frac 1 {\sqrt{\delta}}\parens[\Big]{|x| - 1 + \sqrt{x^2 - 1} + K\sqrt{|x|\delta}},
\end{equation}
and hence for this choice of $C_s$, plugging this into the previous bound gives that $\tp(x + \ii y) \le \poly(\frac 1 {\alpha\eps})$ for $|x| \ge 1$.\footnote{We note that the $\poly$ in \eqref{eq:tp_bigx_bound} hides a $C_s$-dependent exponent, which grows faster than \eqref{eq:csdominates}.}
Later, we will need a tighter bound when $y = 0$ and $|x| \ge 1$: in this setting, the second additive term in \eqref{eq:rbound} vanishes, and hence taking $C_s$ such that \eqref{eq:csdominates} holds, repeating the arguments in \eqref{eq:tp_bigx_bound} without the $\poly(\frac 1 {\alpha\eps})$ overhead gives for sufficiently large $C_s$,
\begin{equation}\label{eq:largex_tpbound}\tp(x) \le \exp\parens[\Big]{-\frac{C_s^2}{2\delta^2}(|x| - \mu)^2 \log \frac 1 {\alpha\eps}}\le \exp\parens[\Big]{-\frac{C_s^2}{8}\log\frac 1 {\alpha\eps}} \le \frac \eps 3 \text{ for all } x \in \R \text{ with } |x| \ge 1.\end{equation}
Thus, we have shown that for all $z \in S$, $|\tp(z)| \le \poly(\frac 1 {\alpha\eps})$, and as the product of analytic functions, $\tp$ is analytic. Next, for all $z \in \C$ let $\hp(z) \defeq \tp(bz)$. We have shown $\hp$ is bounded on $E_{\trho}$, and hence Theorem~\ref{thm:trefethen} gives a Chebyshev truncation $\hp_m$ such that $\norm{\hp - \hp_m}_{[-1, 1]} \le \frac \eps 3$, for
\[m = O\Par{\frac b \delta \log \frac b {\delta\eps}}.\]
Our final approximation is $q(z) \defeq \hp_m(\frac z b)$. By the definitions of $q$ and $\hp$, the relationship between $\hp$ and $\tp$ implies $\norm{q - \tp}_{[-b, b]} \le \frac \eps 3$. Combined with \eqref{eq:largex_tpbound}, this implies the third bound in \eqref{eq:bounded-true}, \[\norm{q}_{[-b, -1] \cup [1, b]} \le \eps.\]
The first bound $\norm{f - q}_{[-(1-\delta), 1-\delta]} \le \eps$ in \eqref{eq:bounded-true} follows from $\norm{q - \tp}_{[-(1-\delta), 1-\delta]} \le \frac \eps 3$, $\norm{f_n - f}_{[-(1-\delta), 1-\delta]} \le \frac \eps 3$, and $\norm{f_n - \tp}_{[-(1-\delta), 1-\delta]} \le (1 + \frac \eps 3)\norm{1 - r}_{[-(1-\delta), 1-\delta]} \le \frac \eps 3$ choosing $C_s$ sufficiently large. Finally, the second bound in \eqref{eq:bounded-true} follows from 
\begin{align*}
    \norm{q}_{[1-\delta, 1]} &\le \norm{q - \tp}_{[1-\delta, 1]} + \norm{\tp}_{[1-\delta, 1]} \le \frac \eps 3 + \norm{\tp}_{[1-\delta, 1]} \\
    &\le \frac \eps 3 + \norm{f_n}_{[1-\delta, 1]} \le \eps + \norm{f}_{[1-\delta, 1]} \le 1 + \eps,
\end{align*}
where we used the closeness bounds between $(\tp, q)$ and $(f_n, f)$, as well as the assumed bound on $f$ over $E_\rho$ (which contains $[1-\delta, 1]$). The bound $\norm{q}_{[-1, -(1-\delta)]} \le 1 + \eps$ follows symmetrically.
\end{proof}

In some of our applications in Section~\ref{sec:bounded}, we used the following property of our approximations constructed via Theorem~\ref{thm:main_bounded}, which we record here for convenience.

\begin{corollary}\label{cor:parity_bounded}
    In the setting of Theorem~\ref{thm:main_bounded}, if $f$ is even or odd, so is $q$.
\end{corollary}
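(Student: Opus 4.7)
The plan is to simply observe that every step in the construction of $q$ in the proof of Theorem~\ref{thm:main_bounded} preserves parity, so the parity of the final polynomial $q$ matches the parity of $f$. I would track this step by step through the proof.

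First, I would note that the initial rescaling $f(y) \mapsto f(y/(1-\delta'))$ preserves parity since the map $y \mapsto y/(1-\delta')$ is odd. Next, the Chebyshev truncation step produces $f_n = \sum_{k=0}^n a_k T_k$. Since $T_k(-x) = (-1)^k T_k(x)$ and the Chebyshev decomposition is unique (as stated in the definition of Chebyshev coefficients), an even function has $a_k = 0$ for all odd $k$, and an odd function has $a_k = 0$ for all even $k$. Hence $f_n$ inherits the parity of $f$.

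Then I would verify that the threshold function $r(z) = \tfrac12(\erf(s(\mu+z)) + \erf(s(\mu-z)))$ is even in $z$: the substitution $z \mapsto -z$ simply swaps the two $\erf$ terms. Therefore $\tp = r \cdot f_n$ has the same parity as $f_n$, and so does $\hp(z) = \tp(bz)$. Applying Chebyshev truncation once more to produce $\hp_m$ preserves parity by the same unique-decomposition argument as above. Finally, $q(z) = \tfrac{1}{1+\eps}\hp_m((1-\delta')z/b)$ (composed as in the proof) is a composition of $\hp_m$ with an odd linear map and a scalar, so it inherits parity from $\hp_m$.

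There is no real obstacle here; the only point requiring mild care is the uniqueness-of-Chebyshev-coefficients argument used to conclude that truncation preserves parity, which follows directly from the fact that $\{T_k\}_{k \geq 0}$ is a basis in which even (resp.\ odd) functions have only even (resp.\ odd) index coefficients. This is a one-paragraph formal proof once the construction is unwound.
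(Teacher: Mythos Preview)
Your proposal is correct and follows essentially the same approach as the paper: the paper's proof is the one-line observation that Chebyshev truncation, multiplication by the even function $r$, and the second Chebyshev truncation all preserve parity, which is exactly what you unwind step by step.
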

\begin{proof}
    It is straightforward to check that all operations we perform on $f$ (Chebyshev truncation, multiplication by an even function $r$, and another Chebyshev truncation) preserve parity.
\end{proof}

\section*{Acknowledgements}

ET thanks t.f.\ for providing useful references.
ET also thanks David Gosset, Beni Yoshida, and Richard Cleve for the invitation to Waterloo and the hospitality; ET first read about the CS decomposition during a quiet night at the Perimeter Institute.
ET is supported by the NSF GRFP (DGE-1762114).
KT thanks Jonathan Kelner for encouraging him to learn about the CS decomposition, Christopher Musco for encouraging him to read \cite{trefethen19}, and Yang P.\ Liu for a helpful conversation about Jordan's lemma many moons ago.
We thank Carlo Beenakker for drawing our attention to the application of the CS decomposition to scattering theory.

\printbibliography
\appendix
\section{Proofs of quantum signal processing} \label{sec:qsp-proofs}

Our goal in this section is to characterize which polynomials are QSP-achievable.
Looking at the form of QSP, we can express its entries via polynomials satisfying a recurrence.
\begin{lemma}[QSP as a recurrence] \label{lem:qsp-recurrence}
    For some phase factors $\Phi = \{\phi_j\}_{0 \leq j \leq n} \in \mathbb{R}^{n + 1}$,
    \begin{align}
        \qsp(\{\phi_j\}_{k \leq j \leq n}, x) = \begin{pmatrix}
        p_k(x) & q_k^*(-x)\sqrt{1-x^2} \\
        q_k(x)\sqrt{1-x^2} & p_k^*(-x)
    \end{pmatrix},
    \end{align}
    where $p_n(x) = e^{\ii \phi_n}$ and $q_n(x) = 0$, and $p_k(x)$ and $q_k(x)$ satisfy the following recurrence relation:
    \begin{align}
        p_{k}(x) &= e^{\ii \phi_{k}}(xp_{k+1}(x) + (1-x^2)q_{k+1}(x)), \\
        q_{k}(x) &= e^{-\ii \phi_{k}}(p_{k+1}(x) - xq_{k+1}(x)).
    \end{align}
\end{lemma}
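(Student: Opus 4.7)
The plan is a downward induction on $k$, running from $k=n$ down to $k=0$. The key enabling observation is that Definition~\ref{def:qsp} immediately yields the recursive factorization
\[
\qsp(\{\phi_j\}_{k \le j \le n}, x) = e^{\ii \phi_k \msigma_z}\, \mr(x)\, \qsp(\{\phi_j\}_{k+1 \le j \le n}, x),
\]
obtained by peeling the leftmost phase--then--reflection pair off the product in the definition. With this in hand, the inductive step reduces to a single $2\times 2$ matrix multiplication.

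For the base case $k=n$, the circuit reduces to the lone factor $e^{\ii \phi_n \msigma_z}$, whose diagonal entries $e^{\pm\ii\phi_n}$ match $p_n(x)=e^{\ii\phi_n}$ and $p_n^*(-x)=e^{-\ii\phi_n}$ (since $p_n$ is a constant), and whose off-diagonals are consistent with $q_n \equiv 0$.

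For the inductive step, I would substitute the form hypothesized at level $k+1$ into the recursion and multiply out. The top-left and bottom-left entries of the resulting $2\times 2$ matrix collapse directly onto the recurrences defining $p_k(x)$ and $q_k(x)\sqrt{1-x^2}$, using $\sqrt{1-x^2}\cdot\sqrt{1-x^2} = 1-x^2$. The remaining task is to verify that the top-right and bottom-right entries coincide with $q_k^*(-x)\sqrt{1-x^2}$ and $p_k^*(-x)$; this amounts to conjugating the defining recurrences, substituting $-x$ for $x$, and matching against the explicit products, using that $\sqrt{1-x^2}$ is real and even in $x$ and that $e^{\ii\phi_k}$ is unaffected by the $x\mapsto -x$ substitution.

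The only mild obstacle is bookkeeping around the interaction of three involutions---complex conjugation of polynomial coefficients, the substitution $x \mapsto -x$, and the reflection structure of $\mr(x)$---but once one commits to carefully distinguishing $p^*(-x)$ from $p(-x)^*$ (they are equal for polynomials, but it helps to fix a convention and stay with it), the computation is purely mechanical and the induction closes immediately.
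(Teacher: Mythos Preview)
Your proposal is correct and follows essentially the same approach as the paper: downward induction on $k$, with the base case $k=n$ given by the single phase matrix $e^{\ii\phi_n\msigma_z}$, and the inductive step obtained by peeling off $e^{\ii\phi_k\msigma_z}\mr(x)$ on the left and carrying out the $2\times 2$ matrix product. The paper simply writes out the four entries explicitly rather than verbally describing the conjugate-and-substitute check for the right column, but the content is identical.
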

\begin{proof}
The base case is because
\begin{align*}
    \qsp(\{\phi_n\}, x) = \begin{pmatrix}
        e^{\ii \phi_n} & 0 \\ 0 & e^{-\ii \phi_n}
    \end{pmatrix}.
\end{align*}
The inductive case is a computation:
\begin{align*}
    & \qsp(\{\phi_j\}_{k \leq j \leq n}, x) \\
    &= e^{\ii \phi_{k} \msigma_z} \mr(x) \cdot \qsp(\{\phi_j\}_{k+1 \leq j \leq n}, x) \\
    &= \begin{pmatrix}
        e^{\ii \phi_{k}} x & e^{\ii \phi_{k}} \sqrt{1-x^2} \\
        e^{-\ii \phi_{k}} \sqrt{1-x^2} & -e^{-\ii \phi_{k}} x
    \end{pmatrix}\begin{pmatrix}
        p_{k+1}(x) & q_{k+1}^*(-x)\sqrt{1-x^2} \\
        q_{k+1}(x)\sqrt{1-x^2} & p_{k+1}^*(-x)
    \end{pmatrix} \\
    &= \begin{pmatrix}
        e^{\ii \phi_{k}}(x p_{k+1}(x) + (1-x^2)q_{k+1}(x)) & e^{\ii \phi_{k}}(p_{k+1}^*(-x) + x q_{k+1}^*(-x))\sqrt{1-x^2} \\
        e^{-\ii \phi_{k}}(p_{k+1}(x) - x q_{k+1}(x))\sqrt{1-x^2} & e^{-\ii \phi_{k}}(-x p_{k+1}^*(-x) + (1-x^2)q_{k+1}^*(-x))
    \end{pmatrix} \\
    &= \begin{pmatrix}
        p_{k}(x) & q_{k}^*(-x)\sqrt{1-x^2} \\
        q_{k}(x)\sqrt{1-x^2} & p_{k}^*(-x)
    \end{pmatrix}. \qedhere
\end{align*}
\end{proof}

\begin{theorem}[{Variant of \cite[Theorem 3]{gslw18}}] \label{thm:qsp-characterization}
    A degree-$n$ polynomial $p(x) \in \mathbb{C}[x]$ is QSP-achievable if and only if there is a polynomial $q(x)$ such that:
    \begin{enumerate}[label=\textup{(\alph*)}]
        \item $q$ has degree $\leq n-1$;
        \item $(p, q)$ are (even, odd) or (odd, even);
        \item $\abs{p(x)}^2 + (1-x^2)\abs{q(x)}^2 \equiv 1$.
    \end{enumerate}
\end{theorem}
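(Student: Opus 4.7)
The plan is to prove both directions by exploiting the recurrence from \cref{lem:qsp-recurrence}: the forward direction reduces to unitarity plus a downward induction, while the backward direction is an upward induction on $n$ that ``peels off'' one rotation per step.

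For the \textbf{forward direction}, suppose $\Phi = \{\phi_j\}_{0 \leq j \leq n}$ implements $p$. Applying \cref{lem:qsp-recurrence} at $k = 0$ gives $p_0 = p$ together with a partner polynomial $q := q_0$. Since $\qsp(\Phi, x)$ is a product of matrices each unitary for $x \in [-1, 1]$, it is itself unitary there; the first column having unit norm yields $|p(x)|^2 + (1-x^2)|q(x)|^2 = 1$ on $[-1,1]$, and this polynomial identity (interpreted as $p(x)\overline{p}(x) + (1-x^2)q(x)\overline{q}(x) = 1$, where $\overline{\phantom{p}}$ denotes coefficient-wise conjugation) extends from an infinite set to all of $\mathbb{C}$, establishing (c). Conditions (a) and (b) follow by downward induction from $k = n$: the base $p_n = e^{\ii\phi_n}$, $q_n \equiv 0$ has $\deg p_n = 0$ and a trivial parity pair, and each application of the recurrence raises the pair's degrees by at most one while flipping its parities between (even, odd) and (odd, even).

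For the \textbf{backward direction}, I induct on $n$. The base $n = 0$ forces $q \equiv 0$ and $|p|^2 \equiv 1$, so $p = e^{\ii \phi_0}$ for some $\phi_0$, and $\qsp(\{\phi_0\}, x) = e^{\ii\phi_0\msigma_z}$ achieves $p$. For the inductive step with $(p_0, q_0) = (p, q)$ satisfying (a)--(c) and $\deg p = n$, I invert the recurrence to
\begin{align*}
    p_1(x) &= e^{-\ii\phi_0}\, x\, p_0(x) + e^{\ii\phi_0}(1-x^2)\, q_0(x), \\
    q_1(x) &= e^{-\ii\phi_0}\, p_0(x) - e^{\ii\phi_0}\, x\, q_0(x),
\end{align*}
and pick $\phi_0$ so that $(p_1, q_1)$ has degrees $(\leq n-1, \leq n-2)$ and still satisfies (a)--(c). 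Letting $a, b$ denote the coefficients of $x^n$ in $p$ and $x^{n-1}$ in $q$, the $x^{n+1}$ coefficient of $p_1$ and the $x^n$ coefficient of $q_1$ both equal $e^{-\ii\phi_0}a - e^{\ii\phi_0}b$, which vanishes iff $e^{2\ii\phi_0} = a/b$; the remaining subleading top coefficients ($x^n$ of $p_1$ and $x^{n-1}$ of $q_1$) vanish automatically by parity. A direct algebraic check (undoing the recurrence) shows condition (c) for $(p_0, q_0)$ implies condition (c) for $(p_1, q_1)$. Applying the inductive hypothesis to $(p_1, q_1)$ yields a phase sequence achieving $p_1$, and gluing this with $\phi_0$ via the recurrence achieves $p$.

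The \textbf{main obstacle} is verifying that a valid $\phi_0$ exists at each step: solvability of $e^{2\ii\phi_0} = a/b$ requires $|a| = |b|$, which is exactly what the leading $x^{2n}$ coefficient of $|p|^2 + (1-x^2)|q|^2 \equiv 1$ imposes, together with the fact that $b \neq 0$ (since $\deg p = n$ gives $a \neq 0$). A secondary subtlety is handling the case when the reduced pair $(p_1, q_1)$ has degree strictly less than $n-1$; in that situation the inductive hypothesis returns a shorter phase sequence, which one can extend to the desired length by prepending pairs of zero phases and invoking the identity $\mr(x)^2 = \id$ to verify that $\qsp(\{0, 0\} \cup \Phi', x) = \qsp(\Phi', x)$. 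The remaining details are routine bookkeeping on degrees, parities, and preservation of (c).
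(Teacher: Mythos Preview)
Your proposal is correct and follows essentially the same approach as the paper: both directions use \cref{lem:qsp-recurrence}, with the forward direction invoking unitarity for (c) and the recurrence for (a)--(b), and the backward direction inducting on $n$ by peeling off one phase factor, using the leading-coefficient identity $|a|=|b|$ from condition (c) to solve $e^{2\ii\phi_0}=a/b$. You are in fact slightly more careful than the paper in two places: you correctly include the factor of $x$ in the inverted recurrence for $p_1$ (the paper's displayed formula for $p_\downarrow$ drops it), and you explicitly address the edge case where the reduced pair has degree strictly below $n-1$ via zero-phase padding, which the paper glosses over.
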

\begin{proof}
First, we consider the ``only if'' direction.
Suppose $p(x)$ is QSP-achievable with the phase factors $\Phi \in \mathbb{R}^{n+1}$.
Then, by \cref{lem:qsp-recurrence}, there is some $q(x)$ such that
\begin{align*}
    \qsp(\Phi, x) = \begin{pmatrix}
        p(x) & q^*(-x)\sqrt{1-x^2} \\
        q(x)\sqrt{1-x^2} & p^*(-x)
    \end{pmatrix},
\end{align*}
derived from the recurrence described in that lemma.
From this recurrence, we can verify that at all times, conditions (a) and (b) are satisfied.
Finally, condition (c) is always satisfied because $\qsp(\Phi, x)$ is a product of unitary matrices, and so is unitary: the first column having norm one is equivalent to $\abs{p(x)}^2 + (1-x^2)\abs{q(x)}^2 = p(x)p^*(x) + (1-x^2)q(x)q^*(x) = 1$, and this argument works for every $x \in [-1, 1]$.
Because it holds for infinitely many $x$, the equality holds as polynomials.

Second, we consider the ``if'' direction.
Suppose we have some $p(x)$ of degree $n$ and $q(x)$ satisfying (a), (b), and (c).
We want to construct phase factors that implement $p(x)$.
We proceed by induction: when $n = 0$, this means that $p(x)$ is scalar and $q(x)$ has degree $\leq -1$ (meaning it must be zero).
Thus, $p(x) \equiv e^{\ii \phi}$ for some $\phi$; we can implement this with $\Phi = \{\phi\}$.
For the inductive step, consider $p(x)$ of degree $n+1$.
If we could show that there exists some $\varphi$ such that 
\begin{align} \label{eq:qsp-characterization-if}
    (e^{\ii \varphi \msigma_z} \mr(x))^\dagger \begin{pmatrix}
        p(x) & q^*(-x)\sqrt{1-x^2} \\
        q(x)\sqrt{1-x^2} & p^*(-x)
    \end{pmatrix}
    &= \begin{pmatrix}
        p_\downarrow(x) & q_\downarrow^*(-x)\sqrt{1-x^2} \\
        q_\downarrow(x)\sqrt{1-x^2} & p_\downarrow^*(-x)
    \end{pmatrix}
\intertext{
for $p_\downarrow, q_\downarrow$ some even/odd polynomials of one degree lower than $p$ and $q$, then we would be done.
By assumption, the matrices on the left-hand side of \eqref{eq:qsp-characterization-if} are unitary, so the right-hand side matrix is also unitary.
Thus, $p_\downarrow$ and $q_\downarrow$ satisfy all the properties of the induction hypothesis, and there are phase factors $\{\phi_0,\ldots,\phi_n\} \in \mathbb{R}^{n+1}$ giving the equality}
    (e^{\ii \varphi \msigma_z} \mr(x))^\dagger \begin{pmatrix}
        p(x) & q^*(-x)\sqrt{1-x^2} \\
        q(x)\sqrt{1-x^2} & p^*(-x)
    \end{pmatrix}
    &= \qsp(\{\phi_0, \ldots, \phi_n\}, x). \\
    \begin{pmatrix}
        p(x) & q^*(-x)\sqrt{1-x^2} \\
        q(x)\sqrt{1-x^2} & p^*(-x)
    \end{pmatrix}
    &= \qsp(\{\varphi, \phi_0, \ldots, \phi_n\}, x)
\end{align}
So it comes down to finding the right value of $\varphi$ that could remove a degree from $p$ and $q$ in \eqref{eq:qsp-characterization-if}.
By properties (a) and (b), we can write
\begin{align}
    p(x) &= a_{n+1} x^{n+1} + a_{n-1}x^{n-1} + \ldots \\
    q(x) &= b_{n} x^{n} + a_{n-2}x^{n-2} + \ldots
\end{align}
The condition (c) implies that $\abs{a_{n+1}} = \abs{b_n}$.
Now we perform the matrix calculation.
Since $\mr(x)$ is its own inverse, $(e^{\ii \varphi \msigma_z} \mr(x))^\dagger = \mr(x) e^{-\ii \varphi \msigma_z}$, so
\begin{align}
    & \quad (e^{\ii \varphi \msigma_z} \mr(x))^\dagger \begin{pmatrix}
        p(x) & q^*(-x)\sqrt{1-x^2} \\
        q(x)\sqrt{1-x^2} & p^*(-x)
    \end{pmatrix} \\
    &= \begin{pmatrix}
        e^{-\ii \varphi} x & e^{\ii \varphi} \sqrt{1-x^2} \\
        e^{-\ii \varphi} \sqrt{1-x^2} & -e^{\ii \varphi} x
    \end{pmatrix}\begin{pmatrix}
        p(x) & q^*(-x)\sqrt{1-x^2} \\
        q(x)\sqrt{1-x^2} & p^*(-x)
    \end{pmatrix} \\
    &= \begin{pmatrix}
        e^{-\ii \varphi} p(x) + e^{\ii \varphi}(1-x^2)q(x) & (e^{\ii \varphi} p^*(-x) + e^{-\ii \varphi}xq^*(-x))\sqrt{1-x^2} \\
        (e^{-\ii \varphi}p(x) - e^{\ii \varphi} xq(x))\sqrt{1-x^2} & -e^{\ii \varphi}xp^*(-x) + e^{-\ii \varphi}(1-x^2)q^*(-x)
    \end{pmatrix}
\end{align}
So, we need the following polynomials to have lower degree:
\begin{align}
    p_\downarrow(x) &= e^{-\ii \varphi}p(x) + e^{\ii \varphi}(1-x^2)q(x) \\
    q_\downarrow(x) &= e^{-\ii \varphi}p(x) - e^{\ii \varphi}xq(x)
\end{align}
The ``leading'' coefficient of $x^{n+1}$ for $p_\downarrow$ and $x^n$ for $q_\downarrow$ are the same: $e^{-\ii \varphi} a_{n+1} - e^{\ii \varphi} b_n$.
If we choose $\varphi$ such that $e^{\ii \varphi} = \sqrt{a_{n+1}/b_n}$, then this coefficient is 0, and so the degrees of $p_\downarrow$ and $q_\downarrow$ are $\leq n-1$ and $\leq n-2$, as desired.
\end{proof}

The characterization of when a polynomial $p(x)$ is QSP-achievable is still somewhat difficult to understand.
With more work, we can give a clearer understanding of QSP-achievable polynomials, if we give up the imaginary degree of freedom in our polynomials.
Generalizing the notion of $p$ being QSP-achievable, we say that the pair of polynomials $(p, q)$ is QSP-achievable if there are phase factors such that $p$ and $q$ are the two polynomials in the characterization of \cref{lem:qsp-recurrence}.
\begin{theorem}[{\cite[Theorem 5, Lemma 6]{gslw18}}] \label{thm:real-characterization}
    Let $p_{\Re}(x), q_{\Re}(x) \in \R[x]$ be real-valued polynomials with $p$ of degree $n$.
    Then there exist $p, q \in \C[x]$ such that $(p, q)$ is QSP-achievable and $p_{\Re} = \Re(p)$, $q_{\Re} = \Re(q)$ if and only if
    \begin{enumerate}[label=\textup{(\alph*)}]
        \item $q_{\Re}$ has degree $\leq n-1$;
        \item $(p_{\Re}, q_{\Re})$ are (even, odd) or (odd, even);
        \item[\textup{(c')}] $(p_{\Re}(x))^2 + (1-x^2)(q_{\Re}(x))^2 \leq 1$ for $x \in [-1,1]$.
    \end{enumerate}
\end{theorem}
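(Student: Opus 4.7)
The plan is to reduce both directions to Theorem~\ref{thm:qsp-characterization}. The \emph{only if} direction is routine: given $(p,q)$ QSP-achievable with $p_\Re = \Re(p)$, $q_\Re = \Re(q)$, Theorem~\ref{thm:qsp-characterization} supplies conditions (a), (b), (c) on $(p, q)$. The degree and parity conditions transfer directly to the real parts (taking a real part cannot increase degree, and preserves even/odd parity). For (c'), observe that on $[-1, 1]$ we have $1 - x^2 \ge 0$, $|p|^2 \ge p_\Re^2$, and $|q|^2 \ge q_\Re^2$, so (c) for $(p, q)$ implies (c') for $(p_\Re, q_\Re)$.

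The \emph{if} direction is the substantive one. Given real $(p_\Re, q_\Re)$ satisfying (a), (b), (c'), we construct real $p_\Im, q_\Im$ with parities matching $p_\Re, q_\Re$ and degree bounds $\deg p_\Im \le n$, $\deg q_\Im \le n - 1$, such that $(p, q) \defeq (p_\Re + \ii p_\Im, q_\Re + \ii q_\Im)$ satisfies condition (c) of Theorem~\ref{thm:qsp-characterization}. Unpacking $|p|^2 + (1-x^2)|q|^2 \equiv 1$ into real and imaginary parts, this reduces to exhibiting $p_\Im, q_\Im$ realizing
\begin{equation*}
    p_\Im(x)^2 + (1 - x^2)\, q_\Im(x)^2 = R(x),\text{ where } R(x) \defeq 1 - p_\Re(x)^2 - (1-x^2)\, q_\Re(x)^2.
\end{equation*}
By (c'), $R \ge 0$ on $[-1,1]$, and by the parity condition (b), $R$ is even in $x$, so $R(x) = S(x^2)$ for a polynomial $S$ non-negative on $[0,1]$.

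The key step is to decompose $S$ via the classical Markov--Luk\'{a}cs theorem on $[0, 1]$: a non-negative polynomial on $[0, 1]$ of even degree $2m$ admits $S(y) = A(y)^2 + y(1-y)\, B(y)^2$ with $\deg A \le m$, $\deg B \le m-1$, and of odd degree $2m+1$ admits $S(y) = y\, A(y)^2 + (1-y)\, B(y)^2$ with $\deg A, \deg B \le m$. In the case $(p_\Re, q_\Re) = (\text{odd}, \text{even})$ so $n = 2m+1$, the polynomial $S$ has degree at most $n$ (odd), and the odd Markov--Luk\'{a}cs form yields $p_\Im(x) \defeq x A(x^2)$ (odd, degree $\le n$) and $q_\Im(x) \defeq B(x^2)$ (even, degree $\le n-1$); a direct substitution $y = x^2$ confirms $p_\Im^2 + (1-x^2)q_\Im^2 = S(x^2) = R(x)$. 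The case $(p_\Re, q_\Re) = (\text{even}, \text{odd})$ is symmetric, using the even Markov--Luk\'{a}cs form with $p_\Im(x) \defeq A(x^2)$ and $q_\Im(x) \defeq x B(x^2)$. In both cases the degree budgets match the required bounds, and conditions (a), (b), (c) of Theorem~\ref{thm:qsp-characterization} now hold for $(p, q)$, giving QSP-achievability.

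The main obstacle is the Markov--Luk\'{a}cs factorization itself, which I would take as a classical black box. A self-contained proof would proceed by factoring $S$ via the fundamental theorem of algebra, pairing complex-conjugate roots into squares of real polynomials, and distributing the real roots (which by non-negativity on $[0,1]$ come with forced parity on each of the intervals $(-\infty, 0]$, $[0,1]$, $[1, \infty)$) between the $y$ and $(1-y)$ factors. A plausible alternative route avoiding Markov--Luk\'{a}cs is to apply Fej\'{e}r--Riesz to the trigonometric polynomial $1 - p_\Re(\cos\theta)^2 - \sin^2\theta\, q_\Re(\cos\theta)^2$ on the unit circle, though the additional bookkeeping to enforce the required parity structure of $p_\Im, q_\Im$ appears of comparable complexity.
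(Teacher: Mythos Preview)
Your proof is correct and follows the same route as the paper: both directions reduce to Theorem~\ref{thm:qsp-characterization}, and the substantive step is representing $1 - p_\Re^2 - (1-x^2)q_\Re^2$ as $A^2 + (1-x^2)B^2$ with the right degree and parity. The only difference is packaging: you substitute $y = x^2$ and invoke Markov--Luk\'{a}cs on $[0,1]$ as a black box, whereas the paper works directly in $x$, shows the representation is multiplicative, and then verifies it by hand on a complete list of irreducible even real polynomials non-negative on $[-1,1]$ (five root-type cases). Your ``self-contained'' sketch---factor over $\C$, pair conjugate roots, and distribute real roots according to their location relative to $\{0,1\}$---is exactly what the paper carries out.
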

To interpret this claim, it implies that if we have real polynomials where the ``unit norm'' constraint is merely an inequality (c'), then we can add imaginary components to make it an equality, so that by \cref{thm:qsp-characterization} these supplemented polynomials are achievable.
\cref{thm:real-one} follows as a corollary of this theorem, taking $q = 0$.
\begin{proof}
The ``only if'' direction is more straightforward: if $(p, q)$ is QSP-achievable, then the real parts of $p$ and $q$ satisfy (a), (b), and (c') by \cref{thm:qsp-characterization}.

The ``if'' direction requires some work: given $p_{\Re}$ and $q_{\Re}$, we need to find some $p_{\Im} \in \R[x]$ and $q_{\Im} \in \R[x]$ of the right degree and parity such that $p \defeq p_{\Re} + \ii p_{\Im}$ and $q \defeq q_{\Re} + \ii q_{\Im}$ satisfy
\begin{align*}
    \abs{p(x)}^2 + (1-x^2)\abs{q(x)}^2 = p_{\Re}^2 + p_{\Im}^2 + (1-x^2)(q_{\Re}^2 + q_{\Im}^2) \equiv 1.
\end{align*}
This would imply the conclusion via \cref{thm:qsp-characterization}.
Consider $P = 1 - p_{\Re}^2 - (1-x^2)q_{\Re}^2$, which is an even polynomial with real coefficients.
By assumption (c'), we know $P$ is non-negative in $x \in [-1,1]$.
If we can write $P = A^2 + (1-x^2)B^2$ where $\deg(A) \leq \deg(P)$, $\deg(B) \leq \deg(P) - 1$, and $(A, B)$ are (odd, even) or (even, odd), then we are done.
If we have two polynomials $P$ and $Q$ that can be expressed in the above way, then their product can:
\begin{align*}
    PQ &= (A_P^2 + (1-x^2)B_P^2)(A_Q^2 + (1-x^2)B_Q^2) \\
    &= \abs{A_P+ \ii\sqrt{1-x^2}B_P}^2\abs{A_Q + \ii\sqrt{1-x^2}B_Q}^2 \\
    &= \abs{(A_PA_Q - (1-x^2)B_PB_Q) + \ii\sqrt{1-x^2}(A_P B_Q + A_Q B_P)}^2 \\
    &= (A_PA_Q - (1-x^2)B_PB_Q)^2 + (1-x^2)(A_P B_Q + A_Q B_P)^2.
\end{align*}
So, it suffices to prove that this representation is possible for all ``irreducible'' polynomials, i.e.\ even polynomials with real coefficents that are non-negative in $[-1, 1]$, that cannot be decomposed into the product of two polynomials satisfying the same criteria.
Using the fundamental theorem of algebra, we can give a complete list of such polynomials up to scaling by a positive number.
\begin{enumerate}
    \item (Polynomials with roots $(0, 0)$) $R(x) = x^2$; here, $A = x$ and $B = 0$.
    \item (Polynomials with roots $(-s, -s, s, s)$ for $s \in (0, 1)$) $R(x) = (x^2-s^2)^2$; here, $A = (x^2-s^2)$ and $B = 0$.
    \item (Polynomials with roots $(-s, s)$ for $s \geq 1$) $R(x) = s^2 - x^2$; here, $A = \sqrt{s^2 - 1}x$ and $B = s$.
    \item (Polynomials with roots $(-\ii s, \ii s)$ for $s > 0$) $R(x) = x^2 + s^2$ for $C > 0$; here, $A = \sqrt{s^2 + 1}x$ and $B = s$.
    \item (Polynomials with roots $(s + \ii t, s - \ii t, -s + \ii t, -s - \ii t)$ for $s, t > 0$) 
    $R(x) = x^4 + 2x^2(t^2 - s^2) + (s^2 + t^2)^2$; here, $A = cx^2 - (s^2 + t^2)$ and $B = \sqrt{c^2 - 1}x$ for $c = s^2 + t^2 + \sqrt{2(s^2 + 1)t^2 + (s^2 - 1)^2 + t^4}$.
\end{enumerate}
Because all of these polynomials can be written in the desired representation, all polynomials satisfying the criteria can.
\end{proof}
\section{More applications of the CS decomposition}\label{app:csd_interpret}

To shed light on the CS decomposition as capturing interactions between subspaces, in this section we derive two further applications beyond QSVT. We note that we do not claim the intuition provided in this section is very helpful for understanding the particular application of QSVT. However, we hope the reader is sufficiently convinced of the virtue of the CS decomposition as a technical tool, and this section serves to provide additional background on this tool.

\subsection{Principal angles}

In this section we consider two rank-$a$ subspaces $\xset = \Image(\mpi_x) \subset \C^d$ and $\yset = \Image(\mpi_y) \subset \C^d$, for some $a \in [d]$.
For $k \in [a]$, we define the $k^{\text{th}}$ \emph{principal angle} between $\xset$ and $\yset$ recursively via
\begin{equation}\label{eq:svd_big}\cos(\theta_k) \defeq \max_{\substack{x \in \xset \\ \norm{x}_2 = 1}} \max_{\substack{y \in \yset \\ \norm{y}_2 = 1}} \inprod{x}{y} \text{ subject to } x \perp x_i, y \perp y_i \text{ for all } i < k, \end{equation}
where $x_k$, $y_k$ are the \emph{principal vectors} realizing the maximum above. In other words, the first principal angle $\theta_1$ is the largest angle between a vector in $\xset$ and a vector in $\yset$; $\theta_2$ is the largest angle between vectors in the subspaces of $\xset$ and $\yset$ orthogonal to the vectors achieving $\theta_1$, and so on. This definition only depends on the subspaces, and so is agnostic to the choice of basis for $\xset$ and $\yset$.

\begin{lemma} \label{lem:svs-are-angles}
Let $\mx, \my \in \C^{d \times a}$ be such that their columns are orthonormal bases for $\xset$ and $\yset$, respectively.
Then, the values $\{\cos(\theta_k)\}_{k \in [a]}$ are the singular values of $\mx^\dagger \my$.
Further, letting $\mv \mc \mw^\dagger$ be the SVD of $\mx^\dagger \my$, the principal vectors between $\xset, \yset$ are columns of $\mx\mv$, $\my\mw$.
\end{lemma}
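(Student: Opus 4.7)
The plan is to parameterize $\xset$ and $\yset$ via their orthonormal bases $\mx, \my$, converting the recursive definition of principal angles into a variational problem on $\C^a$ that matches the standard variational characterization of singular values of the $a \times a$ matrix $\mx^\dagger \my$.

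First, since $\mx$ (resp.\ $\my$) has orthonormal columns spanning $\xset$ (resp.\ $\yset$), every unit vector $x \in \xset$ can be uniquely written as $x = \mx\alpha$ for a unit $\alpha \in \C^a$, and analogously $y = \my\beta$; the relation $\mx^\dagger \mx = \id$ turns orthogonality $x \perp x'$ within $\xset$ into orthogonality $\alpha \perp \alpha'$ within $\C^a$, and similarly for $\yset$. Rewriting the inner product as $\inprod{x}{y} = \alpha^\dagger (\mx^\dagger \my) \beta$, the recursive definition \eqref{eq:svd_big} becomes
\[
\cos(\theta_k) = \max_{\substack{\alpha,\beta \in \C^a \\ \norm{\alpha} = \norm{\beta} = 1}} \alpha^\dagger (\mx^\dagger \my) \beta \text{ subject to } \alpha \perp \alpha_i,\ \beta \perp \beta_i \text{ for all } i < k,
\]
where $\alpha_i, \beta_i$ are the maximizers from previous steps (so that the principal vectors satisfy $x_i = \mx\alpha_i$ and $y_i = \my\beta_i$).

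Next, I would invoke the standard variational characterization of singular values: if $M \in \C^{a \times a}$ has SVD $\mv \mc \mw^\dagger$ with $c_1 \ge \cdots \ge c_a \ge 0$ on the diagonal of $\mc$, then $c_k$ equals the maximum of $\alpha^\dagger M \beta$ over unit $\alpha, \beta$ orthogonal to the first $k-1$ left and right singular vectors respectively, and the maximum is attained at $\alpha = \mv e_k$, $\beta = \mw e_k$. This follows from the unitary change of variable $\tilde\alpha = \mv^\dagger \alpha$, $\tilde\beta = \mw^\dagger \beta$, under which $\alpha^\dagger M \beta = \sum_j c_j \tilde\alpha_j^* \tilde\beta_j$; after the orthogonality constraints are translated through the unitaries $\mv, \mw$ (this is where one must check the induction carefully), the Cauchy--Schwarz bound $\sum_j c_j \tilde\alpha_j^* \tilde\beta_j \le c_k$ holds with equality exactly on the claimed vectors.

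Applying this characterization inductively to $M = \mx^\dagger \my$ yields $\cos(\theta_k) = \mc_{kk}$, with $\alpha_k = \mv e_k$ and $\beta_k = \mw e_k$; hence the principal vectors $x_k = \mx \mv e_k$ and $y_k = \my \mw e_k$ are the $k$-th columns of $\mx \mv$ and $\my \mw$. The main obstacle is bookkeeping: verifying that the orthogonality constraints on $(\alpha, \beta)$ inherited from previous rounds transform into the correct orthogonality constraints on $(\tilde\alpha, \tilde\beta)$ so that an honest induction yields $c_k$ rather than some smaller $c_j$; once set up, the translation is immediate because $\mv$ and $\mw$ are unitary and align the optimal directions exactly with the standard basis.
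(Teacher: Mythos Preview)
Your proposal is correct and follows essentially the same approach as the paper: parameterize unit vectors in $\xset,\yset$ via the orthonormal bases $\mx,\my$, observe that orthogonality constraints transfer through $\mx^\dagger\mx=\id$, and match the recursive definition \eqref{eq:svd_big} to the variational characterization of the SVD by induction. The paper's proof is terser (it simply invokes the variational characterization rather than unpacking it via the diagonalizing change of variable and Cauchy--Schwarz), but the structure is identical.
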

\begin{proof}
This result follows from the variational characterization of singular values and vectors.
Let $\mc = \diag{c}$. Fix $k \in [a]$ and suppose inductively the conclusion holds for all $i < k$. Recall that the SVD is recursively defined by
\begin{equation}\label{eq:svd_small}c_k = \max_{\norm{v}_2 = \norm{w}_2 = 1} v^\dagger \mx^\dagger \my w, \text{ subject to } v \perp v_i,\; w \perp w_i \text{ for all } i < k. \end{equation}
Inductively assume $x_i = \mx v_i$ for all $i < k$. Notice that every unit vector in $\xset$ can be written as $\mx v$ for some unit $v \in \C^a$, and since $\mx^\dagger \mx = \id$, we have $\mx v \perp \mx v_i$ iff $v \perp v_i$. By reasoning similarly for $\yset$, we conclude the optimization problems \eqref{eq:svd_big} and \eqref{eq:svd_small} are the same under the transformation $x \gets \mx v$ and $y \gets \my w$. Hence setting $x_k \gets \mx v_k$ and $y_k \gets \my w_k$ we may continue inducting.
\end{proof}

Now we explain the connection between the above digression and the CS decomposition.
\cref{lem:svs-are-angles} shows that we can find the principal angles between $\xset$ and $\yset$ by taking orthogonal bases, $\mx$ and $\my$, and computing the SVD of $\mx^\dagger \my$.
We could further ask: what are the principal angles of the orthogonal subspaces, $\xset_\perp = \{u \mid \langle u, x\rangle = 0 \text{ for all } x \in \xset\}$ and $\yset_\perp = \{u \mid \langle u, y\rangle = 0 \text{ for all } y \in \yset\}$?
We can apply the same lemma on bases for the subspaces, $\mx^\perp$ and $\my^{\perp}$, to compute them; however, we can say more. First, let the following matrices be unitary completions of $\mx$, $\my$:
\begin{equation}\label{eq:complete_unitary}
\begin{pmatrix}
\mx & \mx_\perp
\end{pmatrix},\; \begin{pmatrix}
\my & \my_\perp
\end{pmatrix}.
\end{equation}
We next take the CS decomposition (Theorem~\ref{thm:cs}) of the product (which is also unitary),
\[\mmu = \begin{pmatrix}
\mx & \mx_\perp
\end{pmatrix}^\dagger \begin{pmatrix}
\my & \my_\perp
\end{pmatrix} = \begin{pmatrix} \mx^\dagger \my & \mx^\dagger \my_\perp \\ \mx_\perp^\dagger \my & \mx_\perp^\dagger \my_\perp\end{pmatrix}.\]
This gives us $\mv_1, \mv_2, \mw_1, \mw_2$ such that
\[\begin{pmatrix}\mv_1^\dagger \mx^\dagger \my \mw_1 & \mv_1^\dagger \mx^\dagger \my_\perp \mw_2 \\ \mv_2^\dagger \mx_\perp^\dagger \my \mw_1 & \mv_2^\dagger \mx_\perp^\dagger \my_\perp \mw_2 \end{pmatrix} = \begin{pmatrix} \md_{11} & \md_{12} \\ \md_{21} & \md_{22} \end{pmatrix},\]
of the form in Theorem~\ref{thm:cs}.
This gives simultaneous SVDs for each block, and thus gives the principal angles and vectors for all combinations of $\xset, \xset_\perp, \yset$, and $\yset_\perp$.
In particular, we can see that the principal angles of $(\xset, \yset)$ and $(\xset_\perp, \yset_\perp)$ are related: up to padding by 0's and 1's, they are identical!

Further, we can take $\mx \gets \mx \mv_1$, $\my \gets \my \mw_1$, etc.\ without affecting the induced subspaces (since e.g.\ $\mx \mv_1 \mv_1^\dagger \mx^\dagger = \mx \mx^\dagger$), but such that after this transformation we simply have
\begin{equation}\label{eq:cs_canonical}\begin{pmatrix}\mx^\dagger \my & \mx^\dagger \my_\perp \\ \mx_\perp^\dagger \my & \mx_\perp^\dagger \my_\perp \end{pmatrix} = \begin{pmatrix} \md_{11} & \md_{12} \\ \md_{21} & \md_{22} \end{pmatrix}.\end{equation}
That is, we can choose canonical basis representations of $\xset$, $\yset$, and their complement subspaces consistently, such that every pairing directly induces the ``principal angles and vectors'' defined above. We remark that this argument extends just fine to $\xset$, $\yset$ of different dimensions.

\subsection{Jordan's lemma}

Next we derive Jordan's lemma \cite{jordan75}, a useful way of decomposing $\C^d$ into subspaces (induced by a unitary matrix) which are jointly compatible with two projection matrices in a certain sense. We note that Jordan's lemma has seen varied implicit or explicit uses in the quantum computing literature (including a suggestion by \cite{gslw18}), and refer to \cite{reg06} for an account of this.

\begin{lemma}\label{lem:jordan}
Let $\mpi_x, \mpi_y \in \C^{d \times d}$ be projection matrices. There exists a unitary matrix $\mmu \in \C^{d \times d}$ with columns $\{u_i\}_{i \in [d]}$, and a partition of $[d]$ into $\calS \defeq \{S_j\}_{j \in [k]}$ such that $|S_j| \in \{1, 2\}$ for all $j \in [k]$, and $\mmu^\dagger \mpi_x \mmu$, $\mmu^\dagger \mpi_y \mmu$ are block-diagonal with blocks indexed by $\{S_j\}_{j \in [k]}$; each block is trace-$1$. Moreover for each $S_j = \{i, i'\}$ where $|S_j| = 2$, we have $\mpi_x u_i \parallel \mpi_x u_{i'}$ and $\mpi_y u_i \parallel \mpi_y u_{i'}$.
\end{lemma}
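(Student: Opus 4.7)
My plan is to deduce Jordan's lemma as a direct consequence of the CS decomposition applied to a unitary built from the two subspaces. Let $a = \mathrm{rank}(\mpi_x)$, let $\mx \in \C^{d \times a}$ have orthonormal columns spanning $\xset \defeq \Image(\mpi_x)$, and complete it to a unitary $\begin{pmatrix}\mx & \mx_\perp\end{pmatrix}$; choose $\my, \my_\perp$ analogously for $\yset \defeq \Image(\mpi_y)$. Following the discussion in Appendix~\ref{app:csd_interpret}, I form the unitary $\mn \defeq \begin{pmatrix}\mx & \mx_\perp\end{pmatrix}^\dagger \begin{pmatrix}\my & \my_\perp\end{pmatrix}$ and apply Theorem~\ref{thm:cs} to obtain block-diagonal unitaries $\mv = \diag{\mv_1, \mv_2}$ and $\mw = \diag{\mw_1, \mw_2}$ with $\mn = \mv\md\mw^\dagger$ and $\md$ in the canonical form~\eqref{eq:d-form}.

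Define the candidate unitary $\mmu \defeq \begin{pmatrix}\mx\mv_1 & \mx_\perp\mv_2\end{pmatrix}$. Since the columns of $\mx\mv_1$ still span $\xset$, one has $\mpi_x = (\mx\mv_1)(\mx\mv_1)^\dagger$ and therefore $\mmu^\dagger \mpi_x \mmu = \diag{\id_a, \mzero_{d-a}}$ is already diagonal. For $\mpi_y = \my\my^\dagger$, extracting the first $\mathrm{rank}(\mpi_y)$ columns of $\mn = \mv\md\mw^\dagger$ yields
\[
\begin{pmatrix}\mx & \mx_\perp\end{pmatrix}^\dagger \my = \mv \begin{pmatrix}\md_{11} \\ \md_{21}\end{pmatrix} \mw_1^\dagger,
\]
so $\mmu^\dagger \my = \begin{pmatrix}\md_{11} \\ \md_{21}\end{pmatrix} \mw_1^\dagger$, and the $\mw_1$ factors cancel against their adjoints in $\mmu^\dagger \mpi_y \mmu = \mmu^\dagger \my \my^\dagger \mmu$, leaving
\[
\mmu^\dagger \mpi_y \mmu = \begin{pmatrix} \md_{11}\md_{11}^\dagger & \md_{11}\md_{21}^\dagger \\ \md_{21}\md_{11}^\dagger & \md_{21}\md_{21}^\dagger \end{pmatrix}.
\]
Substituting the explicit sub-block structure of $\md_{11}, \md_{21}$ from~\eqref{eq:d-form} shows that the only off-diagonal mixing is confined to the ``middle'' coordinates indexed by the $\mc, \ms$ sub-blocks; all remaining rows and columns are already diagonal with entries in $\{0,1\}$.

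To read off the partition $\calS$, I would then permute the columns of $\mmu$ so that each middle column of $\mx\mv_1$ is placed adjacent to the corresponding middle column of $\mx_\perp\mv_2$, producing $2 \times 2$ blocks $S_j = \{i,i'\}$ with
\[
\left. \mmu^\dagger \mpi_x \mmu \right|_{S_j} = \begin{pmatrix} 1 & 0 \\ 0 & 0 \end{pmatrix}, \qquad \left. \mmu^\dagger \mpi_y \mmu \right|_{S_j} = \begin{pmatrix} c_j^2 & c_j s_j \\ c_j s_j & s_j^2 \end{pmatrix},
\]
for the $j$-th diagonal entries $c_j, s_j$ of $\mc, \ms$; both blocks are rank-$1$ projections of trace $1$. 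The parallelism claims are then immediate: $\mpi_x u_{i'} = 0$, while $\mpi_y u_i = c_j(c_j u_i + s_j u_{i'})$ and $\mpi_y u_{i'} = s_j(c_j u_i + s_j u_{i'})$ are both scalar multiples of the same vector. All remaining coordinates become singleton blocks on which $\mpi_x, \mpi_y$ act as scalars in $\{0,1\}$, completing the partition. The one delicate point is bookkeeping --- aligning the three-part splits of each row and column of $\md$ against the interleaving permutation so that simultaneous block-diagonality is preserved --- but once that alignment is fixed, the verification is routine.
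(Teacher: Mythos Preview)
Your proposal is correct and follows essentially the same route as the paper: both derive Jordan's lemma directly from the CS decomposition of $\begin{pmatrix}\mx & \mx_\perp\end{pmatrix}^\dagger\begin{pmatrix}\my & \my_\perp\end{pmatrix}$. The only real difference is the basis chosen for each two-dimensional block. The paper first passes to canonical principal vectors so that $\mx^\dagger\my=\mc$, and then for each $j$ takes an \emph{arbitrary} orthonormal basis of $\Span\{x_j,y_j\}$; it then has to argue separately that the resulting $2\times2$ block of $\mpi_x$ is the rank-one outer product of $(\alpha_j\;\beta_j)$. You instead pin down the specific pair $\{(\mx\mv_1)_j,(\mx_\perp\mv_2)_j\}\in\xset\times\xset_\perp$, which still spans $\Span\{x_j,y_j\}$ (since $y_j=c_j(\mx\mv_1)_j+s_j(\mx_\perp\mv_2)_j$) but diagonalizes $\mpi_x$ for free, letting you read off the blocks $\begin{pmatrix}1&0\\0&0\end{pmatrix}$ and $\begin{pmatrix}c_j^2&c_js_j\\c_js_j&s_j^2\end{pmatrix}$ directly from $\md_{11},\md_{21}$. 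So the two arguments are the same in substance; your concrete basis choice just makes the trace-one and parallelism checks a line shorter.
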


In other words, there is a choice of subspaces given by $\mmu$ (whose columns are partitioned by $\calS$) such that if $i, i' \in [n]$ where $i \in S_j$ and $i' \in S_{j'}$, $u_i^\dagger \mpi_x u_{i'} \neq 0$, $u_i^\dagger \mpi_y u_{i'} \neq 0$ iff $j = j'$. Moreover, the second part states each of the $2 \times 2$ blocks in $\mmu^\dagger \mpi_x \mmu$ and $\mmu^\dagger \mpi_y \mmu$ are in fact rank-$1$ and trace-$1$.

\begin{proof}[Proof of Lemma~\ref{lem:jordan}]
We first prove this in the special case when $\mpi_x$ and $\mpi_y$ are dimension-$\frac d 2$ projectors with ``no intersection,'' and briefly discuss how to extend this to the general case. \\

\textit{Special case.} Suppose $\mpi_x$ and $\mpi_y$ are dimension-$\frac d 2$ projectors, and let us further make one following restriction. Let $\mpi_x = \mx\mx^\dagger$ and $\mpi_y = \my\my^\dagger$ where $\mx$, $\my$ and their ``completions'' $\mx_\perp$, $\my_\perp$ (in the sense that the matrices \eqref{eq:complete_unitary} are unitary) are chosen such that \eqref{eq:cs_canonical} holds, as guaranteed by Theorem~\ref{thm:cs}; we make the restriction that for $\mc = \mx^\dagger \my$, all of the diagonal entries of $\mc$ are in $(0, 1)$. The previous section shows this means $\Span(\mx) \cap \Span(\my) = \emptyset$. We choose the basis inducing $\mmu$ as follows. Let the columns of $\mx$ be $\{x_j\}_{j \in [\frac d 2]} \subset \C^d$ and the columns of $\my$ be $\{y_j\}_{j \in [\frac d 2]} \subset \C^d$. For $j \in [\frac d 2]$, we let $\{u_{2j - 1}, u_{2j}\} \subset \C^d$ be an arbitrary basis of $\Span\{x_j, y_j\}$. We claim such $\mmu$ meets the requirements, where each $S_j = \{2j - 1, 2j\}$ in our partition. For all $j \in [\frac d 2]$, since $\mx^\dagger \my = \mc$,
\[\mproj_x y_j = \mx\mx^\dagger y_j = \mx \Par{\mc e_j} = c_jx_j. \]
So, $\mproj_x$ maps $\Span\{x_j, y_j\}$ to $\Span\{x_j\}$, and similarly $\mproj_y$ maps $\Span\{x_j, y_j\}$ to $\Span\{y_j\}$. This proves the second part of the lemma, namely that $\mproj_x$ and $\mproj_y$ act as rank-$1$ projectors on each block of the partition. For the first part (the block-diagonal structure), it suffices to show that for all $j \neq j' \in [\frac d 2]$, $x_j \perp \Span\{x_{j'}, y_{j'}\}$, since we already argued $\mproj_x$ maps $\Span\{x_j, y_j\}$ to $\Span\{x_j\}$. By orthonormality of $\mx$, $x_j \perp x_{j'}$, and since we used the canonical choice where $\mx^\dagger \my = \mc$, indeed $x_j \perp y_{j'}$ as well. To see that each block has trace $1$, write $x_j = \alpha_j u_{2j - 1} + \beta_j u_{2j}$. We have that $\mx\mx^\dagger u_{2j - 1} = \alpha_j x_j$ and $\mx\mx^\dagger u_{2j} = \beta_j x_j$; to see this, we already argued that $u_{2j - 1}$, $u_{2j}$ are orthogonal to all $x_i$ for $i \neq j$, since $x_j \perp x_i$ and $y_j \perp x_i$. Hence, the $2 \times 2$ block of $\mproj_x$ indexed by $S_j$ is the outer product of $\begin{pmatrix} \alpha_j & \beta_j \end{pmatrix}$ which clearly has trace $1$; a similar argument applies to $\mproj_y$.\\

\textit{General case.} More generally, we can ``pull out'' vectors corresponding to the $\id$ and $\mzero$ blocks of the decomposition in Theorem~\ref{thm:cs}, when the dimensions are unequal. Concretely, again let $\mproj_x = \mx \mx^\dagger$ and $\mproj_y = \my\my^\dagger$, such that $\mx^\dagger \my = \mc$ and $\mc$ has the form guaranteed by Theorem~\ref{thm:cs}. Further, assume $d_1 = \text{dim}(\Span(\mx)) \ge \text{dim}(\Span(\my)) = d_2$. Whenever there is a $1$ entry in $\mc$, this corresponds to a subset of size $1$ in the partition with the column of $\mmu$ set to the corresponding vector in $\textup{Span}(\mx) \cap \textup{Span}(\my)$. Whenever there is a $0$ entry we simply pull out the corresponding vector in $\textup{Span}(\mx) \setminus \textup{Span}(\my)$ into its own block in the partition. Finally, when $\textup{Span}(\mx) \oplus \textup{Span}(\my) \neq \C^d$ we find any orthonormal basis of $(\textup{Span}(\mx) \oplus \textup{Span}(\my))^c$ and add them it as columns of $\mmu$. It is an exercise to check the overall dimension of $\mmu$ after this process is $d$.

\end{proof}
\section{Proof of Carlini's formula}\label{app:carlini}

\begin{proof}[Proof of \cref{carlini}]
Recall $2I_n(t)$ is the $n^{\text{th}}$ Chebyshev coefficient for $\exp(tx)$. It will be slightly more convenient for us to reparameterize and bound $I_n(nt)$.
Without loss of generality $t \geq 0$, since $\exp(tx)$ and $\exp(-tx)$ have the same Chebyshev coefficients up to sign.
By \eqref{eq:cheb-integral}, for $n \geq 1$,
\begin{align*}
    I_{n}(nt) &= \frac{1}{2\pi\ii} \oint_{\abs{z} = 1} z^{-n-1} \exp(\tfrac{nt}{2}(z + z^{-1})) \dd z \\
    &= \frac{1}{2\pi \ii}\oint_{\abs{z} = 1} \exp\Big(-n\Big(\log(z)-\frac{t}{2}(z + \tfrac1z)\Big)\Big) \frac{\dd z} z.
\end{align*}
We choose a contour circling around the origin once; by Cauchy's theorem, this results in the same integral as the contour does not cross the origin. We parameterize it via $z = re^{\ii \theta}$ and construct $r$ as a function of $\theta$.
Consider the (rescaled) imaginary part of the expression in the exponential:
\begin{align*}
    \log(z) - \frac{t}{2}(z + \tfrac1z)
    &= \log(r) + \ii \theta - \frac{t}{2n}(r e^{\ii \theta} + \tfrac1 r e^{-\ii \theta}) \\
   \implies \Im\Big(\log(z) -\frac{t}{2}(z + \tfrac1z)\Big)
    &= \theta -\frac{t}{2}(r\sin(\theta) - \tfrac1r \sin(\theta)). \\
    \intertext{
        We wish to make the imaginary part constant; we set it equal to $\psi$, and solve for $r$:
    }
    \psi &= \theta - \frac{t}{2}(r\sin(\theta) - \tfrac1r \sin(\theta)) \\
    \implies \thalf(r - \tfrac1r) &= \frac{\theta - \psi}{t\sin(\theta)} \\
    \implies r &= \frac{\theta - \psi}{t\sin(\theta)} + \sqrt{\Par{\frac{\theta - \psi}{t\sin(\theta)}}^2 + 1}.
    \end{align*}
        Above, we use that $r = x + \sqrt{x^2 + 1}$ and $\frac1r = -x + \sqrt{x^2 + 1}$ is a solution to $\half(r - \frac1r) = x$.
        Now, we choose to take our contour for $\theta$ from $-\pi+\psi$ to $\pi+\psi$, and we take $\psi = 0$.
        So the contour is
    \begin{align*}
    z = re^{\ii\theta} = \Par{\frac{\theta}{t\sin(\theta)} + \sqrt{\Par{\frac{\theta}{t\sin(\theta)}}^2 + 1}}e^{\ii \theta} \text{ for } \theta \in [-\pi, \pi] \text{, where } \frac {\sin 0} 0 \defeq 1.
    \end{align*}
Since $r \geq 1$ always, the contour winds once counter-clockwise around zero and is valid for evaluating the integral.
        By design, on this contour $\Im(\log(z) - \frac{t}{2}(z - \tfrac1z))$ vanishes, and the real part is
    \begin{align*}
    F(\theta, t) &\defeq \Re\Big(\log(z) - \frac{t}{2}(z + \tfrac1z)\Big)
    = \log(r) - \frac{t}{2}(r + \tfrac1r)\cos(\theta) \\
    &= \log\Par{\frac{\theta}{t\sin(\theta)} + \sqrt{\Par{\frac{\theta}{t\sin(\theta)}}^2 + 1}} - t\cos(\theta)\sqrt{\Par{\frac{\theta}{t\sin(\theta)}}^2 + 1}.
\end{align*}
Now, we consider the original integral along this contour:
\begin{align*}
    I_{n}(nt) &= \frac{1}{2\pi \ii}\oint \exp\Big(-n\Big(\log(z) - \frac{t}{2}(z + \tfrac1z)\Big)\Big) \frac{\dd z} z\\
    &= \frac{1}{2\pi \ii}\int_{-\pi}^\pi \exp\Big(-n \cdot F(\theta, t)\Big) \Big(\frac{\dd r(\theta)}{\dd\theta} \cdot \frac 1 {r(\theta)}+ \ii\Big) \dd\theta \\
    &= \frac{1}{2\pi}\int_{-\pi}^{\pi} \exp\Big(-n \cdot F(\theta, t)\Big) \dd\theta \\
    &= \frac{1}{\pi}\int_{0}^{\pi} \exp\Big(-n \cdot F(\theta, t)\Big) \dd\theta.
\end{align*}
The last two lines use that as $F(\theta, t)$ and $r(\theta)$ are even functions in $\theta$, the piece of the integral corresponding to $\frac{\dd r(\theta)}{r(\theta)\dd\theta}$ vanishes.
From here, it becomes a matter of bounding $F(\theta, t)$.
We compute
\begin{align*}
    \frac{\partial}{\partial \theta} F(\theta, t) 
    &= \sqrt{(t\sin(\theta))^2 + \theta^2} + \frac{(1-\theta\cot(\theta))^2}{\sqrt{(t\sin(\theta))^2 + \theta^2}}. \\
    \intertext{
        First, we notice that $\frac{\partial}{\partial \theta} F \geq 0$, so $F$ is increasing.
        Second, we notice that, for $\theta \in [0, \pi/2]$,
    }
    \frac{\partial}{\partial \theta} F(\theta, t) 
    &\geq \theta\sqrt{(t\sin(\theta)/\theta)^2 + 1}
    \geq \theta\sqrt{t^2(4/\pi^2) + 1}.
\end{align*}
Integrating, we get that, for $\theta \in [0, \frac \pi 2]$, $F(\theta, t) - F(0, t) \geq \thalf\theta^2\sqrt{1+4t/\pi^2}$, and using that $F(0, t) = \log(t^{-1} + \sqrt{t^{-2} + 1}) - \sqrt{1 + t^2}$, we have the desired
\begin{align*}
    I_{n}(nt)
    &= \frac{1}{\pi} \int_0^\pi \exp(-n F(\theta, t))\dd\theta \\
    &\leq \frac{2}{\pi} \int_0^{\pi/2} \exp(-n F(\theta, t))\dd\theta \\
    &\leq \frac{2}{\pi} \int_0^{\pi/2} \exp(-\tfrac n 2\theta^2\sqrt{1 + 4t^2/\pi^2})\exp(-n F(0, t))\dd\theta \\
    &= \frac{\exp(-nF(0,t))}{\pi} \int_0^\pi \exp(-\thalf\theta^2\sqrt{n^2 + 4n^2t^2/\pi^2})\dd\theta \\
    &< \frac{\exp(-nF(0,t))}{\pi} \int_0^\infty \exp(-\thalf\theta^2\sqrt{n^2 + 4n^2t^2/\pi^2})\dd\theta \\
    &= \frac{\exp(-nF(0,t))}{\sqrt{2\pi\sqrt{n^2 + 4n^2t^2/\pi^2}}} \\
    &= \frac{\exp(\sqrt{n^2 + n^2t^2})}{(4\pi^2 n^2 + 16n^2 t^2)^{1/4}(t^{-1} + \sqrt{t^{-2} + 1})^n} \\
    &\leq \frac{\exp(\sqrt{n^2 + n^2t^2})(\sqrt{t^{-2} + 1} - t^{-1})^n}{2(n^2 + n^2 t^2)^{1/4}}. \tag*{\qedhere}
\end{align*}
\end{proof}
\section{Deferred proofs from Section~\ref{ssec:bounded_proof}}\label{app:more_csp}

\restatetinyellipse*
\begin{proof}
Recall from Theorem~\ref{thm:trefethen} that we can parameterize the boundary of $E_\rho$ as $\half(\rho + \rho^{-1}) \cos \theta + \half(\rho - \rho^{-1})\sin\theta$, for $\theta \in [0, 2\pi]$. Hence, to prove the stated inclusion it suffices to prove that $\half(\sigma + \sigma^{-1}) \ge (1+\delta)\half(\rho + \rho^{-1})$ and $\half(\sigma - \sigma^{-1}) \ge (1 + \delta)\half(\rho - \rho^{-1})$. By \cref{lem:bern-bounds},
    \begin{align*}
        \half(\sigma + \sigma^{-1}) - (1+\delta)\half(\rho + \rho^{-1})
        &= 1 + \frac{9(\alpha^2 + 2\alpha\sqrt{\delta} + \delta)}{2(1 + 3\alpha + 3\sqrt{\delta})} - (1+\delta)\Big(1 + \frac{\alpha^2}{2(1+\alpha)}\Big) \\
        &\geq 1 + \frac{2(\alpha^2 + 2\alpha\sqrt{\delta} + \delta)}{2(1 + \alpha)} - (1+\delta)\Big(1 + \frac{\alpha^2}{2(1+\alpha)}\Big) \\
        &= \frac{2(\alpha^2 + 2\alpha\sqrt{\delta} + \delta) - (1+\delta)\alpha^2 - 2\delta(1+\alpha)}{2(1 + \alpha)} \geq 0.
    \end{align*}
    Further, since $\sigma \geq (1+\delta)(1+\alpha)$ and $\sigma - \sigma^{-1}$ increases in $\sigma$,
    \begin{align*}
        \half(\sigma - \sigma^{-1}) - (1+\delta)\half(\rho - \rho^{-1})
        \geq \half((1+\delta)\rho - ((1+\delta)\rho)^{-1}) - (1+\delta)\half(\rho - \rho^{-1})
        \geq 0.
    \end{align*}
\end{proof}

\restatechebzbound*
\begin{proof}
The only points not lying on the boundary of a Bernstein ellipse are the interval $[-1, 1]$, where the Chebyshev polynomials are at most $1$. Otherwise, suppose $z \defeq x + \ii y$ lies on the boundary of the Bernstein ellipse $E_\rho$, for some $\rho > 1$. This implies that for some $\theta \in [-\pi, \pi]$,
\[\half (\rho + \rho^{-1}) \cos \theta = x,\; \half (\rho - \rho^{-1})\sin \theta = y,\]
which follows from the parameterization of the Bernstein ellipse in Theorem~\ref{thm:trefethen}. Let $s = \cos \theta$ and $t = \half(\rho + \rho^{-1})$. Noting that $\sqrt{t^2 - 1} = \half(\rho - \rho^{-1})$, we then have the system of equations
\[ts = x,\; \sqrt{1 - s^2}\sqrt{t^2 - 1} = y \implies t^4 - (1 + x^2 + y^2)t^2 + x^2 = 0.\]
Hence, solving a quadratic equation in $t^2 = \frac 1 4(\rho^2 + \rho^{-2}) + \half$ yields
\[\half(\rho^2 + \rho^{-2}) = x^2 + y^2 + \sqrt{(1 + x^2 + y^2)^2 - 4x^2} =: D, \]
and then $\rho = (D + \sqrt{D^2 - 1})^{\half}$. By definition, the Chebyshev polynomial satisfies 
\begin{equation}\label{eq:rhobound_suffices}T_n(z) \le \half (\rho^n + \rho^{-n}) \le \rho^n,\end{equation}
so it suffices to establish bounds on $\rho$.
Next, assume without loss of generality that $y \ge 0$, as Chebyshev polynomials are either odd or even and the stated conclusions are unsigned. We bound
\begin{align*}
D &= x^2 + y^2 + \sqrt{1 + x^4 + y^4 - 2x^2 + 2y^2 + 2x^2y^2} \\
&= x^2 + y^2 + \sqrt{(1 - x^2)^2 + 2y^2(1 + x^2) + y^4} \le x^2 + |1 - x^2| + O(y\sqrt{1 + x^2}).
\end{align*}
When $|x| \le 1$, then $D = 1 + O(y)$ and $\rho = 1 + O(\sqrt{y})$, establishing the conclusion via \eqref{eq:rhobound_suffices}. Otherwise, when $|x| > 1$, we have $D = 2x^2 - 1 + O(|x|y)$, and then
\begin{align*}
\rho &= \sqrt{2x^2 - 1 + \sqrt{4x^4 - 4x^2 + O(|x^3| y)}} \\
&= \sqrt{\Par{|x| + \sqrt{x^2 - 1}}^2 + O\Par{\sqrt{|x^3|y}}} = |x| + \sqrt{x^2 - 1} + O(\sqrt{|xy|}),
\end{align*}
again proving the desired claim via \eqref{eq:rhobound_suffices}.
\end{proof}

\restaterectbounds*
\begin{proof}
To see the first claim, let $z \in \R$. As previously discussed we have $\erf(s(\mu + z)), \erf(s(\mu - z)) \le 1$, giving the upper bound. For the lower bound, since $\erf$ is odd and increasing, $z \ge 0$ implies
\[\erf\Par{s\Par{\mu + z}} + \erf\Par{s\Par{\mu - z}} = \erf\Par{s\Par{\mu + z}} - \erf\Par{s\Par{-\mu + z}} \ge 0,\]
and a similar argument handles the case $z \le 0$. Next, for the second claim we first observe
\begin{equation}\label{eq:erf_close}
\begin{aligned}
|\erf(z) - \erf(x)| &= \frac{2}{\sqrt{\pi}} \Abs{\int_{t = x}^{t = x + \ii y} e^{-t^2} \dd t} \\
&= \frac{2e^{-x^2}}{\sqrt{\pi}} \Abs{\int_0^y e^{-2\ii xt + t^2} \dd t} \\
&\le \frac{2e^{-x^2}}{\sqrt{\pi}}\int_0^{|y|} e^{t^2} \dd t = e^{-x^2} |\erf(\ii y)|.
\end{aligned}
\end{equation}
The last line used the triangle inequality. Finally,
\begin{align*}
\left|\erf(z) - \erf(x)\right|&\le \half \Abs{\erf(s(\mu + z)) - \erf(s(\mu + x))} + \half \Abs{\erf(s(\mu - z)) - \erf(s(\mu - x))} \\
&\le \half\Par{e^{-s^2(\mu + x)^2} |\erf(\ii sy)| + e^{-s^2(\mu - x)^2} |\erf(-\ii sy)|},
\end{align*}
where we used \eqref{eq:erf_close}, and we conclude by noting $\mu + x, \mu - x \ge \mu - |x|$.
\end{proof}

\end{document}